\def\LongVersion{}
\def\LongVersionEnd{}
\long\def\ShortVersion#1\ShortVersionEnd{}
\def\ShortVersion{}
\def\ShortVersionEnd{}
\long\def\LongVersion#1\LongVersionEnd{}
\newcommand{\Ignore}[1]{\ignorespaces}
\renewcommand{\paragraph}[1]{\par\noindent\textbf{#1}}
\newtheorem{theorem}{Theorem}[section]
\newtheorem{lemma}[theorem]{Lemma}
\newtheorem{observation}[theorem]{Observation}
\newtheorem{corollary}[theorem]{Corollary}
\theoremstyle{definition}
\newtheorem*{definition*}{Definition}
\theoremstyle{plain}
\newenvironment{MathMaybe}[0]
{\begin{displaymath}\ignorespaces}
{\end{displaymath}}
\newcommand{\Integers}{\mathbb{Z}}
\newcommand{\Neighbors}{\mathit{N}}
\newcommand{\Faulty}[1]{\widehat{#1}}
\newcommand{\Able}[1]{\overline{#1}}
\renewcommand{\Pr}{\mathbb{P}}
\newcommand{\Level}{\mathcal{L}}
\newcommand{\Distance}{\operatorname{dist}}
\newcommand{\Diameter}{\operatorname{diam}}
\newcommand{\forward}{\phi}
\newcommand{\outwards}{\psi}
\newcommand{\Outwards}{\Psi}
\newcommand{\ProtectedV}{\mathit{V}_{\mathrm{p}}}
\newcommand{\ProtectedE}{\mathit{E}_{\mathrm{p}}}
\newcommand{\OutProtectedV}{\mathit{V}_{\mathrm{op}}}
\newcommand{\Sign}{\operatorname{sign}}
\newcommand{\Degree}{\operatorname{deg}}
\newcommand{\AlgAU}{\ensuremath{\mathtt{AlgAU}}}
\newcommand{\AlgLE}{\ensuremath{\mathtt{AlgLE}}}
\newcommand{\AlgMIS}{\ensuremath{\mathtt{AlgMIS}}}
\newcommand{\ModuleRestart}{\ensuremath{\mathtt{Restart}}}
\newcommand{\ModuleRandCount}{\ensuremath{\mathtt{RandCount}}}
\newcommand{\ModuleElect}{\ensuremath{\mathtt{Elect}}}
\newcommand{\ModuleDetectLE}{\ensuremath{\mathtt{DetectLE}}}
\newcommand{\ModuleRandPhase}{\ensuremath{\mathtt{RandPhase}}}
\newcommand{\ModuleCompete}{\ensuremath{\mathtt{Compete}}}
\newcommand{\ModuleDetectMIS}{\ensuremath{\mathtt{DetectMIS}}}
\newcommand{\varFlag}{\mathtt{flag}}
\newcommand{\varCandidate}{\mathtt{candidate}}
\newcommand{\varStep}{\mathtt{step}}
\newcommand{\InMIS}{\mathit{IN}}
\newcommand{\OutMIS}{\mathit{OUT}}
\newcommand{\Sect}{Sec.}
\newcommand{\Thm}{Thm.}
\newcommand{\Lem}{Lem.}
\newcommand{\Obs}{Obs.}
\begin{document}

\title{A Thin Self-Stabilizing Asynchronous Unison Algorithm with Applications
to Fault Tolerant Biological Networks}

\author{Yuval Emek\footnote{%
The work of Y.~Emek was supported by an Israeli Science Foundation grant
number 1016/17.}}
\affil{Technion --- Israel Institute of Technology. \\
\texttt{yemek@technion.ac.il}}
\author{Eyal Keren}
\affil{Technion --- Israel Institute of Technology. \\
\texttt{eyal.keren@campus.technion.ac.il}}

\date{}

\maketitle

\begin{abstract}
Introduced by Emek and Wattenhofer (PODC 2013), the \emph{stone age (SA)}
model provides an abstraction for network algorithms distributed over
randomized finite state machines.
This model, designed to resemble the dynamics of biological processes in
cellular networks, assumes a weak communication scheme that is built upon the
nodes' ability to sense their vicinity in an asynchronous manner.
Recent works demonstrate that the weak computation and communication
capabilities of the SA model suffice for efficient solutions to some core
tasks in distributed computing, but they do so under the (somewhat less
realistic) assumption of fault free computations.
In this paper, we initiate the study of \emph{self-stabilizing} SA algorithms
that are guaranteed to recover from any combination of transient faults.
Specifically, we develop efficient self-stabilizing SA algorithms for the
\emph{leader election} and \emph{maximal independent set} tasks in bounded
diameter graphs subject to an asynchronous scheduler.
These algorithms rely on a novel efficient self-stabilizing \emph{asynchronous
unison (AU)} algorithm that is ``thin'' in terms of its state space:
the number of states used by the AU algorithm is linear in the graph's
diameter bound, irrespective of the number of nodes.
\end{abstract}

\thispagestyle{empty}
\clearpage
\setcounter{page}{1}

\section{Introduction}
\label{section:introduction}
A fundamental dogma in distributed computing is that a distributed algorithm
cannot be deployed in a real system unless it can cope with faults.
When it comes to recovering from transient faults, the agreed upon
concept for fault tolerance is \emph{self-stabilization}.
Introduced in the seminal paper of Dijkstra \cite{Dijkstra1982}, an algorithm
is self-stabilizing if it is guaranteed to converge to a correct output from
any (possibly faulty) initial configuration
\cite{dolev_self-stabilization_2000, self_2019}.

Similarly to distributed man-made digital systems, self-stabilization is also
crucial to the survival of biological distributed systems.
Indeed, these systems typically lack a central component that can determine
the initial system configuration in a coordinated manner and more often than
not, they are exposed to environmental conditions that may lead to transient
faults.
On the other hand, biological distributed systems are often inferior to
man-made distributed systems in terms of the computation and communication
capabilities of their individual components (e.g., a single cell in an organ),
thus calling for a different model of distributed network algorithms.

Aiming to capture distributed processes in biological cellular networks, Emek
and Wattenhofer \cite{EmekW2013stone} introduced the \emph{stone age (SA)}
model that provides an abstraction for distributed algorithms in a network of
randomized finite state machines that communicate with their network neighbors
using a fixed message alphabet based on a weak communication scheme.
Since then, the power and limitations of distributed SA algorithms have been
studied in several papers.
In particular, it has been established that some of the most fundamental tasks
in the field of distributed graph algorithms can be solved efficiently in
this restricted model
\cite{EmekW2013stone, AfekEK2018selecting, AfekEK2018synergy, EmekU2020dynamic}.
However, for the most part, the existing literature on the SA model focuses on
fault free networks and little is known about self-stabilizing distributed
algorithms operating under this model.\footnote{%
In \cite{EmekU2020dynamic}, Emek and Uitto study the SA model in networks that
undergo dynamic topology changes, including node deletion that may be seen as
(permanent) crash failures.}

In the current paper, we strive to change this situation:
Focusing on graphs of \emph{bounded diameter}, we design efficient
self-stabilizing SA algorithms for leader election and maximal independent set
--- two of the most fundamental and extensively studied tasks in the theory
of distributed computing.
A key technical component in the algorithms we design is a self-stabilizing
\emph{synchronizer} for SA algorithms in graphs of bounded diameter.
This synchronizer relies on a novel anonymous size-uniform self-stabilizing
algorithm for the \emph{asynchronous unison} task
\cite{CouvreurFG1992asynchronous, AwerbuchKMPV1993time}
that operates with a number of states linear in the graphs diameter bound
$D$.
To the best of our knowledge, this is the first self-stabilizing asynchronous
unison algorithm for graphs of general topology whose state space is expressed
solely as a function of $D$, independently of the number $n$ of nodes.

The decision to focus on bounded diameter graphs is motivated by regarding
this graph family as a natural extension of complete graphs.
Indeed, environmental obstacles may disconnect (permanently or temporarily)
some links in an otherwise fully connected network, thus increasing its
diameter beyond one, but hopefully not to the extent of exceeding a certain
fixed upper bound.
Fully connected networks go hand in hand with broadcast communication that
prevail in the context of both man-made (e.g., contention resolution
in multiple access channels) and biological (e.g., quorum sensing in bacterial
populations) distributed processes.
As the SA model offers a (weak form) of broadcast communication, it makes
sense to investigate its power and limitations in such networks and their
natural extensions.

\subsection{Computational Model}
\label{section:intro:model}
The computational model used in this paper is a simplified version of the
\emph{stone age (SA)} model of Emek and Wattenhofer \cite{EmekW2013stone}.
This model captures anonymous size-uniform distributed algorithms with bounded
memory nodes that exchange information by means of an asynchronous variant of
the \emph{set-broadcast} communication scheme (cf.\
\cite{HellaJKLLLSV2015weak}) with no \emph{sender collision detection} (cf.\
\cite{AfekABCHK2011beeping}).
Formally, given a distributed task $\mathcal{T}$ defined over a set
$\mathcal{O}$ of output values, an algorithm $\Pi$ for $\mathcal{T}$ is
encoded by the $4$-tuple
$\Pi
=
\left\langle
Q, Q_\mathcal{O}, \omega, \delta
\right\rangle$,
where
\begin{itemize}

\item
$Q$ is a set of states;

\item
$Q_\mathcal{O} \subseteq Q$
is a set of output states;

\item
$\omega : Q_\mathcal{O} \to \mathcal{O}$
is a surjective function that maps each output state to an output value;
and

\item
$\delta : Q \times \left\lbrace 0, 1 \right\rbrace^{Q} \to 2^{Q}$
is a state transition function (to be explained soon).\footnote{%
The notation $2^{Q}$ denotes the power set of $Q$.}

\end{itemize}
We would eventually require that the \emph{state space} of $\Pi$, namely, the
size $|Q|$ of the state set, is fixed, and in particular independent of the
graph on which $\Pi$ runs, as defined in \cite{EmekW2013stone}.
To facilitate the discussion though, let us relax this requirement for the
time being.

Consider a finite connected undirected graph
$G = (V, E)$.
A \emph{configuration} of $G$ is a function
$\mathcal{C} : V \to Q$
that determines the state
$\mathcal{C}(v) \in Q$
of node $v$ for each
$v \in V$.
We say that a node
$v \in V$
\emph{senses} state
$q \in Q$
under $\mathcal{C}$ if there exists some (at least one) node
$u \in \Neighbors^{+}(v)$
such that
$\mathcal{C}(u) = q$.\footnote{%
Throughout this paper, we denote the neighborhood of a node $v$ in $G$ by
$\Neighbors(v) = \{ u \in V \mid (u, v) \in E \}$
and the inclusive neighborhood of $v$ in $G$ by
$\Neighbors^{+}(v) = \Neighbors(v) \cup \{ v \}$.}
The \emph{signal} of $v$ under $\mathcal{C}$ is the binary
vector
$\mathcal{S}_{v}^{\mathcal{C}} \in \{ 0, 1 \}^{Q}$
defined so that
$\mathcal{S}_{v}^{\mathcal{C}}(q) = 1$
if and only if $v$ senses state
$q \in Q$;
in other words, the signal of node $v$ allows $v$ to determine for each state
$q \in Q$
whether $q$ appears in its (inclusive) neighborhood, but it does not allow $v$
to count the number of such appearances, nor does it allow $v$ to identify the
neighbors residing in state $q$.

The execution of $\Pi$ progresses in discrete \emph{steps}, where step
$t \in \Integers_{\geq 0}$
spans the time interval
$[t, t + 1)$.
Let
$\mathcal{C}^{t} : V \to Q$
be the configuration of $G$ at time $t$ and let
$\mathcal{S}_{v}^{t} = \mathcal{S}_{v}^{\mathcal{C}^{t}}$
denote the signal of node
$v \in V$
under $\mathcal{C}^{t}$.
We consider an \emph{asynchronous} schedule defined by means of a sequence of
node \emph{activations} (cf.\ a distributed fair daemon
\cite{DuboisT2011taxonomy}).
Formally, a malicious adversary, who knows $\Pi$ but is oblivious to the
nodes' coin tosses, determines the initial configuration $\mathcal{C}^{0}$ and
a subset
$A^{t} \subseteq V$
of nodes to be activated at time $t$ for each
$t \in \Integers_{\geq 0}$.
If node
$v \in V$
is not activated at time $t$, then
$\mathcal{C}^{t + 1}(v) = \mathcal{C}^{t}(v)$.
Otherwise
($v \in A^{t}$),
the state of $v$ is updated in step $t$ from
$\mathcal{C}^{t}(v)$
to
$\mathcal{C}^{t + 1}(v)$
picked uniformly at random from
$\delta \left( \mathcal{C}^{t}(v), \mathcal{S}_{v}^{t} \right)$.
We emphasize that all nodes
$v \in V$
obey the same state transition function $\delta$.

Fix some schedule
$\{ A^{t} \}_{t \geq 0}$.
The adversary is required to prevent ``node starvation'' in the sense that
each node must be activated infinitely often.
Given a time
$t \in \Integers_{\geq 0}$,
let $\varrho(t)$ be the earliest time satisfying the property that for
every node
$v \in V$,
there exists a time
$t \leq t' < \varrho(t)$
such that
$v \in A_{t'}$.
This allows us to introduce the \emph{round operator} $\varrho^{i}(t)$ defined
by setting
$\varrho^{0}(t) = t$
and
$\varrho^{i}(t) = \varrho \left( \varrho^{i - 1}(t) \right)$
for
$i = 1, 2, \dots$
Denote
$R(i) = \varrho^{i}(0)$
for
$i = 0, 1, \dots$,
and observe that if
$R(i) \leq t < R(i + 1)$,
then
$R(i + 1) \leq \varrho(t) < R(i + 2)$.

A configuration
$\mathcal{C} : V \to Q$
is said to be an \emph{output configuration} if
$\mathcal{C}(v) \in Q_{\mathcal{O}}$
for every
$v \in V$,
in which case, we regard $\omega(\mathcal{C}(v))$ as the output of node $v$
under $\mathcal{C}$ and refer to
$\omega \circ \mathcal{C}$
as the \emph{output vector} of $\mathcal{C}$.
We say that the execution of $\Pi$ on $G$ has \emph{stabilized} by time
$t \in \Integers_{\geq 0}$
if
(1)
$\mathcal{C}^{t'}$ is an output configuration for every
$t' \geq t$;
and
(2)
the output vector sequence
$\{ \omega \circ \mathcal{C}^{t'} \}_{t' \geq t}$
satisfies the requirements of the distributed task $\mathcal{T}$ for which
$\Pi$ is defined (the requirements of the distributed tasks studied in the
current paper are presented in \Sect{}~\ref{section:intro:tasks}).

The algorithm is \emph{self-stabilizing} if for any choice of initial
configuration $\mathcal{C}^{0}$ and schedule
$\{ A^{t} \}_{t \geq 0}$,
the probability that $\Pi$ has stabilized by time $R(i)$ goes to $1$ as
$i \to \infty$.
We refer to the smallest $i$ for which the execution has stabilized by time
$R(i)$ as the \emph{stabilization time} of this execution.
The stabilization time of a randomized (self-stabilizing) algorithm on a given
graph is a random variable and one typically aims towards bounding it in
expectation and whp.\footnote{%
In the context of a randomized algorithm running on an $n$-node graph, we say
that event $A$ occurs with high probability, abbreviated whp, if
$\Pr (A) \geq 1 - n^{-c}$
for an arbitrarily large constant $c$.}

The schedule
$\{ A^{t} \}_{t \geq 0}$
is said to be \emph{synchronous} if
$A^{t} = V$
for all
$t \in \Integers_{\geq 0}$
which means that
$R(i) = i$
for
$i = 0, 1, \dots$
A (self-stabilizing) algorithm whose correctness and stabilization time
guarantees hold under the assumption of a synchronous schedule is called a
\emph{synchronous algorithm}.
We sometime emphasize that an algorithm does not rely on this assumption by
referring to it as an \emph{asynchronous algorithm}.

\subsection{Distributed Tasks}
\label{section:intro:tasks}
In this paper, we focus on three classic (and extensively studied) distributed
tasks, defined over a finite connected undirected graph
$G = (V, E)$.
In the first task, called \emph{asynchronous unison (AU)}
\cite{CouvreurFG1992asynchronous} (a.k.a.\ \emph{distributed pulse}
\cite{AwerbuchKMPV1993time}), each node in $V$ outputs a \emph{clock value}
taken from an (additive) cyclic group $K$.
The task is then defined by the following two conditions:
The \emph{safety} condition requires that if two neighboring nodes output
clock values
$\kappa \in K$
and
$\kappa' \in K$,
then
$\kappa' \in \{ \kappa - 1, \kappa, \kappa + 1 \}$,
where the $+1$ and $-1$ operations are with respect to $K$.
The \emph{liveness} condition requires that for every (post stabilization)
time $t$ and for every
$i \in \Integers_{> 0}$,
each node updates its clock value at least $i$ times during the time interval
$[t, \varrho^{\Diameter(G) + i}(t))$,
where $\Diameter(G)$ denotes the diameter of $G$;
these updates are performed by and only by applying the $+1$ operation of
$K$.

The other two distributed tasks considered in this paper are \emph{leader
election (LE)} and \emph{maximal independent set (MIS)}.
Both tasks are defined over a binary set
$\mathcal{O} = \{ 0, 1 \}$
of output values and are static in the sense that once the algorithm has
stabilized, its output vector remains fixed.
In LE, it is required that exactly one node in $V$ outputs $1$;
in MIS, it is required that the set
$U \subseteq V$
of nodes that output $1$ is independent, i.e.,
$(U \times U) \cap E = \emptyset$,
whereas any proper superset of $U$ is not independent.
We note that LE and MIS correspond to global and local \emph{mutual
exclusion}, respectively, and that the two tasks coincide if $G$ is the
complete graph.

\subsection{Contribution}
\label{section:intro:contribution}
In what follows, we refer to the class of graphs whose diameter is up-bounded
by $D$ as \emph{$D$-bounded diameter}.
Our first result comes in the form of developing a new self-stabilizing AU
algorithm.

\begin{theorem} \label{theorem:AU}
The class of $D$-bounded diameter graphs admits a deterministic
self-stabilizing AU algorithm that operates with state space
$O (D)$
and stabilizes in time
$O (D^{3})$.
\end{theorem}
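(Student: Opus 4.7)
The plan is to build the algorithm around a classical Couvreur--Francez--Gouda style increment rule, augmented with a self-stabilizing detection-and-reset layer. Clocks take values in a cyclic group $K$ of size $\Theta(D)$, chosen large enough so that the natural operating slack of an asynchronous unison (on the order of $\Diameter(G)$) fits strictly inside $K$. This keeps the state space at $O(D)$ while letting each node classify a neighbor's clock unambiguously as \emph{close} (within cyclic distance one) or \emph{far} (outside a small safe window).

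First I would specify the transition function in two layers. In the normal layer, a node at clock value $\kappa$ advances to $\kappa+1$ only when it does not sense a neighbor at clock $\kappa-1$; this rule is directly expressible in the signal-based set-broadcast model and, by itself, enforces the AU safety condition from any safely-initialized configuration. On top of this, I would install a detection-and-reset layer: a node that senses any neighbor outside its local safe window (or that senses a neighbor already in the reset layer) transitions into a dedicated alarm/reset state, distinct from the normal clock states. A node in the reset layer suspends its increments, walks through a short canonical sequence of reset states, and re-enters normal mode at a fixed clock value only when it senses no reset-layer neighbor in its closed neighborhood. Because this layer adds only a constant multiplicative overhead, the total state space remains $O(D)$.

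Next, I would prove two main claims:
\emph{(i) Safety and liveness once the execution stabilizes.} Safety follows directly from the increment rule once the reset layer has globally cleared; liveness follows from the $\Theta(D)$-sized clock together with the fact that, under a fair schedule, no node can be permanently blocked, since the clock gap between neighbors is bounded by one and every node is activated infinitely often.
\emph{(ii) Stabilization within $O(D^{3})$ rounds.} An alarm propagates across the graph in $O(D)$ rounds and a clean reset wave completes in another $O(D)$ rounds, giving $O(D^{2})$ rounds for one successful reset-and-refill cycle. The additional factor of $D$ accommodates the possibility that stale clock values from the initial configuration trigger further alarms during a reset; I would charge each such re-raising to a region that has not yet reached a globally safe configuration and argue, via a monotone potential (e.g., the maximum cyclic span of clock values sensed along any short walk), that only $O(D)$ such interleaved cycles are possible.

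The main obstacle is claim (ii): ruling out the adversarial scheduler's ability to keep triggering fresh alarms forever in an attempt to block global progress. I would attack it by first isolating a clean execution lemma --- along any maximal alarm-free time interval, the Couvreur--Francez--Gouda-style rule restores a monotone invariant on clock positions --- and then bounding the number of alarm regenerations by a counting argument over the $O(D)$ stale clock values that can seed new alarms along any BFS path of length at most $D$. A secondary technical point is to choose the constant in $|K| = \Theta(D)$ large enough to rule out spurious ``far'' classifications between two clocks that are merely at opposite ends of a legitimate unison slack, which is what dictates the exact constants in the reset window. Packaging the reset sub-mechanism cleanly also pays off later, since it is the backbone of the $\ModuleRestart$ building block used by the LE and MIS algorithms of the paper.
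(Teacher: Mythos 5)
Your proposal follows the classical reset-based template, but this is precisely the design that the paper identifies as problematic and for which it exhibits an explicit live-lock (Appendix~\ref{appendix:failed-attempt}). The gap sits exactly where you flag it, in claim (ii): in an anonymous, size-uniform network with only $O(D)$ states and set-broadcast signals, a node cannot distinguish a fresh alarm from a stale one, and the reset wave itself regenerates the conditions that trigger new alarms. Concretely, a node that completes its canonical reset sequence re-enters normal mode at a fixed clock value; a neighbor that is still mid-reset (or that exited the reset layer at a different moment and has already resumed incrementing) can then be classified as ``far'' again, spawning a fresh alarm. The paper's counterexample is a diameter-$2$ wheel in which, under a fair schedule, the configuration after $O(D)$ steps is a rotation of the initial one, so the alarm/reset pattern circulates forever. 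Your proposed potential --- the maximum cyclic span of clock values sensed along short walks --- is not monotone under these dynamics, and the counting argument over ``$O(D)$ stale clock values'' fails because the supply of alarm seeds is replenished by the re-entries themselves rather than being bounded by the initial configuration.

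The paper's actual construction abandons the reset layer entirely. Each output state is paired with a ``faulty'' detour state; a node detecting a discrepancy steps onto the detour and may only step off by moving one level \emph{inwards}, so discrepancies are resolved by a gradual gap-closing process rather than by a wave that must outrun freshly generated alarms. The analysis then establishes a chain of invariants that, once attained, are never violated again (out-protected, then justified, then protected, then good), each reached within $O(D^{3})$ rounds. To salvage a reset-based design you would need a mechanism that breaks the rotational symmetry exploited by the live-lock, which is not available with $O(D)$ anonymous states under an asynchronous scheduler; it is telling that the paper does use a reset module (\ModuleRestart{}) for LE and MIS, but only under a synchronous schedule, where the scheduler cannot maintain that symmetry.
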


To the best of our knowledge, the algorithm promised in
\Thm{}~\ref{theorem:AU} is the first self-stabilizing AU algorithm for
general graphs
$G = (V, E)$
with state space linear in the diameter bound $D$, irrespective of any other
graph parameter including
$n = |V|$.
This remains true even when considering algorithms designed to work under much
stronger computational models (see
\Sect{}~\ref{section:related-work-discussion} for further discussion).
Moreover, to the best of our knowledge, this is also the first anonymous
size-uniform self-stabilizing algorithm for the AU task whose stabilization
time is expressed solely as a (polynomial) function of $D$, again, irrespective
of $n$.
Expressing the guarantees of AU algorithms with respect to $D$ is advocated
given the central role that the diameter of $G$ plays in the liveness
condition of the AU task.

There is a well known reduction from the problem of network synchronization
(a.k.a.\ \emph{synchronizer} \cite{Awerbuch1985complexity}) to AU under
computational models that support unicast communication (see, e.g.,
\cite{AwerbuchKMPV1993time}).
A similar reduction can be established also for our weaker computational
model, yielding the following corollary.

\begin{corollary} \label{corollary:synchronizer}
Suppose that a distributed task $\mathcal{T}$ admits a synchronous
self-stabilizing algorithm that on $D$-bounded diameter $n$-node graphs,
operates with state space $g(D)$ and stabilizes in time at most
$f(n, D)$
in expectation and whp.
Then, $\mathcal{T}$ admits an asynchronous self-stabilizing algorithm that on
$D$-bounded diameter $n$-node graphs, operates with state space
$O (D \cdot (g(D))^{2})$
and stabilizes in time at most
$f(n, D) + O (D^{3})$
in expectation and whp.
\end{corollary}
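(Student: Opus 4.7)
The plan is to compose the self-stabilizing AU algorithm $\AlgAU$ from Theorem~\ref{theorem:AU} with the hypothesized synchronous algorithm $\Pi_{S}$ for $\mathcal{T}$ (denote its state set $Q_{S}$ and its transition function $\delta_{S}$), in the style of a classical Awerbuch-type synchronizer adapted to the SA model. Each node $v$ carries a state of the form $(\sigma_{v}, q_{v}, q'_{v})$, where $\sigma_{v}$ is its $\AlgAU$ state (which in particular determines its clock value $\kappa_{v} \in K$ with $|K| = O(D)$), $q_{v} \in Q_{S}$ is the simulated state of $v$ at the end of synchronous round $\kappa_{v}$, and $q'_{v} \in Q_{S}$ is the simulated state at the end of round $\kappa_{v} - 1$. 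This immediately yields the state space bound $O(D) \cdot (g(D))^{2} = O(D \cdot g(D)^{2})$, with the output function inherited from $\Pi_{S}$ via the $q_{v}$ coordinate.

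The combined transition: when $v$ is activated, first apply $\AlgAU$'s transition to $\sigma_{v}$. If this causes a tick $\kappa_{v} \mapsto \kappa_{v} + 1$, then from $v$'s sensed signal synthesize a ``round-$\kappa_{v}$ signal'' $\mathcal{S}^{\mathrm{sync}}_{v} \in \{0,1\}^{Q_{S}}$ by declaring $q \in Q_{S}$ present iff either some sensed neighbor has clock $\kappa_{v}$ and current simulated state $q$, or some sensed neighbor has clock $\kappa_{v} + 1$ and previous simulated state $q$; then shift $q'_{v} \leftarrow q_{v}$ and pick the new $q_{v}$ uniformly at random from $\delta_{S}(q_{v}, \mathcal{S}^{\mathrm{sync}}_{v})$. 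If the step produced no tick, $(q_{v}, q'_{v})$ is left unchanged. All three coordinates are visible to neighbors through the SA sensing mechanism, so $\mathcal{S}^{\mathrm{sync}}_{v}$ is genuinely computable from $v$'s signal.

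For correctness I first appeal to Theorem~\ref{theorem:AU}: within $O(D^{3})$ fair rounds the $\AlgAU$ component stabilizes, after which AU safety and liveness hold. I then plan to establish the \emph{emulation invariant}: after stabilization of $\AlgAU$, for every node $v$ the pair $(q_{v}, q'_{v})$ equals $v$'s states at the end of rounds $\kappa_{v}$ and $\kappa_{v} - 1$ of the synchronous execution of $\Pi_{S}$ induced by the configuration at the moment $\AlgAU$ stabilizes (treated as an arbitrary initial configuration, which $\Pi_{S}$ tolerates by self-stabilization). The key geometric fact powering the inductive step is that whenever $v$ ticks from $\kappa_{v}$ to $\kappa_{v} + 1$, AU safety forces every neighbor $u$ to already satisfy $\kappa_{u} \in \{\kappa_{v}, \kappa_{v} + 1\}$, as otherwise the tick would violate safety immediately; in the former case $q_{u}$ already records $u$'s round-$\kappa_{v}$ state, and in the latter $q'_{u}$ does. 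Hence $\mathcal{S}^{\mathrm{sync}}_{v}$ is exactly the neighborhood signal $v$ would observe at the end of round $\kappa_{v}$ under a truly synchronous schedule, and $\delta_{S}$ is fed the correct input.

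For the stabilization time, AU liveness guarantees that within $\varrho^{D + i}(t)$ further fair rounds every node ticks at least $i$ times and therefore completes at least $i$ simulated synchronous rounds; taking $i = f(n, D)$ shows that $\Pi_{S}$ reaches stabilization within $f(n, D) + D$ fair rounds after $\AlgAU$ has stabilized, for a grand total of $f(n, D) + O(D^{3})$ in expectation and whp. The main obstacle in fleshing the proof out will be the inductive emulation argument: one has to carefully track $q_{u}$ and $q'_{u}$ across the arbitrarily many fair rounds that may separate $v$'s successive ticks, including activations of $v$ and its neighbors that produce no tick, and verify that the one-step-behind buffer $q'_{u}$ always carries exactly the information $v$ will require when it next ticks. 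The state-space accounting and the stabilization-time tally are then routine.
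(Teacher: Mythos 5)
Your construction is exactly the paper's: the product state $Q \times Q \times T$ with a one-step-behind buffer, the tick-triggered simulated signal that reads the current $\Pi$-state of same-clock neighbors and the previous $\Pi$-state of neighbors one tick ahead, and the $O(D^3) + f(n,D) + D$ time accounting via AU liveness. The paper presents this transformer without spelling out the emulation invariant you sketch, so your proposal is correct and, if anything, slightly more explicit about the correctness argument than the paper itself.
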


Next, we turn our attention to LE and MIS and develop efficient
self-stabilizing asynchronous algorithms for these tasks by combining
Corollary~\ref{corollary:synchronizer} with the following two theorems.

\begin{theorem} \label{theorem:LE}
There exists a synchronous self-stabilizing LE algorithm that on $D$-bounded
diameter $n$-node graphs, operates with state space
$O (D)$
and stabilizes in time
$O (D \cdot \log n)$
in expectation and whp.
\end{theorem}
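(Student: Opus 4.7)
The plan is to construct \AlgLE{} as the composition of four modules: \ModuleRandCount{}, \ModuleElect{}, \ModuleDetectLE{}, and \ModuleRestart{}. The overall structure runs in super-phases of length $\Theta(D)$ paced by a modular counter that takes up most of the $O(D)$ state-space budget, leaving a constant number of auxiliary states for a candidacy flag, a broadcast bit, and a restart indicator.

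The underlying (non-self-stabilizing) election engine, implemented by \ModuleRandCount{} together with \ModuleElect{}, works as follows. Every candidate samples one fresh random bit per super-phase. The bit is propagated by a $D$-deep diffusion wave indexed by the modular counter: a node whose counter equals $\ell$ relays in the next step the dominating bit it sensed from the counter-value $\ell - 1$ neighborhood, so that within $D$ steps the dominating bit in the graph has reached every candidate. A candidate that senses any bit dominating its own withdraws. Since each super-phase is effectively a one-bit tiebreak, roughly half of the surviving candidates are eliminated in expectation, so after $c \log n$ super-phases exactly one candidate remains whp, yielding the claimed $O(D \log n)$ stabilization time.

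For self-stabilization I would layer \ModuleDetectLE{} and \ModuleRestart{} on top. At designated steps in each super-phase (identifiable from the modular counter), every node inspects its signal and checks local consistency: neighbor counter values must differ from its own by at most one, the broadcast bits sensed in the neighborhood must be compatible with the relay rule, and the pattern of sensed candidacy flags must be consistent with the engine's invariants (e.g., no two surviving candidates have contradictory outputs). Any violation raises the restart flag. Restart flags spread across the graph in $O(D)$ steps, forcing every node into a canonical clean configuration. Thus, regardless of the initial configuration, within $O(D)$ steps the system is in a clean state from which the engine elects a unique leader in an additional $O(D \log n)$ steps whp.

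The main obstacle is keeping the state space at $O(D)$ while also making the detection rules strong enough to catch every ``bad'' initial configuration yet weak enough never to fire during a clean execution. Since only $O(1)$ bits are available beyond the $\Theta(D)$-valued counter, each super-phase must boil down to a single-bit comparison, and the local detection predicate must be expressible in terms of the counter, the broadcast bit, and the candidacy flag alone. Proving completeness of detection (every adversarial configuration either is already clean or triggers a restart within $O(D)$ steps) together with soundness (clean executions never trigger a restart) will be handled by a case analysis over the possible fault patterns, combined with a fixed-point argument showing that the counter alone re-synchronizes modulo $\Theta(D)$ within $O(D)$ steps under the synchronous schedule.
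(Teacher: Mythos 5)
Your election engine and overall architecture (epochs of length $\Theta(D)$ realizing a graph-wide OR, a one-bit random tiebreak per epoch among candidates, and a detect-then-restart layer on top) match the paper's \AlgLE{}. However, there are two concrete gaps. First, your fault detection is a \emph{deterministic local} consistency predicate over counters, broadcast bits, and candidacy flags. This cannot work for LE: in an anonymous set-broadcast network, a configuration with zero leaders, or with two leaders at distance greater than one, is locally indistinguishable at every node from a legal configuration with a single distant leader, so no local predicate can be simultaneously sound and complete. The paper's \ModuleDetectLE{} resolves this with a \emph{randomized global} mechanism: in each verification epoch every leader draws a temporary identifier uniformly from $[k]$ and floods it for $D$ rounds; a node that encounters two distinct identifiers, or none by the end of the epoch, invokes \ModuleRestart{}. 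This catches zero leaders deterministically and multiple leaders with constant probability per epoch, which suffices for the $O(D \cdot \log n)$ bound.

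Second, you never explain how nodes decide that $c \log n$ super-phases have elapsed so that the surviving candidate can declare itself leader; with $O(D)$ states and no knowledge of $n$, this cannot be counted. The paper's \ModuleRandCount{} supplies the missing mechanism: each node holds a flag that it resets to $0$ with constant probability $p_0$ at the start of each epoch, and the computation stage halts when the graph-wide OR of the flags (computed over the $D$ rounds of the epoch) becomes $0$. The stage length is then the maximum of $n$ geometric random variables, which is $O(\log n)$ epochs in expectation and whp and at least $c_0 \log n$ epochs whp for an arbitrarily large constant $c_0$. Relatedly, your ``half the candidates are eliminated in expectation'' claim does not by itself yield the whp guarantee; the paper argues pairwise that two candidates both survive $c_0 \log n$ epochs only if their coins agree in every one of them, an event of probability $n^{-c_0}$, and concludes by a union bound.
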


\begin{theorem} \label{theorem:MIS}
There exists a synchronous self-stabilizing MIS algorithm that on $D$-bounded
diameter $n$-node graphs, operates with state space
$O (D)$
and stabilizes in time
$O ((D + \log n) \log n)$
in expectation and whp.
\end{theorem}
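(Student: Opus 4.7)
The plan is to interleave a randomized competition module $\ModuleCompete$ with a global MIS validator $\ModuleDetectMIS$ whose verdict, once a violation is detected anywhere in the graph, triggers a coordinated restart. To respect the $O(D)$ state budget, I partition a node's state into a \emph{phase counter} cycling through $\Theta(D)$ values and a constant-size control word that stores the tentative output in $\{\InMIS,\OutMIS,\text{undecided}\}$, the current coin-toss outcome, and a ``bad'' bit carrying the restart signal. The phase counter doubles as a local $O(D)$-period clock, used both to pace the detection signal across the diameter and to mark the synchronous restart tick.

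The validator $\ModuleDetectMIS$ has each node test the two local MIS axioms from its signal vector (an $\InMIS$ node must sense no $\InMIS$ neighbor; an $\OutMIS$ node must sense at least one). Any violation raises the node's bad bit, which is OR-propagated through set-broadcast and therefore reaches every node within $O(D)$ rounds; at the next wrap of the phase counter, every node that saw the signal resets its output to undecided, erasing the offending configuration. Meanwhile $\ModuleCompete$ implements one Luby-style round on three consecutive counter ticks: an undecided node first flips to tentative candidate with a fixed constant probability, then a candidate that senses no candidate neighbor promotes to $\InMIS$, and finally any undecided node that senses an $\InMIS$ neighbor settles to $\OutMIS$. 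Scheduling the three sub-steps on distinct counter values keeps the signal semantics consistent across neighbors. A Luby-style shrinkage argument, adapted to the weak signal model, shows that after $O(\log n)$ such rounds the competition terminates with probability $\Omega(1)$; combined with the $\Theta(D)$-round detection window this makes each attempt last $O(D+\log n)$ rounds and succeed with probability $\Omega(1)$, so $O(\log n)$ independent attempts suffice to attain a valid MIS both in expectation and whp, yielding the advertised $O((D+\log n)\log n)$ bound.

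The hard part will be disentangling correctness from the adversarial initial configuration. A corrupted phase counter or a spurious bad bit must neither trap the detector in perpetual firing nor cut short a legitimately converging attempt; I would address this by anchoring both the reset action and the detector's verdict to one canonical value of the phase counter, and by arguing that within $O(D)$ rounds the counters are phase-coherent enough that the detector operates on a configuration stable throughout the preceding sub-phase. A secondary subtlety is that $\ModuleCompete$ has no node identifiers or degree information and only the presence/absence signals permitted by the model; I would use a uniform constant candidacy probability and establish the Luby-style contraction via the observation that every undecided node either becomes a candidate itself or senses a candidate neighbor with constant probability, which suffices to halve the undecided set per round in expectation.
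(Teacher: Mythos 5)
Your overall architecture (compete / detect / restart, with phases paced by an $O(D)$-state counter) matches the paper's, but two of your key steps do not survive scrutiny under this model.

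First, your validator's test ``an $\InMIS$ node must sense no $\InMIS$ neighbor'' is vacuous: signals are computed over the \emph{inclusive} neighborhood $\Neighbors^{+}(v)$, so an $\InMIS$ node always senses the $\InMIS$ state (from itself) and can never deterministically distinguish whether a neighbor is also in $\InMIS$. This is precisely why the paper's \ModuleDetectMIS{} has each $\InMIS$ node draw a fresh random temporary identifier from $[k]$ every round and flag a fault upon sensing an identifier different from its own, which catches two adjacent $\InMIS$ nodes only with constant probability per round (hence the extra $O(\log n)$ factor to catch all such faults whp). Your deterministic detection of this violation type cannot be implemented.

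Second, your shrinkage argument is broken. With a \emph{fixed constant} candidacy probability $p$, a candidate joins $\InMIS$ only if no neighbor is a candidate, which for a degree-$d$ node happens with probability $p(1-p)^{d}$ --- exponentially small in $d$. On a clique (diameter $1$!) the probability that \emph{any} node is decided in one of your three-tick rounds is $O(n\,p(1-p)^{n-1})$, so no constant-factor contraction occurs. Your stated justification --- ``every undecided node either becomes a candidate itself or senses a candidate neighbor with constant probability'' --- conflates being/sensing a candidate with being \emph{decided}; only joining $\InMIS$ or $\OutMIS$ shrinks the undecided set. Since you have no degree information, you cannot fall back on Luby's $1/(2d)$ probabilities either. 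The paper circumvents this by running $\tau=\Theta(\log n)$ two-round coin-toss trials within a single phase, which implicitly assigns each undecided node a uniform rank $Z(v)\in\{0,\dots,2^{\tau}-1\}$ and admits local maxima into $\InMIS$; the contraction is then on undecided \emph{edges}, via the standard argument of M\'etivier et al. Relatedly, sustaining $\Omega(\log n)$ trials per phase with only $O(D)$ states requires the paper's randomized phase-length mechanism (the maximum of $n$ independent geometric variables), which your $\Theta(D)$-period deterministic counter does not provide when $\log n \gg D$.
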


We emphasize that when the diameter bound $D$ is regarded as a fixed
parameter, the state space of our algorithms reduces to a constant, as
required in the SA model \cite{EmekW2013stone}.
In this case, the asymptotic stabilization time bounds in \Thm{}
\ref{theorem:AU}, \ref{theorem:LE}, and \ref{theorem:MIS}
should be interpreted as
$O (1)$,
$O (\log n)$,
and
$O (\log^{2} n)$,
respectively.

\subsection{Paper's Outline}
\label{section:paper-outline}
The remainder of this paper is organized as follows.
In \Sect{}~\ref{section:AU}, we develop our self-stabilizing AU algorithm and
establish \Thm{}~\ref{theorem:AU}.
The self-stabilizing synchronous LE and MIS algorithms promised in \Thm{}
\ref{theorem:LE} and \ref{theorem:MIS}, respectively, are presented in
\Sect{}~\ref{section:algorithms-LE-MIS}.
\Sect{}~\ref{section:synchronizer} is dedicated to a SA variant of the well
known reduction from self-stabilizing network synchronization to the AU task,
establishing Corollary~\ref{corollary:synchronizer}.
We conclude with additional related literature and a discussion of the place
of our work within the scope of the existing ones;
this is done in \Sect{}~\ref{section:related-work-discussion}.

\section{Asynchronous Unison}
\label{section:AU}
In this section, we establish \Thm{}~\ref{theorem:AU} by introducing a
deterministic self-stabilizing algorithm called \AlgAU{} for the AU task on
$D$-bounded diameter graphs, whose state space and stabilization time are
bounded by
$O (D)$
and
$O (D^{3})$,
respectively.
The algorithm is presented in \Sect{}~\ref{section:AU:construction} and
analyzed in \Sect{}~\ref{section:AU:analysis}.
Before diving into the technical parts, \Sect{}~\ref{section:AU:overview}
provides a short overview of \AlgAU{}'s design principles, and how they
compare with existing constructions.

\subsection{Technical Overview}
\label{section:AU:overview}
Most existing efficient constructions of self-stabilizing AU algorithms with
bounded state space rely on some sort of a reset mechanism.
This mechanism is invoked upon detecting an illegal configuration that usually
means a ``clock discrepancy'', namely, graph neighbors whose states are
associated with non-adjacent clock values of the acyclic group $K$.
The reset mechanism is designed so that it brings the system back to a legal
configuration, from which a fault free execution can proceed.
It turns out though that designing a self-stabilizing AU algorithm with state
space
$O (D)$
based on a reset mechanism is more difficult than what one may have expected
as demonstrated by the failed attempt presented in
Appendix~\ref{appendix:failed-attempt}.

Discouraged by this failed attempt, we followed a different approach and
designed our self-stabilizing AU algorithm without a reset mechanism.
Rather, we augment the $|K|$ output states with (approximately) $|K|$
``faulty states'', each one of them forms a short detour over the cyclic
structure of $K$;
refer to Figure~\ref{figure:AU:turn-chart} for the state diagram of \AlgAU{},
where the output states and the faulty states are marked by integers with
(wide) bars and hats, respectively.
Upon detecting a clock discrepancy, a node residing in an output state $s$
moves to the faulty state associated with $s$ and stays there until certain
conditions are satisfied and the node may complete the faulty detour and
return to a nearby output state (though, not to the original state $s$).
This mechanism is designed so that clock discrepancies are resolved in a
gradual ``closing the gap'' fashion.

The conditions that determine when a faulty node may return to an output state
and the conditions for moving to a faulty state when sensing a faulty neighbor
without being directly involved in a clock discrepancy are the key to the
stabilization guarantees of \AlgAU{}.
In particular, the algorithm takes a relatively cautious approach for switching
between output and faulty states, that, as it turns out, allows us to avoid
``vicious cycles'' and ultimately bound the stabilization time as a function
of
$|K| = O (D)$.

\subsection{Constructing the Self-Stabilizing Algorithm}
\label{section:AU:construction}
The design of \AlgAU{} relies upon the following definitions.

\begin{definition*}%
[turns, able, faulty]
Fix
$k = 3 D + 2$.
The states of \AlgAU{}, referred to hereafter as \emph{turns}, are partitioned
into a set
$\Able{T}
=
\{ \Able{\ell} \mid \ell \in \Integers, 1 \leq |\ell| \leq k \}$
of \emph{able} turns and a set
$\Faulty{T}
=
\{ \Faulty{\ell} \mid \ell \in \Integers, 2 \leq |\ell| \leq k \}$
of \emph{faulty} turns.
A node residing in an able (resp. faulty) turn is said to be \emph{able}
(resp., \emph{faulty}).
\end{definition*}

\begin{definition*}%
[levels]
Throughout \Sect{}~\ref{section:AU}, we refer to the integers
$\ell \in \Integers$,
$1 \leq |\ell| \leq k$,
as \emph{levels} and define the level of turn
$\Able{\ell} \in \Able{T}$
(resp.,
$\Faulty{\ell} \in \Faulty{T}$)
to be $\ell$.
We denote the level of (the turn of) a node
$v \in V$
at time
$t \in \Integers_{\geq 0}$
by $\lambda_{v}^{t}$ and the set of levels sensed by $v$ at time $t$ by
$\Lambda_{v}^{t} = \{ \lambda_{u}^{t} \mid u \in \Neighbors^{+}(v) \}$.
For a level $\ell$, let
$\Level^{t}(\ell) = \{ v \in V \mid \lambda_{v}^{t} = \ell \}$
be the set of nodes whose level at time $t$ is $\ell$.
This notation is extended to level subsets $B$, defining
$\Level^{t}(B) = \bigcup_{\ell \in B} \Level^{t}(\ell)$.
\end{definition*}

\begin{definition*}%
[forward operator, adjacent]
For a level $\ell$, let
\[
\forward(\ell)
\, = \,
\begin{cases}
1, & \ell = -1 \\
-k, & \ell = k \\
\ell + 1, & \text{otherwise}
\end{cases} \, .
\]
Based on that, we define the \emph{forward operator} $\forward^{j}(\ell)$,
$j = 1, 2, \dots$,
by setting
$\forward^{1}(\ell) = \forward(\ell)$
and
$\forward^{j + 1}(\ell) = \forward(\forward^{j}(\ell))$.
Observing that the forward operator is bijective for each $j$, we extend it
to negative superscripts by setting
$\forward^{-j}(\ell) = \ell'$
if and only if
$\forward^{+j}(\ell') = \ell$.
Levels $\ell$ and $\ell'$ are said to be \emph{adjacent} if either \\
(1)
$\ell = \ell'$; \\
(2)
$\ell = \forward^{+1}(\ell')$;
or \\
(3)
$\ell = \forward^{-1}(\ell')$.
\end{definition*}

\begin{definition*}%
[outwards operator, outwards, inwards]
Given a level $\ell$ and an integer parameter
$-|\ell| < j \leq k - |\ell|$,
the \emph{outwards operator} $\outwards^{j}(\ell)$ returns the unique level
$\ell'$ that satisfies
(1)
$\Sign(\ell') = \Sign(\ell)$;
and
(2)
$|\ell'| = |\ell| + j$.
This means in particular that if $j$ is positive, then
$|\ell'| > |\ell|$,
and if $j$ is negative, then
$|\ell'| < |\ell|$.
If
$\ell' = \outwards^{j}(\ell)$
for a positive (resp., negative) $j$, then we refer to level $\ell'$ as being
$|j|$ units \emph{outwards} (resp., \emph{inwards}) of $\ell$.

Let
$\Outwards^{>}(\ell)
=
\left\{ \outwards^{j}(\ell) \mid 0 < j \leq k - |\ell| \right\}$
and let
$\Outwards^{\geq}(\ell) = \Outwards^{>}(\ell) \cup \{ \ell \}$
and
$\Outwards^{\gg}(\ell) = \Outwards^{>}(\ell) - \{ \outwards^{+1}(\ell) \}$.
Likewise, let
$\Outwards^{<}(\ell)
=
\left\{ \outwards^{j}(\ell) \mid -|\ell| < j < 0 \right\}$
and let
$\Outwards^{\leq}(\ell) = \Outwards^{<}(\ell) \cup \{ \ell \}$
and
$\Outwards^{\ll}(\ell) = \Outwards^{<}(\ell) - \{ \outwards^{-1}(\ell) \}$.
\end{definition*}

\begin{definition*}%
[protected, good]
An edge
$e = (u, v) \in E$
is said to be \emph{protected} at time
$t \in \Integers_{\geq 0}$
if levels $\lambda_{v}^{t}$ and
$\lambda_{u}^{t}$ are adjacent.
A node
$v \in V$
is said to be \emph{protected} at time $t$ if all its incident edges are
protected.
Let
$\ProtectedV^{t} \subseteq V$
and
$\ProtectedE^{t} \subseteq E$
denote the set of nodes and edges, respectively, that are protected at time
$t$.
A protected node that does not sense any faulty turn is said to be
\emph{good}.
The graph $G$ is said to be \emph{protected} (resp., \emph{good}) at time $t$
if all its nodes are protected (resp., good).
\end{definition*}

We are now ready to complete the description of \AlgAU{}.
The
$2 k$
levels are identified with the AU clock values, associating $\phi^{+j}(\cdot)$
and $\phi^{-j}(\cdot)$ with the $+j$ and $-j$ operations, respectively, of the
corresponding cyclic group.
Moreover, we identify the output state set of \AlgAU{} with the set $\Able{T}$
of able turns and regard the faulty turns as the remaining (non-output)
states.

For the state transition function of \AlgAU{}, consider a node
$v \in V$
residing in a turn
$\nu \in \Able{T} \cup \Faulty{T}$
at time $t$ and suppose that $v$ is activated at time $t$.
Node $v$ remains in turn $\nu$ during step $t$ unless certain conditions on
$\nu$ are satisfied, in which case, $v$ performs a state transition that
belongs to one of the following three types (refer to
Table~\ref{table:AU:state-transition} for a summary and to
Figure~\ref{figure:AU:turn-chart} for an illustration):
\begin{itemize}

\item
Suppose that $v$'s turn at time $t$ is
$\nu = \Able{\ell} \in \Able{T}$,
$1 \leq |\ell| \leq k$.
Node $v$ performs a \emph{type able-able (AA)} transition in step $t$ and
updates its turn to
$\Able{\ell'} \in \Able{T}$,
where
$\ell' = \forward^{+1}(\ell)$,
if and only if
(1)
$v$ is good at time $t$;
and
(2)
$\Lambda_{v}^{t} \subseteq \{ \ell, \ell' \}$.

\item
Suppose that $v$'s turn at time $t$ is
$\nu = \Able{\ell} \in \Able{T}$,
$2 \leq |\ell| \leq k$.
Node $v$ performs a \emph{type able-faulty (AF)} transition in step $t$ and
updates its turn to
$\Faulty{\ell} \in \Faulty{T}$
if and only if at least one of the following two conditions is satisfied:
(1)
$v$ is not protected at time $t$;
or
(2)
$v$ senses turn $\Faulty{\ell'}$ at time $t$, where
$\ell' = \outwards^{-1}(\ell)$.

\item
Suppose that $v$'s turn at time $t$ is
$\nu = \Faulty{\ell} \in \Faulty{T}$,
$2 \leq |\ell| \leq k$.
Node $v$ performs a \emph{type faulty-able (FA)} transition in step $t$ and
updates its turn to
$\Able{\ell'} \in \Able{T}$,
where
$\ell' = \outwards^{-1}(\ell)$
is the level one unit inwards of $\ell$, if and only if $v$ does not sense any
level in
$\Outwards^{>}(\ell)$.

\end{itemize}

\subsection{Correctness and Stabilization Time Analysis}
\label{section:AU:analysis}
In this section, we establish the correctness and stabilization time
guarantees of \AlgAU{}.
First, in \Sect{}~\ref{section:AU:fundamental-properties}, we
present (and prove) certain fundamental invariants and general observations
regarding the operation of \AlgAU{}.
This allows us to prove in
\Sect{}~\ref{section:AU:post-stabilization-dynamics} that in
the context of \AlgAU{}, stabilization corresponds to reaching a good graph.
Following that, we focus on proving that the graph is guaranteed to become
good by time
$O (R(D^{3}))$.
This is done in three stages, presented in \Sect{}
\ref{section:AU:towards-out-protected},
\ref{section:AU:out-protected-to-justified}, and
\ref{section:AU:justified-to-good}.

\subsubsection{Fundamental Properties.}
\label{section:AU:fundamental-properties}
The following additional two definitions play a central role in the analysis
of \AlgAU{}.

\begin{definition*}%
[out-protected, $\ell$-out-protected]
We say that a node
$v \in V$
of level $\ell$ is \emph{out-protected} at time
$t \in \Integers_{\geq 0}$
if
$\Lambda_{v}^{t} \cap \Outwards^{\gg}(\lambda_{v}^{t}) = \emptyset$.
In other words, $v$ is out-protected at time $t$ if any edge
$(u, v) \in E - \ProtectedE^{t}$
satisfies either
(1)
$\Sign(\lambda_{u}^{t}) \neq \Sign(\lambda_{v}^{t})$;
or
(2)
$\lambda_{u}^{t} \in \Outwards^{\ll}(\lambda_{v}^{t})$.
Notice that the nodes in level
$\ell \in \{ -k, -k + 1, k - 1, k \}$
are always (vacuously) out-protected.
Let
$\OutProtectedV^{t} \subseteq V$
denote the set of nodes that are out-protected at time
$t \in \Integers_{\geq 0}$.

The graph $G$ is said to be \emph{out-protected} at time
$t \in \Integers_{\geq 0}$
if
$V = \OutProtectedV^{t}$.
Given a level $\ell$, the graph is said to be \emph{$\ell$-out-protected} at
time $t$ if 
$\Level^{t} \left( \Outwards^{\geq}(\ell) \right)
\subseteq
\OutProtectedV^{t}$.
Notice that the graph is out-protected if and only if it is both
$1$-out-protected and $(-1)$-out-protected, which means that if edge
$(u, v) \notin \ProtectedE^{t}$,
then
$\Sign(\lambda_{u}^{t}) \neq \Sign(\lambda_{v}^{t})$.
\end{definition*}

\begin{definition*}%
[distance]
The \emph{distance} between levels $\ell$ and $\ell'$, denoted by
$\Distance(\ell, \ell')$,
is defined by the recurrence
\[
\Distance(\ell, \ell')
\, = \,
\begin{cases}
0, & \ell = \ell' \\
1 + \min \{
\Distance(\ell, \forward^{-1}(\ell')), \Distance(\ell, \forward^{+1}(\ell'))
\},
& \ell \neq \ell'
\end{cases} \, ;
\]
notice that this is indeed a distance function in the sense that it is
symmetric and obeys the triangle inequality.\footnote{%
To distinguish the level distance function from the distance function of the
graph $G$, we denote the latter by
$\Distance_{G}(\cdot, \cdot)$.}
\end{definition*}

We are now ready to state the fundamental properties of \AlgAU{}, cast in
\Obs{}
\ref{observation:protected-edge-remains}--\ref{observation:path-remains-protected}.

\begin{observation} \label{observation:protected-edge-remains}
If an edge
$e = (u, v) \in \ProtectedE^{t}$
and
$\{ \lambda^{t}_{u}, \lambda^{t}_{v} \} \neq \{ -k, k \}$,
then
$e \in \ProtectedE^{t + 1}$.
\end{observation}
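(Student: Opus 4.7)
I plan to prove Observation \ref{observation:protected-edge-remains} by case analysis on the transitions performed by $u$ and $v$ during step $t$. Write $\ell_u = \lambda^t_u$ and $\ell_v = \lambda^t_v$; the hypothesis reads $\ell_v \in \{\ell_u, \forward(\ell_u), \forward^{-1}(\ell_u)\}$ together with $\{\ell_u, \ell_v\} \neq \{-k, k\}$. Type AF transitions preserve a node's level, while only type AA and FA transitions can modify it, so if neither endpoint performs an AA or FA transition in step $t$ the claim is immediate. Throughout the argument I will rely on the following exclusion principle: if $u$ performs an AA transition then $u$ is good at time $t$, so no neighbor of $u$ is faulty and in particular $v$ cannot perform FA (and symmetrically with the roles of $u$ and $v$ swapped).

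For the AA branch, suppose $u$ performs AA, so $\lambda^{t+1}_u = \forward(\ell_u)$ and the AA precondition gives $\Lambda^t_u \subseteq \{\ell_u, \forward(\ell_u)\}$, forcing $\ell_v \in \{\ell_u, \forward(\ell_u)\}$. If $v$'s level is unchanged (possibly via an AF transition), then $\lambda^{t+1}_v = \ell_v$ is adjacent to $\forward(\ell_u) = \lambda^{t+1}_u$ and we are done. Otherwise $v$ also performs AA (by the exclusion principle $v$ cannot perform FA), yielding symmetrically $\ell_u \in \{\ell_v, \forward(\ell_v)\}$. Combining the two constraints forces $\ell_u = \ell_v$: the only alternative would require $\ell_u = \forward^{+2}(\ell_u)$, which is impossible for the cyclic group on $2k \geq 4$ levels. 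Then $\lambda^{t+1}_u = \lambda^{t+1}_v = \forward(\ell_u)$.

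The FA branch is where I expect the main obstacle. Suppose $u$ performs FA, so $u$ is faulty at time $t$, $|\ell_u| \geq 2$, $\lambda^{t+1}_u = \outwards^{-1}(\ell_u)$, and the FA precondition rules out any neighbor of $u$ having a level in $\Outwards^{>}(\ell_u)$, so in particular $\ell_v \notin \Outwards^{>}(\ell_u)$. A sub-case analysis on which of $\ell_u, \forward(\ell_u), \forward^{-1}(\ell_u)$ equals $\ell_v$, further split by $\Sign(\ell_u)$ and by whether $|\ell_u| = k$, shows that in every sub-case not ruled out by the hypothesis $\{\ell_u, \ell_v\} \neq \{-k, k\}$ we have $\ell_v \in \{\ell_u, \outwards^{-1}(\ell_u)\}$. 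Since $v$ cannot perform AA (exclusion principle), $\lambda^{t+1}_v \in \{\ell_v, \outwards^{-1}(\ell_v)\}$, where the latter option requires $v$ to be faulty and hence $|\ell_v| \geq 2$. A direct check in each combination then yields adjacency of $\lambda^{t+1}_u$ and $\lambda^{t+1}_v$. The role of the hypothesis is precisely to exclude the pathological configuration $\ell_u = k, \ell_v = -k$: there $\Outwards^{>}(k) = \emptyset$ would let $u$ perform FA unconditionally, yet the resulting pair $(k-1, -k)$ would no longer be adjacent.
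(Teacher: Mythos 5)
Your proof is correct, and it is the same elementary one-step case analysis as the paper's, resting on the same two blocking facts: an AA transition requires goodness, which forbids an FA transition at the other endpoint; and an FA transition is blocked by any neighbor residing outwards. The difference is purely in the axis of decomposition. The paper branches first on the level configuration ($\lambda_u^t = \lambda_v^t$ versus $\lambda_v^t = \forward^{+1}(\lambda_u^t)$ with $\lambda_u^t \neq k$) and then asks which transitions are possible; you branch first on which transition fires and derive the level configuration from its precondition (AA forces $\ell_v \in \{\ell_u, \forward(\ell_u)\}$, while FA together with the hypothesis forces $\ell_v \in \{\ell_u, \outwards^{-1}(\ell_u)\}$). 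Your organization buys a cleaner localization of where the hypothesis $\{\lambda_u^t, \lambda_v^t\} \neq \{-k, k\}$ is actually needed --- exactly to rule out an unconditional FA at level $\pm k$ against a neighbor at $\mp k$, since $\Outwards^{>}(\pm k) = \emptyset$ --- whereas the paper invokes it only implicitly by restricting its second case to $\ell \neq k$. The ``direct checks'' you defer in the FA branch do all go through: in every surviving combination $\lambda_u^{t+1}$ and $\lambda_v^{t+1}$ differ by at most one $\forward$-step within the same sign (and the combination where both endpoints fire FA from $\ell_u$ and $\outwards^{-1}(\ell_u)$ cannot even occur, since $u$'s level lies in $\Outwards^{>}(\outwards^{-1}(\ell_u))$ and blocks $v$'s FA).
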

\begin{proof}
Consider first the case that
$\lambda_{u}^{t} = \lambda_{v}^{t} = \ell$.
If
$\ell < 0$,
then
$\{ \lambda_{u}^{t + 1}, \lambda_{v}^{t + 1} \}
\subseteq
\{ \ell, \forward^{+1}(\ell) \}$,
thus $e$ remains protected at time
$t + 1$.
If
$\ell > 0$,
then it may be the case that the level of one of the two nodes, say $u$,
decreases in step $t$ due to a type FA transition so that
$\lambda_{u}^{t + 1} = \forward^{-1}(\ell)$.
But this means that $v$ is not good at time $t$ (it has at least one faulty
neighbor), hence it cannot experience a type AA transition, implying
that
$\lambda_{v}^{t + 1} \in \{ \ell, \forward^{-1}(\ell) \}$.
Therefore, $e$ remains protected at time
$t + 1$
also in this case.

Assume now that
$\lambda_{u}^{t} = \ell$
and
$\lambda_{v}^{t} = \forward^{+1}(\ell)$
for a level
$\ell \neq k$.
Notice that $v$ cannot experience a type AA transition in step $t$ as
$\ell = \forward^{-1}(\lambda_{v}^{t}) \in \Lambda_{v}^{t}$.
On the other hand, $u$ can experience a type FA transition only if
$\ell < 0$
which results in
$\lambda_{u}^{t + 1} = \forward^{+1}(\ell)$.
Therefore,
$\{ \lambda_{u}^{t + 1}, \lambda_{v}^{t + 1} \}
\subseteq
\{ \ell, \forward^{+1}(\ell) \}$
and $e$ remains protected at time
$t + 1$.
\end{proof}

\begin{observation} \label{observation:protected-node-remains}
If a node
$v \in \ProtectedV^{t}$
and
$\lambda^{t}_{v} \notin \{ -k, k \}$,
then
$v \in \ProtectedV^{t + 1}$.
\end{observation}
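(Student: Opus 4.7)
The plan is to derive Observation~\ref{observation:protected-node-remains} as a direct corollary of Observation~\ref{observation:protected-edge-remains}. Recall that, by definition, $v \in \ProtectedV^{t+1}$ precisely when every edge $e = (u,v) \in E$ incident to $v$ satisfies $e \in \ProtectedE^{t+1}$. So the task reduces to verifying that each such edge remains protected across the step.

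First I would fix an arbitrary neighbor $u \in \Neighbors(v)$ and consider the incident edge $e = (u,v)$. Since $v \in \ProtectedV^{t}$, we have $e \in \ProtectedE^{t}$ by the definition of a protected node. Next I would observe that the side-condition of Observation~\ref{observation:protected-edge-remains} is automatically met: the hypothesis $\lambda^{t}_{v} \notin \{-k, k\}$ implies $\lambda^{t}_{v}$ does not belong to the set $\{-k, k\}$, so the multiset $\{\lambda^{t}_{u}, \lambda^{t}_{v}\}$ cannot coincide with $\{-k, k\}$ regardless of the value of $\lambda^{t}_{u}$. Invoking Observation~\ref{observation:protected-edge-remains} then yields $e \in \ProtectedE^{t+1}$. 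Since $u$ was arbitrary, every edge incident to $v$ is protected at time $t+1$, whence $v \in \ProtectedV^{t+1}$.

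There is essentially no obstacle here; the statement is a one-line consequence of the preceding observation, whose real content lies in ruling out the $\{-k, k\}$ wrap-around case. The only thing worth flagging is the quantifier structure: we need the $\{-k,k\}$ exclusion to hold uniformly over all neighbors of $v$, and the hypothesis on $\lambda^{t}_{v}$ alone (rather than on the neighbors) is exactly what delivers this uniformity.
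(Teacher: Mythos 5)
Your proof is correct and matches the paper's, which simply states that the observation follows directly from Observation~\ref{observation:protected-edge-remains}. Your elaboration of why the hypothesis $\lambda^{t}_{v} \notin \{-k,k\}$ rules out the $\{-k,k\}$ wrap-around case for every incident edge is exactly the intended (and only) content of the argument.
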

\begin{proof}
Follows directly from \Obs{}~\ref{observation:protected-edge-remains}.
\end{proof}

\begin{observation} \label{observation:out-protected-node-remains}
If a node
$v \in \OutProtectedV^{t}$,
then
$v \in \OutProtectedV^{t + 1}$.
\end{observation}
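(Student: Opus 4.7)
The plan is to verify the invariant by a case analysis on the transition of $v$ in step $t$, combined with a case analysis on the possible transitions of each $u \in \Neighbors^{+}(v)$. Writing $\ell = \lambda_v^{t}$ and $\ell' = \lambda_v^{t+1}$, the goal is to show that no $u \in \Neighbors^{+}(v)$ satisfies $\lambda_u^{t+1} \in \Outwards^{\gg}(\ell')$. The outermost boundary cases $|\ell| \in \{k-1, k\}$ are vacuous because then $\Outwards^{\gg}(\ell') = \emptyset$ regardless of $v$'s transition, and the two signs of $\ell$ are treated by a symmetric argument.

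In the general case, $v$ has four possibilities: $v$ is not activated or performs AF (so $\ell' = \ell$ and the forbidden set is unchanged); $v$ performs AA (so $\ell' = \forward^{+1}(\ell)$, and the AA precondition forces $\Lambda_v^{t} \subseteq \{\ell, \forward^{+1}(\ell)\}$ with $v$ good); or $v$ performs FA (so $\ell' = \outwards^{-1}(\ell)$, the forbidden set grows by the single extra level $\outwards^{+1}(\ell)$, but the FA precondition $\Lambda_v^{t} \cap \Outwards^{>}(\ell) = \emptyset$ rules that extra level out at time $t$). In each branch, the only mechanisms that can move a neighbor $u$'s level outwards in magnitude are an AA-step of $u$ (one forward) or an FA-step of $u$ (one inwards). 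An FA-step of $u$ lands in the new forbidden set only if $\lambda_u^{t}$ was already one unit further outwards, which out-protection of $v$ at $t$ (supplemented by the FA precondition when $v$ itself does FA) forbids. An AA-step of $u$ lands in the new forbidden set only if $\lambda_u^{t}$ sits exactly at the inner boundary of that set; but in each branch, either the AA precondition of $u$ (which forces $\ell = \lambda_v^{t} \in \{\lambda_u^{t}, \forward^{+1}(\lambda_u^{t})\}$) is incompatible with such a boundary position, or, in the FA branch for $v$, the fact that $v$ is faulty at $t$ prevents $u$ from being good and so rules out AA entirely.

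The fiddliest part is the bookkeeping at the sign-change points of $\forward$, especially the AA transition from $\ell = -1$ to $\ell' = 1$, where the forbidden set changes type from $\{-3, \dots, -k\}$ to $\{3, \dots, k\}$ rather than shrinking or growing by one level. Here the AA precondition forces $\Lambda_v^{t} \subseteq \{-1, 1\}$ with $v$ good, so every $u \in \Neighbors^{+}(v)$ sits at level $\pm 1$ and is able at time $t$; hence the only available neighbor transitions are stays and AA-moves into $\{1, 2\}$, none of which enters $\{3, \dots, k\}$. The symmetric AA transition from $\ell = k$ to $\ell' = -k$ is vacuous since the new forbidden set is empty. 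Once these sign-change patches are in place, the case analysis closes and the observation follows.
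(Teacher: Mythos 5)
Your case analysis is correct and complete: the three branches for $v$'s own transition (level unchanged, AA, FA), combined with the check that a neighbor's AA precondition pins $\lambda_u^{t}$ adjacent to $\ell$ while a neighbor's FA move only goes inwards, do establish that no new level in $\Outwards^{\gg}(\lambda_v^{t+1})$ can appear in $\Lambda_v^{t+1}$. The route is somewhat different from the paper's, which disposes of the observation in one line by citing \Obs{}~\ref{observation:protected-edge-remains} (protected edges stay protected away from $\{-k,k\}$) together with the vacuous out-protection of the extreme levels $\{-k,-k+1,k-1,k\}$, leaving the treatment of the already-non-protected edges (opposite sign, or inwards by at least two units) implicit. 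Your proof is the fully unpacked, self-contained version: it buys explicit verification of exactly the cases the paper waves at — in particular the FA branch, where the forbidden set genuinely grows by $\outwards^{+1}(\ell)$ and one must use both the FA precondition and the fact that $v$'s faultiness blocks neighboring AA transitions — at the cost of more bookkeeping. One small inaccuracy: your claim that the boundary cases $|\ell|\in\{k-1,k\}$ are vacuous ``regardless of $v$'s transition'' fails when $|\ell|=k-1$ and $v$ performs an FA transition, since then $|\lambda_v^{t+1}|=k-2$ and $\Outwards^{\gg}(\lambda_v^{t+1})$ contains the level of absolute value $k$; this is harmless because your general FA analysis covers that case verbatim, but the boundary remark should be restricted to $|\lambda_v^{t+1}|\geq k-1$.
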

\begin{proof}
Follows from \Obs{}~\ref{observation:protected-edge-remains} by recalling
that
$\Level^{t}(\ell) \subseteq \OutProtectedV^{t}$
for every
$\ell \in \{ -k, -k + 1, k - 1, k \}$.
\end{proof}

\begin{observation} \label{observation:level-change-yields-out-protected}
For a node
$v \in V$,
if
$\lambda_{v}^{t + 1} \neq \lambda_{v}^{t}$,
then
$v \in \OutProtectedV^{t + 1}$.
\end{observation}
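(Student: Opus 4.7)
The plan is to case split on the transition $v$ performs in step $t$. Since a type AF transition preserves the level of $v$, the hypothesis $\lambda_{v}^{t+1} \neq \lambda_{v}^{t}$ forces $v$ to execute either an AA or an FA transition. In each case I identify the target set $\Outwards^{\gg}(\lambda_{v}^{t+1})$ of forbidden neighbor levels and show, by analyzing the step-$t$ transitions available to each neighbor $u \in \Neighbors(v)$, that no element of $\Lambda_{v}^{t+1}$ falls inside it.

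For the AA case, set $\ell = \lambda_{v}^{t}$ and $\ell' = \lambda_{v}^{t+1} = \forward^{+1}(\ell)$. The AA trigger gives that $v$ is good and $\Lambda_{v}^{t} \subseteq \{\ell, \ell'\}$, so every neighbor $u$ is able at time $t$ with $\lambda_{u}^{t} \in \{\ell, \ell'\}$. The key sub-claim is that the same containment persists at time $t+1$: a neighbor with $\lambda_{u}^{t} = \ell$ can at worst fire its own AA and reach $\ell'$, while a neighbor with $\lambda_{u}^{t} = \ell'$ cannot fire AA, since $v$ lies in $u$'s signal with level $\ell \notin \{\ell', \forward^{+1}(\ell')\}$ (using $k \geq 2$), and cannot fire FA because it is able. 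It then remains to verify that $\{\ell, \ell'\} \cap \Outwards^{\gg}(\ell') = \emptyset$, a short computation that handles the two wrap-around cases $\ell \in \{-1, k\}$ separately from the generic case $\ell' = \ell + 1$.

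For the FA case, set $\ell = \lambda_{v}^{t}$ and $\ell' = \outwards^{-1}(\ell) = \lambda_{v}^{t+1}$. Unpacking the definitions shows $\Outwards^{\gg}(\ell') = \Outwards^{>}(\ell)$, so the goal reduces to $\Lambda_{v}^{t+1} \cap \Outwards^{>}(\ell) = \emptyset$. The FA trigger already supplies $\Lambda_{v}^{t} \cap \Outwards^{>}(\ell) = \emptyset$, so it suffices to rule out a neighbor $u$ entering $\Outwards^{>}(\ell)$ during step $t$. The leverage here is that $v$ sits in the faulty turn $\Faulty{\ell}$ at time $t$; hence every neighbor senses a faulty turn, is not good, and cannot fire an AA transition. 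The remaining options (no transition, AF, or FA) either fix $\lambda_{u}$ or strictly decrease $|\lambda_{u}|$ along its own sign, and in either case $\lambda_{u}^{t+1}$ stays outside $\Outwards^{>}(\ell)$ provided $\lambda_{u}^{t}$ does.

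I expect the AA case to be the main obstacle, because the asynchronous scheduler may activate $v$'s neighbors simultaneously with $v$, opening the door to a neighbor $u$ with $\lambda_{u}^{t} = \ell'$ silently advancing to $\forward^{+1}(\ell')$ in the very same step and leaving $v$ looking two units outwards at time $t+1$. Blocking this scenario crucially exploits $v$'s own state at time $t$ as a witness inside $u$'s signal that breaks $u$'s AA trigger; every other sub-argument reduces to local arithmetic on the cyclic orbits of $\forward$ and $\outwards$.
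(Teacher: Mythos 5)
Your proof is correct, but it takes a noticeably different route from the paper's. The paper disposes of this observation in one line: it notes that the precondition of a type AA transition ($\Lambda_{v}^{t} \subseteq \{\ell, \forward^{+1}(\ell)\}$) and the precondition of a type FA transition ($\Lambda_{v}^{t} \cap \Outwards^{>}(\ell) = \emptyset$) each already imply that $v$ is out-protected \emph{at time $t$}, and then invokes \Obs{}~\ref{observation:out-protected-node-remains} (out-protectedness, once acquired, persists) to transfer the property to time $t+1$. You instead bypass that monotonicity observation entirely and establish the time-$(t+1)$ property directly, tracking for each neighbor $u$ which transitions are available in step $t$ and checking that none lands $u$ in $\Outwards^{\gg}(\lambda_{v}^{t+1})$ --- in effect re-deriving, inline and for these two special cases, the content of \Obs{} \ref{observation:protected-edge-remains} and \ref{observation:out-protected-node-remains}. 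The trade-off is the usual one: the paper's version is shorter and reuses already-proved machinery, while yours is self-contained and makes explicit the mechanism the paper leaves buried in the earlier observations (in the AA case, that $v$'s own level at time $t$ sits in each neighbor's signal and blocks a simultaneous outward step; in the FA case, that $v$'s faulty turn destroys every neighbor's goodness and hence its AA trigger). Your identity $\Outwards^{\gg}(\outwards^{-1}(\ell)) = \Outwards^{>}(\ell)$ and the sign/absolute-value bookkeeping both check out, so there is no gap --- just duplicated effort relative to the paper's modular decomposition.
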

\begin{proof}
Follows from
\Obs{}~\ref{observation:out-protected-node-remains} as node $v$ cannot
change its level in step $t$ unless it is out-protected at time $t$.
\end{proof}

\begin{observation} \label{observation:levels-dont-get-apart}
If an edge
$(u, v) \in E - \ProtectedE^{t}$
with
$\lambda_{u}^{t} < \lambda_{v}^{t}$,
then
$\lambda_{u}^{t}
\leq
\lambda_{u}^{t + 1}
<
\lambda_{v}^{t + 1}
\leq
\lambda_{v}^{t}$.
\end{observation}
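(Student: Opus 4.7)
The plan is to leverage the hypothesis that $(u,v) \notin \ProtectedE^{t}$ to sharply restrict the transitions that $u$ and $v$ can perform in step $t$, and then to combine these restrictions with the non-adjacency of $\lambda_{u}^{t}$ and $\lambda_{v}^{t}$. The key opening observation is that an unprotected edge makes both its endpoints unprotected, hence not good; by the AA precondition, this rules out type AA transitions at both $u$ and $v$. Type AF transitions leave the level unchanged, so any level change at $u$ or $v$ must come from a type FA transition, which applies $\outwards^{-1}$ and therefore moves the level one unit \emph{inwards} --- increasing it when the source level is negative, and decreasing it when positive.

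Using this, I would first establish the outer inequalities. The only way $\lambda_{u}^{t+1} < \lambda_{u}^{t}$ can occur is a type FA transition at $u$ with $\lambda_{u}^{t} > 0$; but then $\lambda_{v}^{t} > \lambda_{u}^{t} > 0$ lies in $\Outwards^{>}(\lambda_{u}^{t})$, which $u$ senses through the edge to $v$, blocking the FA precondition at $u$. Symmetrically, $\lambda_{v}^{t+1} > \lambda_{v}^{t}$ would require a type FA transition at $v$ with $\lambda_{v}^{t} < 0$, blocked by $\lambda_{u}^{t} \in \Outwards^{>}(\lambda_{v}^{t})$. Hence $\lambda_{u}^{t+1} \geq \lambda_{u}^{t}$ and $\lambda_{v}^{t+1} \leq \lambda_{v}^{t}$.

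For the strict inequality $\lambda_{u}^{t+1} < \lambda_{v}^{t+1}$, I would split on the sign pattern. If $\lambda_{u}^{t}$ and $\lambda_{v}^{t}$ share a sign, the same blocking argument shows that only one of $u$ and $v$ can actually change its level (the other being blocked by the first lying in its outwards region), so the gap $\lambda_{v}^{t} - \lambda_{u}^{t}$ shrinks by at most $1$; since $\lambda_{u}^{t} \neq -1, k$ in this regime, non-adjacency gives $\lambda_{v}^{t} \neq \forward^{+1}(\lambda_{u}^{t}) = \lambda_{u}^{t}+1$ and thus gap $\geq 2$, which suffices. If instead $\lambda_{u}^{t} < 0 < \lambda_{v}^{t}$, both nodes may fire FA and the gap may shrink by $2$; but the $\forward$-adjacency between $-1$ and $1$ forces non-adjacency in this regime to imply $\lambda_{v}^{t} - \lambda_{u}^{t} \geq 3$ (when $\lambda_{u}^{t} = -1$, non-adjacency excludes $\lambda_{v}^{t} = 1$, so $\lambda_{v}^{t} \geq 2$; when $\lambda_{u}^{t} \leq -2$, the inequality $\lambda_{v}^{t} \geq 1$ already yields gap $\geq 3$), so $\lambda_{v}^{t+1} - \lambda_{u}^{t+1} \geq 1$.

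The main subtlety, though modest in scope, is navigating the cyclic boundaries of $\forward$: the only places where $\forward$ fails to be a simple $+1$ increment are $\forward(-1) = 1$ and $\forward(k) = -k$. Both are neutralized in the setting at hand --- the former by the non-adjacency hypothesis, which is exactly what forces the extra unit of gap in the opposite-sign case, and the latter because $\lambda_{u}^{t} = k$ would contradict $\lambda_{u}^{t} < \lambda_{v}^{t}$.
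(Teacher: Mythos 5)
Your proof is correct and follows the same route as the paper's (very terse) argument: an unprotected edge rules out type AA transitions at both endpoints, AF transitions preserve levels, and the FA precondition of not sensing a strictly outwards level blocks the ``wrong-direction'' moves, with non-adjacency supplying the slack needed for the strict middle inequality. Your case analysis on the sign pattern is a sound and complete elaboration of what the paper leaves implicit.
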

\begin{proof}
Follows by recalling that a node that is not protected at time $t$ cannot
experience a type AA transition in step $t$ and that it can experience a
type FA transition in step $t$ only if it does not sense any level
(strictly) outwards of its own.
\end{proof}

\begin{observation} \label{observation:ell-out-protected-remains}
If $G$ is $\ell$-out-protected at time $t$, then $G$ remains
$\ell$-out-protected at time
$t + 1$.
\end{observation}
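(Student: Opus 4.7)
The plan is to verify the defining condition of $\ell$-out-protection for each node at time $t+1$ by a simple case split on whether the node changed its level in step $t$, leveraging the two invariants already proved (Observations~\ref{observation:out-protected-node-remains} and \ref{observation:level-change-yields-out-protected}).

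Concretely, I would pick an arbitrary node $v \in V$ with $\lambda_v^{t+1} \in \Outwards^{\geq}(\ell)$ and show $v \in \OutProtectedV^{t+1}$; once this holds for every such $v$, the definition yields that $G$ is $\ell$-out-protected at time $t+1$.

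In the first case, $\lambda_v^{t+1} = \lambda_v^t$. Then $\lambda_v^t = \lambda_v^{t+1} \in \Outwards^{\geq}(\ell)$, so the hypothesis that $G$ is $\ell$-out-protected at time $t$ gives $v \in \OutProtectedV^t$. Observation~\ref{observation:out-protected-node-remains} then propagates this to $v \in \OutProtectedV^{t+1}$. In the second case, $\lambda_v^{t+1} \neq \lambda_v^t$, and Observation~\ref{observation:level-change-yields-out-protected} directly gives $v \in \OutProtectedV^{t+1}$.

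Since both observations have already been established, there is no real obstacle here; the only subtlety is making sure one quantifies over nodes belonging to $\Outwards^{\geq}(\ell)$ at time $t+1$ (not at time $t$), which is precisely what the definition of $\ell$-out-protection requires, and which is exactly what makes the case split on ``level changed vs. not'' cover all possibilities.
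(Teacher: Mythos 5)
Your proposal is correct and matches the paper's argument exactly: the paper's proof is a one-line citation of Observations~\ref{observation:out-protected-node-remains} and \ref{observation:level-change-yields-out-protected}, and your case split (level unchanged versus level changed) is precisely the intended way to combine them. Nothing is missing.
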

\begin{proof}
Follows from \Obs{} \ref{observation:out-protected-node-remains} and
\ref{observation:level-change-yields-out-protected}.
\end{proof}

\begin{observation} \label{observation:bound-level-distance-by-path-length}
Consider a path $P$ of length $d$ between nodes
$u \in V$
and
$v \in V$
in $G$.
If
$E(P) \subseteq \ProtectedE^{t}$,
then
$\Distance(\lambda_{u}^{t}, \lambda_{v}^{t}) \leq d$.
\end{observation}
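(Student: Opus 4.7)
The plan is to prove this by straightforward induction on the path length $d$, leveraging two elementary facts: (i) if an edge $(u,v)$ is protected at time $t$ then $\lambda_u^t$ and $\lambda_v^t$ are adjacent levels, and consequently $\Distance(\lambda_u^t, \lambda_v^t) \leq 1$ directly from the recurrence defining $\Distance$; and (ii) $\Distance$ satisfies the triangle inequality, as already noted right after its definition.

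For the base case $d = 0$, we have $u = v$, so $\lambda_u^t = \lambda_v^t$ and $\Distance(\lambda_u^t, \lambda_v^t) = 0 \leq 0$.

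For the inductive step, suppose the claim holds for all protected paths of length at most $d - 1$ and consider a protected path $P = (u = w_0, w_1, \dots, w_d = v)$ of length $d \geq 1$. The prefix $P' = (w_0, \dots, w_{d-1})$ is itself a protected path of length $d-1$ between $u$ and $w_{d-1}$, so by the inductive hypothesis $\Distance(\lambda_u^t, \lambda_{w_{d-1}}^t) \leq d - 1$. The remaining edge $(w_{d-1}, v)$ lies in $\ProtectedE^t$, so $\lambda_{w_{d-1}}^t$ and $\lambda_v^t$ are adjacent; checking the three cases in the definition of adjacency against the recurrence for $\Distance$ yields $\Distance(\lambda_{w_{d-1}}^t, \lambda_v^t) \leq 1$. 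Combining these bounds via the triangle inequality gives $\Distance(\lambda_u^t, \lambda_v^t) \leq d$, completing the induction.

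There is no real obstacle here: the only mildly nontrivial point is noting that adjacency of levels immediately yields level-distance at most $1$, and this is a direct unfolding of the recurrence. The claim is essentially a restatement, on the cyclic level structure, of the familiar fact that the graph distance in a path-metric is bounded by the length of any path realizing the connection.
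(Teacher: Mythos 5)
Your proof is correct and follows exactly the same route as the paper's: induction on $d$, applying the hypothesis to the length-$(d-1)$ prefix and then extending over the last protected edge via the triangle inequality. You simply spell out more explicitly the fact that a protected edge forces level-distance at most $1$, which the paper leaves implicit.
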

\begin{proof}
By induction on $d$.
The assertion clearly holds if
$d = 0$
which implies that
$u = v$.
Consider a
$(u, v)$-path
$P$ of length
$d > 0$
and let $v'$ be the node that precedes $v$ in $P$.
By applying the inductive hypothesis to the
$(u, v')$-prefix
of $P$, we conclude that
$\Distance(\lambda_{u}^{t}, \lambda_{v'}^{t}) \leq d - 1$.
As
$(v', v) \in \ProtectedE^{t}$,
we conclude that
$\Distance(\lambda_{u}^{t}, \lambda_{v}^{t})
\leq
\Distance(\lambda_{u}^{t}, \lambda_{v'}^{t}) + 1
\leq
d$,
thus establishing the assertion.
\end{proof}

\begin{observation} \label{observation:contiguous-levels}
If
$\ProtectedV^{t} = V$,
then there exists a level $\ell$ and an integer
$0 \leq d \leq D$
such that
$V
=
\Level^{t} \left(
\left\{ \forward^{+j}(\ell) \mid 0 \leq j \leq d \right\}
\right)$.
\end{observation}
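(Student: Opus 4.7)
The plan is to combine \Obs{}~\ref{observation:bound-level-distance-by-path-length} with a geometric argument about the cyclic level structure. Since $\ProtectedV^{t} = V$, every edge of $G$ is in $\ProtectedE^{t}$, so every path in $G$ is protected. Applied to shortest paths, this yields $\Distance(\lambda_{u}^{t}, \lambda_{v}^{t}) \leq \Distance_{G}(u, v) \leq D$ for every pair $u, v \in V$. Setting $L = \{\lambda_{v}^{t} \mid v \in V\}$, we therefore have a set of levels with pairwise $\phi$-distance at most $D$ in the cyclic level structure of length $2k = 6D+4$, and the goal becomes showing that $L$ fits inside a $\phi$-contiguous arc of length at most $D+1$.

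Next, I would pick $a, b \in L$ maximizing $d^{*} = \Distance(a, b)$; note $d^{*} \leq D$. Since the two shortest $\phi$-paths from $a$ to $b$ differ in direction, assume WLOG that $b = \forward^{+d^{*}}(a)$, and let $A = \{\forward^{+j}(a) \mid 0 \leq j \leq d^{*}\}$ be the short arc. Taking $\ell = a$ and $d = d^{*}$ reduces the observation to proving $L \subseteq A$.

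For that inclusion, I would argue by contradiction: suppose $c \in L$ lies outside $A$, so $c = \forward^{+(d^{*} + p)}(a)$ for some $1 \leq p \leq 2k - d^{*} - 1$. Then $\Distance(a, c) = \min(d^{*} + p,\, 2k - d^{*} - p)$ and $\Distance(b, c) = \min(p,\, 2k - p)$. By the maximality of $d^{*}$, both distances are at most $d^{*}$. From $\Distance(b, c) \leq d^{*}$ together with $p \geq 1$ we get $p \leq d^{*}$, and from $\Distance(a, c) \leq d^{*}$ together with $d^{*} + p > d^{*}$ we get $2k - d^{*} - p \leq d^{*}$, i.e.\ $p \geq 2k - 2d^{*}$. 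Combining yields $2k \leq 3 d^{*} \leq 3 D$, which contradicts $2k = 6D + 4$. Hence $L \subseteq A$, completing the proof.

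The main obstacle will be the third step, in which one has to carefully handle the wrap-around behavior of $\forward$ on the $2k$ levels (which skip $0$) and verify that the chosen parameter $k = 3D + 2$ is large enough for the ``short arc traps everything'' dichotomy to bite. All the other steps are either immediate consequences of earlier observations or bookkeeping.
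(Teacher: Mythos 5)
Your proposal is correct and follows the same route as the paper, whose entire proof is the one-liner ``apply Observation~\ref{observation:bound-level-distance-by-path-length} to the shortest paths of $G$, whose lengths are at most $D$.'' You additionally spell out the cyclic-arc argument (that a set of levels with pairwise distance at most $D$ in a cycle of length $2k = 6D+4 > 3D$ must lie in a single $\forward$-contiguous arc of length at most $D$), which the paper leaves implicit; your case analysis for that step checks out.
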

\begin{proof}
Follows by applying
\Obs{}~\ref{observation:bound-level-distance-by-path-length} to the
shortest paths in the graph $G$ whose lengths are at most $D$.
\end{proof}

\begin{observation} \label{observation:path-remains-protected}
Consider a path $P$ of length $d$ emerging from a node
$v \in V$
and assume that
$E(P) \subseteq \ProtectedE^{t}$
(resp.,
$V(P) \subseteq \ProtectedV^{t}$).
Fix some time
$t' \geq t$
and assume that
$|\lambda_{v}^{s}| < k - d$
for every
$t \leq s \leq t'$.
Then,
$E(P) \subseteq \ProtectedE^{t'}$
(resp.,
$V(P) \subseteq \ProtectedV^{t'}$).
\end{observation}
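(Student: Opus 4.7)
The plan is to prove both the edge and vertex versions simultaneously by induction on $s$ ranging over $t, t+1, \ldots, t'$, with the base case $s = t$ given by hypothesis. For the inductive step, I would propagate edge (resp., node) protection from time $s$ to time $s+1$ by invoking Obs~\ref{observation:protected-edge-remains} (resp., Obs~\ref{observation:protected-node-remains}), after first ruling out the boundary levels $\pm k$ on every node of $P$ at time $s$.

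The key auxiliary fact to establish at each step is that, under the standing hypothesis $|\lambda_v^s| < k - d$, no $w \in V(P)$ satisfies $\lambda_w^s \in \{-k, k\}$. Indeed, for any such $w$, the $(v,w)$-subpath $P_w$ has length $d_w \leq d$, and $E(P_w) \subseteq \ProtectedE^s$: this holds directly in the edge version, and in the vertex version it follows from $V(P) \subseteq \ProtectedV^s$ since a protected node has all its incident edges protected. Obs~\ref{observation:bound-level-distance-by-path-length} then yields $\Distance(\lambda_v^s, \lambda_w^s) \leq d_w \leq d$. A short inspection of the cyclic level structure (in which $\forward$ produces the single cycle $1, 2, \ldots, k, -k, -k+1, \ldots, -1, 1, \ldots$ of length $2k$) shows that the level distance from any $\ell$ with $|\ell| < k - d$ to either $k$ or $-k$ is at least $k - |\ell| \geq d + 1$. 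Hence no $\lambda_w^s$ can be $\pm k$.

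Granted this fact, the inductive step concludes immediately. In the edge case, every edge $(u,w) \in E(P)$ satisfies $\{\lambda_u^s, \lambda_w^s\} \neq \{-k, k\}$, so Obs~\ref{observation:protected-edge-remains} gives $(u,w) \in \ProtectedE^{s+1}$, yielding $E(P) \subseteq \ProtectedE^{s+1}$. In the vertex case, every $w \in V(P)$ has $\lambda_w^s \notin \{-k, k\}$, so Obs~\ref{observation:protected-node-remains} gives $w \in \ProtectedV^{s+1}$, yielding $V(P) \subseteq \ProtectedV^{s+1}$. Iterating this step from $s = t$ through $s = t'-1$ delivers the claim.

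I do not anticipate a substantial obstacle; the only delicate ingredient is the distance computation in the cyclic level structure, which involves both the wrap $\forward(k) = -k$ and the sign-jump $\forward(-1) = 1$, but once one writes down the two possible routes around the cycle from $\ell$ to $\pm k$ and takes the minimum, the bound $k - |\ell|$ falls out and the rest of the induction is mechanical.
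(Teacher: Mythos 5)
Your proposal is correct and follows essentially the same route as the paper's proof: an induction on $s$ whose step combines \Obs{}~\ref{observation:bound-level-distance-by-path-length} (to rule out levels $\pm k$ on $P$, given $|\lambda_{v}^{s}| < k - d$) with \Obs{}~\ref{observation:protected-edge-remains} and \Obs{}~\ref{observation:protected-node-remains}. The only difference is that you spell out the cyclic-distance computation showing $\Distance(\ell, \pm k) \geq k - |\ell|$, which the paper leaves implicit.
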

\begin{proof}
Fix a time $s$.
\Obs{}~\ref{observation:bound-level-distance-by-path-length} ensures that
if
$E(P) \subseteq \ProtectedE^{s}$
(resp.,
$V(P) \subseteq \ProtectedV^{s}$)
and
$|\lambda_{v}^{s}| < k - d$,
then
$|\lambda_{u}^{s}| < k$
for every node $u$ in $P$.
This implies that
$E(P) \subseteq \ProtectedE^{s + 1}$
(resp.,
$V(P) \subseteq \ProtectedV^{s + 1}$)
due to \Obs{}~\ref{observation:protected-edge-remains} (resp.,
\Obs{}~\ref{observation:protected-node-remains}).
The assertion is now established by induction on
$s = t, t + 1, \dots, t' - 1$.
\end{proof}

\subsubsection{Post-Stabilization Dynamics.}
\label{section:AU:post-stabilization-dynamics}
In this section, we show that the stabilization of \AlgAU{} is reduced to
reaching a good graph.
This is stated formally in the following two lemmas.

\begin{lemma} \label{lemma:all-good-remains}
If $G$ is good at time $t$, then $G$ remains good at time
$t + 1$.
\end{lemma}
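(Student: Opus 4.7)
The plan is to leverage how much the good-graph assumption at time $t$ restricts the possible transitions during step $t$. Since $G$ is good at time $t$, every node is protected and no node is faulty (a good node senses its own turn, which must therefore be able). My strategy is first to rule out every transition in step $t$ other than type AA, and then to verify that these AA transitions preserve edge-protection on every edge.

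For the transition pruning, I would observe that no type FA transition can fire because there are no faulty nodes in $V$, and no type AF transition can fire either: its clause (1) requires an unprotected node, which does not exist at time $t$, and its clause (2) requires sensing a faulty turn $\Faulty{\ell'}$, but no turn appearing in $G$ is faulty. Hence the only transitions in step $t$ are of type AA, each of which keeps its performer in an able turn. In particular, no node is faulty at time $t+1$, so no node senses a faulty turn at time $t+1$, and it remains only to show that every edge stays protected.

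For the edge-protection step, fix $e = (u, v) \in E$. The bulk of the work is already done by Observation \ref{observation:protected-edge-remains}, which settles every edge whose endpoint levels at time $t$ are not the wrap-around pair $\{-k, k\}$. The hard part, which Observation \ref{observation:protected-edge-remains} explicitly excludes, is precisely this wrap-around case; I would handle it by hand as follows. Suppose $\lambda_u^t = k$ and $\lambda_v^t = -k$. An AA transition at $u$ (if it fires) sends $u$ to $\forward^{+1}(k) = -k$, which is harmless. Meanwhile, the AA guard at $v$ requires $\Lambda_v^t \subseteq \{-k, \forward^{+1}(-k)\} = \{-k, -k+1\}$, which fails because $\lambda_u^t = k \in \Lambda_v^t$; hence $v$ stays at $-k$. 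The endpoints of $e$ therefore end up either both at $-k$ or at the adjacent pair $\{k, -k\}$, so $e$ remains protected at time $t+1$. Combining both steps, every node is protected at time $t+1$ and no node senses a faulty turn, so every node is good and $G$ remains good at time $t+1$.
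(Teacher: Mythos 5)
Your proposal is correct and follows essentially the same route as the paper's proof: rule out AF and FA transitions using goodness, invoke Observation~\ref{observation:protected-edge-remains} for all edges except the wrap-around pair $\{-k,k\}$, and check that pair directly. Your explicit verification of the wrap-around case merely spells out a step the paper states more tersely.
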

\begin{proof}
If all nodes are good at time $t$, then the only possible state transitions in
step $t$ are of type AA.
Observing that an edge
$(u, v)$
with
$\lambda_{u}^{t} = k$
and
$\lambda_{v}^{t} = -k$
does not become non-protected via type AA transitions, we conclude by
\Obs{}~\ref{observation:protected-edge-remains} that
$\ProtectedE^{t + 1} = E$
and hence,
$\ProtectedV^{t + 1} = V$.
Since a type AA transition does not change the turn of a node from able
to faulty, it follows that all nodes remain able at time
$t + 1$,
hence all nodes are good at time
$t + 1$.
\end{proof}

\begin{lemma} \label{lemma:all-good-ensures-fast-progress}
Assume that $G$ is good at time $t$.
For
$i = 0, 1, \dots$,
each node
$v \in V$
experiences at least $i$ type AA transitions during the time
interval
$\left[ t, \varrho^{D + i}(t) \right)$.
\end{lemma}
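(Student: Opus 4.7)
The approach is to use the preservation of goodness (\Lem{}~\ref{lemma:all-good-remains}) to eliminate all transition types other than AA, and then to track a ``rear'' clock whose advance lower-bounds the progress of every node. Once $G$ is good at time $t$, \Lem{}~\ref{lemma:all-good-remains} implies that $G$ is good at every time $s \geq t$, so every state transition executed at or after time $t$ is of type AA, the graph stays protected, and by \Obs{}~\ref{observation:contiguous-levels} the levels present in $G$ at time $s$ occupy a contiguous arc of length at most $D$ in the $\forward$-cyclic order. To avoid cyclic bookkeeping, I would lift levels to $\Integers$ once: pick at time $t$ a value $\tilde{\lambda}_v^t \in \Integers$ for every $v$ such that the lifts project to $\lambda_v^t$ along $\forward$ and form an interval of width at most $D$; then set $\tilde{\lambda}_v^s = \tilde{\lambda}_v^t + c_v^s$, where $c_v^s$ is the number of type AA transitions $v$ executes during $[t, s]$. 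Since every AA transition advances a node's level by $\forward^{+1}$, $c_v^s$ is exactly the quantity the lemma asks us to lower bound, and protectedness yields $|\tilde{\lambda}_u^s - \tilde{\lambda}_v^s| \leq 1$ for every edge $(u,v)$ and every $s \geq t$.

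Let $\tilde{B}^s = \min_{v \in V} \tilde{\lambda}_v^s$. The core step is to show that $\tilde{B}^{\varrho(s)} \geq \tilde{B}^s + 1$ for every $s \geq t$. Fix any node $v$ attaining the minimum $\tilde{\lambda}_v^s = \tilde{B}^s$. At time $s$, every neighbor $u$ of $v$ satisfies $\tilde{\lambda}_u^s \in \{\tilde{B}^s, \tilde{B}^s + 1\}$. The key sub-claim is that so long as $v$ remains at lift $\tilde{B}^s$, no neighbor of $v$ ever leaves the set $\{\tilde{B}^s, \tilde{B}^s + 1\}$: a neighbor at lift $\tilde{B}^s + 1$ cannot satisfy its own AA precondition while $v$ -- a strictly lower-lift neighbor -- persists at $\tilde{B}^s$, and a neighbor at lift $\tilde{B}^s$ can ascend at most once, to $\tilde{B}^s + 1$, for the same reason (noting $\forward^{+2}(\ell) \neq \ell$ because $2k = 6D+4 > 2$). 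Consequently, at the first activation of $v$ during $[s, \varrho(s))$ the AA precondition $\Lambda_v \subseteq \{\lambda_v^s, \forward^{+1}(\lambda_v^s)\}$ still holds, so $v$ executes an AA transition and moves to lift $\tilde{B}^s + 1$. Applying the same reasoning to every node whose lift equals $\tilde{B}^s$ at time $s$ shows that no node remains at lift $\tilde{B}^s$ at time $\varrho(s)$, hence $\tilde{B}^{\varrho(s)} \geq \tilde{B}^s + 1$.

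Iterating the core step $D + i$ times yields $\tilde{B}^{\varrho^{D+i}(t)} \geq \tilde{B}^t + D + i$, and therefore, for every node $v$, $c_v^{\varrho^{D+i}(t)} = \tilde{\lambda}_v^{\varrho^{D+i}(t)} - \tilde{\lambda}_v^t \geq \tilde{B}^t + D + i - \tilde{\lambda}_v^t \geq i$, using the initial width bound $\tilde{\lambda}_v^t - \tilde{B}^t \leq D$. The main obstacle in carrying out this plan is the sub-claim pinning the neighbors of the rear-attaining $v$ to the two-level corridor $\{\tilde{B}^s, \tilde{B}^s + 1\}$; it amounts to a small case analysis using protectedness together with the precise form of the AA precondition, relying on the fact that while $v$ itself is not activated, $v$ acts as a blocker preventing any of its neighbors from racing two steps ahead of $\tilde{B}^s$.
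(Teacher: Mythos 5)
Your proposal is correct and follows essentially the same route as the paper's proof: both track the minimum (lifted) level, show it advances by at least one per round because every rear node completes an AA transition upon its first activation in that round, and then convert the initial width bound $d(0)\leq D$ from \Obs{}~\ref{observation:contiguous-levels} into the claimed per-node count. Your explicit ``blocker'' sub-claim merely spells out a step the paper leaves implicit, and the integer lift is a clean way to handle the cyclic wrap-around.
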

\begin{proof}
\Lem{}~\ref{lemma:all-good-remains} ensures that all nodes remain
good, and in particular protected, from time $t$ onwards.
For
$i = 0, 1, \dots$,
let
$\tau(i) = \varrho^{i}(t)$
and let $\ell_{\min}(i)$ and $d(i)$ be the level $\ell$ and integer $d$
promised in \Obs{}~\ref{observation:contiguous-levels} when applied to
time $\tau(i)$.
Since all nodes are good throughout the time interval
$I = [\tau(i), \tau(i + 1))$,
it follows that every node
$v \in \Level^{\tau(i)} \left( \ell_{\min}(i) \right)$
experiences at least one type AA transition during $I$ (in particular,
$v$ experiences a type AA transition upon its first activation during
$I$), hence
$\ell_{\min}(i + 1) > \ell_{\min}(i)$.
The assertion follows by \Obs{}~\ref{observation:contiguous-levels}
ensuring that
$d(0) \leq D$.
\end{proof}

\subsubsection{Towards an Out-Protected Graph.}
\label{section:AU:towards-out-protected}
Our goal in in the remainder of \Sect{}~\ref{section:AU:analysis} is to
establish an upper bound on the time it takes until the graph becomes good.
In the current section, we make the first step towards achieving this goal by
bounding the time it takes for the graph to become out-protected, starting
with the following lemma.

\begin{lemma} \label{lemma:bound-FA-transition-time}
Assume that $G$ is $\ell$-out-protected,
$2 \leq |\ell| \leq k$,
at time $t$.
If the turn of a node
$v \in V$
at time $t$ is $\Faulty{\ell}$, then $v$ experiences a type FA transition
before time
$\varrho^{2 (k - |\ell|) + 1}(t)$.
\end{lemma}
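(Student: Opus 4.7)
The plan is to prove the lemma by (strong) induction on $m = k - |\ell|$, the number of levels strictly outwards of $\ell$. The key structural fact I would exploit, before doing any induction, is that by \Obs{}~\ref{observation:ell-out-protected-remains} the graph $G$ remains $\ell^*$-out-protected at all later times for every $\ell^* \in \Outwards^{\geq}(\ell)$, and in particular $v$ itself (being at level $\ell$) is out-protected throughout. This means no neighbor of $v$ ever resides at a level in $\Outwards^{\gg}(\ell)$, so the only same-sign level in $\Outwards^{>}(\ell)$ that $v$ can possibly sense is the single level $\ell' := \outwards^{+1}(\ell)$. The whole proof then reduces to showing that $v$'s neighborhood gets rid of level $\ell'$ within $\varrho^{2m}(t)$.

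The base case $m=0$ (equivalently, $|\ell| = k$) is immediate: $\Outwards^{>}(\ell) = \emptyset$, so $v$'s FA precondition is vacuous and $v$ performs FA upon its first activation, which happens by $\varrho^{1}(t) = \varrho^{2m+1}(t)$. For the inductive step I would proceed in two phases. \textbf{Phase 1 (1 round):} By time $\varrho^{1}(t)$, every neighbor $u$ of $v$ that was at $\Able{\ell'}$ has moved to $\Faulty{\ell'}$. Indeed, $u$ senses $v$'s turn $\Faulty{\ell} = \Faulty{\outwards^{-1}(\ell')}$, so the second AF condition fires; meanwhile $u$ cannot do AA (it senses a faulty neighbor, hence is not good) nor FA (it is able), so AF is the only possible transition at its first activation, and $v$'s state is fixed at $\Faulty{\ell}$ throughout this window (a faulty node's only transition is FA, which still fails to be enabled). \textbf{Phase 2 ($2m-1$ rounds via IH):} Apply the inductive hypothesis to $\ell'$ (with $k - |\ell'| = m-1$) at time $\varrho^{1}(t)$: every node at $\Faulty{\ell'}$ at that time performs FA by $\varrho^{2(k-|\ell'|)+1}(\varrho^{1}(t)) = \varrho^{2m}(t)$, landing at $\Able{\outwards^{-1}(\ell')} = \Able{\ell}$. (Neighbors already at $\Faulty{\ell'}$ at time $t$ are covered by the IH applied at $t$, giving an even earlier bound of $\varrho^{2m-1}(t)$.)

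Next I would verify the non-trivial "no new arrivals" sub-claim: between $\varrho^{1}(t)$ and the moment $v$ finally performs FA, no neighbor of $v$ can (re-)enter level $\ell'$. The possible entry routes to $\Able{\ell'}$ are AA from $\Able{\ell}$ (blocked because such a neighbor senses $v$'s $\Faulty{\ell}$ and is therefore not good) or FA from $\Faulty{\outwards^{+2}(\ell)}$ (blocked because $v$'s out-protection rules out any neighbor at that level at any time); and the only way into $\Faulty{\ell'}$ is via AF from $\Able{\ell'}$, which is already unreachable. Hence by $\varrho^{2m}(t)$, and persistently thereafter while $v$ stays faulty, $v$ has no neighbor at any level in $\Outwards^{>}(\ell)$. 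At $v$'s first activation after $\varrho^{2m}(t)$, which occurs by $\varrho^{2m+1}(t)$, the FA precondition is satisfied and $v$ performs FA, completing the induction.

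The main obstacle I anticipate is the new-arrivals step: I need to chase down every transition that could possibly reintroduce level $\ell'$ into $v$'s sensed set and use the precise form of the AA/AF/FA enabling conditions (together with $v$ still being faulty and the $\ell$-out-protection invariant) to rule each one out. Everything else, in particular the clean $1 + (2m-1) + 1 = 2m+1$ round budget, falls out once this sub-claim is in hand.
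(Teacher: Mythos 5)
Your proposal is correct and follows essentially the same route as the paper's proof: the same induction on $k - |\ell|$, the same one-round phase forcing the $\Able{\outwards^{+1}(\ell)}$ neighbors into $\Faulty{\outwards^{+1}(\ell)}$ via the AF rule, the same application of the inductive hypothesis at time $\varrho^{1}(t)$, and the same ``no new arrivals'' argument (AA blocked because neighbors of the faulty $v$ are not good; FA from further out blocked by $v$'s out-protection). The only cosmetic difference is that you spell out the no-new-arrivals case analysis slightly more explicitly than the paper does.
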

\begin{proof}
\Obs{}~\ref{observation:out-protected-node-remains} ensures that
$v \in \OutProtectedV^{t'}$
for every
$t' \geq t$.
For
$i = 0, 1, \dots$,
let
$\tau(i) = \varrho^{i}(t)$.
We prove by induction on
$k - |\ell|$
that $v$ experiences a type FA transition before time
$\tau(2 (k - |\ell|) + 1)$,
thus establishing the assertion.
For the induction's base, notice that if the turn of node $v$ at time $t$ is
$\Faulty{k}$ (resp., $\Faulty{-k}$), then $v$ is guaranteed to experience a
type FA transition, moving to state
$\Able{k - 1}$
(resp.,
$\Able{-k + 1}$),
upon its next activation and in particular before time
$\varrho(t) = \tau(1)$.

Assume that
$2 \leq |\ell| \leq k - 1$.
If $v$ is in turn $\Faulty{\ell}$ when a neighbor $u$ of $v$ in turn
$\Able{\outwards^{+1}(\ell)}$
is activated, then $u$ experiences a type AF transition, moving to state
$\Faulty{\outwards^{+1}(\ell)}$.
Moreover, as long as $v$ is faulty, no neighbor of $v$ can move from level
$\ell$ to level $\outwards^{+1}(\ell)$.
Since $v$ has no neighbors in levels belonging to $\Outwards^{\gg}(\ell)$
(recall that $v$ is out-protected), it follows that as long as $v$ does not
experience a type FA transition, no neighbor of $v$ can move to
level $\outwards^{+1}(\ell)$ from another level and thus, no neighbor of $v$
can move to turn $\Able{\outwards^{+1}(\ell)}$ from another turn.
Therefore, it is guaranteed that at time $\tau(1)$, all neighbors $u$ of $v$
whose level satisfies
$\lambda_{u}^{\tau(1)} = \outwards^{+1}(\ell)$
are faulty.
By the inductive hypothesis, these nodes $u$ experience a type FA transition,
moving to turn $\Able{\ell}$, before time
$\tau(1 + 2 (k - |\ell| - 1) + 1) = \tau(2 (k - |\ell|))$.
In the subsequent activation of $v$, which occurs before time
$\varrho(\tau(2 (k - |\ell|))) = \tau(2 (k - |\ell|) + 1)$,
$v$ experiences a type FA transition, thus establishing the assertion.
\end{proof}

\Lem{}~\ref{lemma:bound-FA-transition-time} is the main ingredient in proving
the following key lemma.

\begin{lemma} \label{lemma:levels-get-closer}
Consider an edge
$(u, v) \in E - \ProtectedE^{t}$
with
$\lambda_{u}^{t} < \lambda_{v}^{t}$.
If $G$ is $\ell$-out-protected at time $t$ for
$\ell \in \{ \lambda_{u}^{t}, \lambda_{v}^{t} \}$,
then there exists a time
$t < t^{*} \leq \varrho^{2 (k - |\ell|) + 2}(t_{0})$
such that \\
(1)
$\lambda_{u}^{t^{*}} \geq \lambda_{u}^{t}$; \\
(2)
$\lambda_{v}^{t^{*}} \leq \lambda_{v}^{t}$;
and \\
(3)
at least one of the inequalities in (1) and (2) is strict.
\end{lemma}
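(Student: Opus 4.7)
The plan is to single out an endpoint $w \in \{u, v\}$ whose level $\lambda_w^t$ inherits the $\ell$-out-protection of $G$ and whose type FA transition moves its level in the direction demanded by the claim; the main work then consists of invoking \Lem{}~\ref{lemma:bound-FA-transition-time} on $w$. Concretely, I would take $w = v$ when $\ell > 0$ and $w = u$ when $\ell < 0$, so that the sign of $\lambda_w^t$ matches that of $\ell$. This placement guarantees $\lambda_w^t \in \Outwards^{\geq}(\ell)$ (either because $\lambda_w^t = \ell$, or because $\lambda_w^t$ lies strictly outwards from $\ell$ on the same side, using $\lambda_u^t < \lambda_v^t$) and $|\lambda_w^t| \geq |\ell|$. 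Whenever $|\ell| \geq 2$ or $\ell$ has the same sign as the other endpoint, this forces $|\lambda_w^t| \geq 2$, so the state $\Faulty{\lambda_w^t}$ is defined and the inward move $\outwards^{-1}(\lambda_w^t)$ strictly decreases $\lambda_v$ when $w = v$ and strictly increases $\lambda_u$ when $w = u$, matching the strict inequality required in (2) or (1).

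With $w$ fixed, the first stage of the argument shows that within time $\varrho(t)$ either the claim already holds or $w$ has entered $\Faulty{\lambda_w^t}$. While $(u,v)$ remains non-protected, \Obs{}~\ref{observation:levels-dont-get-apart} iterated across $[t, s]$ keeps $\lambda_u$ non-decreasing and $\lambda_v$ non-increasing; if the edge happens to become protected at some $s \leq \varrho(t)$, the two levels must have closed into cyclic adjacency while staying strictly ordered, so at least one of (1), (2) is already strict and $t^{*} = s$ works. Otherwise $(u,v)$ is still non-protected when $w$ is first activated in $[t, \varrho(t))$: $w$ is not protected, no type AA transition is available, and the ``not protected'' branch of the type AF condition forces $w$ into $\Faulty{\lambda_w^t}$ before time $\varrho(t)$.

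In the second stage, \Obs{}~\ref{observation:ell-out-protected-remains} combined with the inclusion $\Outwards^{\geq}(\lambda_w^t) \subseteq \Outwards^{\geq}(\ell)$ shows that $G$ is $\lambda_w^t$-out-protected at every time from $t$ onwards, so \Lem{}~\ref{lemma:bound-FA-transition-time} applied to $w$ from the moment it enters $\Faulty{\lambda_w^t}$ yields a type FA transition before time $\varrho^{2(k-|\lambda_w^t|)+1}(\varrho(t))$, which is at most $\varrho^{2(k-|\ell|)+2}(t)$ by monotonicity of $\varrho$ and $|\lambda_w^t| \geq |\ell|$. Letting $t^{*}$ be the step immediately after this FA, the strict inequality on $\lambda_w$ delivers (3), and the non-strict inequalities (1) and (2) persist at $t^{*}$ by iteration of \Obs{}~\ref{observation:levels-dont-get-apart} across $[t, t^{*}]$.

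The main obstacle is the mixed-sign corner case $|\ell| = 1$: when $\ell = \lambda_v^t = 1$ or, symmetrically, $\ell = \lambda_u^t = -1$, the default choice of $w$ lands at a level that does not admit any AF or FA transition (no $\Faulty{\pm 1}$ state exists), while the other endpoint sits on the opposite sign side and is not automatically $\ell$-out-protected. Handling this case cleanly requires exploiting the generous time budget $\varrho^{2k}(t)$ attached to $|\ell| = 1$ and bootstrapping a cascade of FA transitions that starts at the vacuously out-protected extreme levels $\pm k$ and $\pm(k-1)$ and propagates inward until the relevant endpoint's level finally shifts.
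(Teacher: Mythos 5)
Your proposal follows essentially the same route as the paper's proof: drive one endpoint of the non-protected edge into its faulty turn within one round (unless a level has already changed, in which case the claim holds immediately via \Obs{}~\ref{observation:levels-dont-get-apart}), then invoke \Lem{}~\ref{lemma:bound-FA-transition-time} to force a level-changing type FA transition within the stated time budget. Your sign-based rule for choosing which endpoint to push is a harmless variant of the paper's choice (the paper pushes the endpoint whose level equals $\ell$); your rule is in fact slightly more robust, since the endpoint you pick always lies in $\Outwards^{\geq}(\ell)$ and has absolute level at least $2$ except exactly when $|\ell| = 1$ and that endpoint sits at level $\ell$ itself. The $|\ell| = 1$ corner case that you flag and leave open is not handled by the paper's proof either --- its AF step tacitly assumes $2 \leq |\ell| \leq k$ --- and it never arises in the lemma's applications (\Lem{}~\ref{lemma:ell-out-protected-advances} and \Lem{}~\ref{lemma:non-protected-node-becomes-protected} always invoke the lemma with an out-protected level of absolute value at least $2$), so your not closing it is not a defect relative to the paper.
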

\begin{proof}
By \Obs{}~\ref{observation:levels-dont-get-apart}, it is sufficient to
prove that at least one of the two nodes $u$ and $v$ changes its level before
time
$\varrho^{2 (k - |\ell|) + 2}(t)$.
Assume that the graph is $\ell$-out-protected at time $t$ for
$\ell = \lambda_{v}^{t}$;
the proof for the case that
$\ell = \lambda_{u}^{t}$
is analogous.
Let
$t \leq t_{0} < \varrho(t)$
be the first time following $t$ at which $v$ is activated and based on that,
define the time
$t \leq t_{1} \leq \varrho(t)$
as follows:
if $v$ is in turn $\Faulty{\ell}$ at time $t$, then set
$t_{1} = t$;
otherwise ($v$ is in turn $\Able{\ell}$ at time $t$), set
$t_{1} = t_{0} + 1$
and notice that $v$ experiences a type AF transition in step $t_{0}$ (due to
the non-protected edge
$(v, v')$)
unless $v'$ changes its level beforehand.
In both cases, we know that $v$ is in turn $\Faulty{\ell}$ at time $t_{1}$.
Since \Obs{}~\ref{observation:ell-out-protected-remains} guarantees that
the graph is $\ell$-out-protected at time $t_{1}$, we can apply
\Lem{}~\ref{lemma:bound-FA-transition-time} to $v$, concluding that $v$
experiences a type FA transition, and in particular changes its level, before
time
$\varrho^{2 (k - |\ell|) + 1}(t_{1})
\leq
\varrho^{2 (k - |\ell|) + 2}(t)$,
thus establishing the assertion.
\end{proof}

Building on \Lem{}~\ref{lemma:levels-get-closer}, we can now bound the time it
takes for the graph to become $\ell$-out-protected after it is already
$\outwards^{+1}(\ell)$-out-protected.

\begin{lemma} \label{lemma:ell-out-protected-advances}
Fix a level
$1 \leq |\ell| \leq k - 1$
and assume that $G$ is $\outwards^{+1}(\ell)$-out-protected at time $t$.
Then, $G$ is $\ell$-out-protected at time
$\varrho^{(k - |\ell|) (k - |\ell| - 1)}(t)$.
\end{lemma}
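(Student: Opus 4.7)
The plan is to progressively eliminate ``bad'' adjacencies in decreasing order of distance from level $\ell$. First, \Obs{}~\ref{observation:ell-out-protected-remains} ensures that the hypothesis $\outwards^{+1}(\ell)$-out-protectedness persists throughout, so $G$ remains $\outwards^{+j}(\ell)$-out-protected for every $1 \leq j \leq k-|\ell|$. The only obstruction to $\ell$-out-protectedness is therefore the presence of a non-protected edge between a node at level $\ell$ and a node at level $\outwards^{+m}(\ell)$ for some $2 \leq m \leq k - |\ell|$; I will call such an edge \emph{$m$-bad}. For each $m \in \{2, \ldots, k-|\ell|+1\}$, let $P_m$ denote the property that no $m'$-bad edge exists for any $m' \geq m$, so that $P_2$ coincides with $\ell$-out-protectedness and $P_{k-|\ell|+1}$ holds vacuously.

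I would prove by downward induction on $m$, running from $k-|\ell|+1$ down to $2$, that $P_m$ holds at time $\tau_m = \varrho^{\sum_{j=m}^{k-|\ell|}(2(k-|\ell|-j)+2)}(t)$ and persists afterwards; the base case is $\tau_{k-|\ell|+1} = t$, vacuously. For the step, assuming $P_{m+1}$ holds throughout $[\tau_{m+1}, \infty)$, I would apply \Lem{}~\ref{lemma:levels-get-closer} to each $m$-bad edge $(u, v)$ present at $\tau_{m+1}$, instantiating the lemma's out-protected level parameter with $\outwards^{+m}(\ell)$, which is valid because $G$ is $\outwards^{+m}(\ell)$-out-protected. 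The lemma yields a time $t^{*} \leq \varrho^{2(k-|\ell|-m)+2}(\tau_{m+1}) = \tau_m$ at which the levels of $u$ and $v$ have become strictly closer; \Obs{}~\ref{observation:levels-dont-get-apart} combined with \Obs{}~\ref{observation:protected-edge-remains} then rules out any later reversion of this specific edge to an $m$-bad configuration, since the edge either remains non-protected with monotonically non-expanding level gap, or becomes (and stays) protected with adjacent levels. Summing the exponents $2(k-|\ell|-j)+2$ as $j$ ranges over $2, \ldots, k-|\ell|$ gives $(k-|\ell|)(k-|\ell|-1)$, matching the claimed bound.

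The hard part is ruling out the creation of \emph{new} $m$-bad edges during $[\tau_{m+1}, \tau_m]$, since \Lem{}~\ref{lemma:levels-get-closer} only governs the edges present at $\tau_{m+1}$. I would handle this by a case analysis on the single-step transition that could spawn such an edge at time $s+1$: either $u$ moves to level $\ell$ or $v$ moves to level $\outwards^{+m}(\ell)$ in step $s$. Any transition leaves the level unchanged or updates it to an adjacent level, so the other endpoint's level at $s$ must lie within a small set close to $\ell$ or to $\outwards^{+m}(\ell)$, respectively. Combining this with the strict preconditions of type AA transitions (the node must be \emph{good}, hence have no faulty neighbor, and its signal must be confined to its current and forward level), the preconditions of type FA transitions (blocked by sensing any level strictly outwards), and the inductive hypothesis $P_{m+1}$, I expect every case to yield a contradiction. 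This same case analysis simultaneously verifies that $P_m$ is preserved at every subsequent step, closing the induction and the proof.
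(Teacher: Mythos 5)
Your proposal is correct and runs on the same engine as the paper's proof: a downward induction over the outward distance from $\ell$, with each unit of descent charged $2(k-|\ell|-m)+2$ rounds via \Lem{}~\ref{lemma:levels-get-closer} (instantiated at the outward endpoint's level), and the telescoping sum giving exactly $(k-|\ell|)(k-|\ell|-1)$. The organizational difference is worth noting, though. You stage the induction \emph{globally}, eliminating all $m$-bad edges in the graph before moving to $m-1$, which forces you to confront the creation of fresh $m$-bad edges head-on; you flag this as ``the hard part'' and only sketch the case analysis. The paper instead first reduces, via \Obs{}~\ref{observation:out-protected-node-remains} and \Obs{}~\ref{observation:level-change-yields-out-protected}, to a single node $v$ lying in $\bigcap_{t \le t' \le t^{*}} \Level^{t'}(\ell)$ (any node that ever changes level is out-protected forever after, so only nodes pinned at $\ell$ matter), and then tracks each neighbor of $v$ separately: neighbors starting in $\Outwards^{\leq}(\ell) \cup \{\outwards^{+1}(\ell)\}$ cannot escape into $\Outwards^{\gg}(\ell)$ while $v$ sits at $\ell$ (\Obs{}~\ref{observation:protected-edge-remains}), and neighbors starting in $\Outwards^{\gg}(\ell)$ descend monotonically one level at a time. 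This per-node, per-neighbor decomposition dissolves the ``new bad edge'' problem almost entirely. Your case analysis does close --- case (a), a node arriving at level $\ell$, is killed immediately by \Obs{}~\ref{observation:level-change-yields-out-protected} (which you cite in the setup but not in the case analysis, where it is the cleanest tool), and case (b), a node arriving at $\outwards^{+m}(\ell)$, is killed by the goodness precondition of type AA transitions or by $P_{m+1}$ for type FA transitions --- but you should be aware that simultaneous moves of both endpoints need to be threaded through explicitly, and the paper's reduction is how one avoids having to do so.
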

\begin{proof}
For
$i = 0, 1, \dots$,
let
$\tau(i) = \varrho^{i}(t)$
and fix
$t^{*} = \tau((k - |\ell|) (k - |\ell| - 1))$.
By \Obs{}~\ref{observation:level-change-yields-out-protected}, it
suffices to prove that
$\bigcap_{t \leq t' \leq t^{*}} \Level^{t'}(\ell)
\subseteq
\OutProtectedV^{t^{*}}$.
To this end, consider a node
$v \in \bigcap_{t \leq t' \leq t^{*}} \Level^{t'}(\ell)$
and notice that by \Obs{}~\ref{observation:out-protected-node-remains},
if $v$ is out-protected at any time
$t \leq t' \leq t^{*}$,
then it remains out-protected subsequently and in particular at time $t^{*}$.
Moreover, \Obs{}~\ref{observation:protected-edge-remains} ensures that
any neighbor of $v$ whose level at time $t$ belongs to
$\Outwards^{\leq}(\ell) \cup \{ \outwards^{+1}(\ell) \}$
cannot move to a level in $\Outwards^{\gg}(\ell)$ as long as $v$ is in level
$\ell$.

So, it remains to consider a neighbor
$u \in \Neighbors(v)$
of $v$ with
$\lambda_{u}^{t} \in \Outwards^{\gg}(\ell)$
and show that the level of $u$ moves inwards and becomes adjacent to $\ell$ by
time $t^{*}$;
indeed, \Obs{}~\ref{observation:protected-edge-remains} ensures that once
$u$ reaches a level adjacent to $\ell$, it cannot move back to a level in
$\Outwards^{\gg}(\ell)$ unless $v$ leaves level $\ell$.
To this end, we define
\[\textstyle
f(\ell^{*})
\, = \,
\sum_{j = |\ell| + 2}^{|\ell^{*}|} (2 (k - j) + 2)
\]
and prove that if
$\lambda_{u}^{t} = \ell^{*} \in \Outwards^{\gg}(\ell)$,
then $u$ reaches level $\outwards^{+1}(\ell)$ by time $\tau(f(\ell^{*}))$.
The assertion is established by observing that
$f(k) = (k - |\ell|) (k - |\ell| - 1)$.

Since the graph $G$ is $\outwards^{+1}(\ell)$-out-protected at time $t$,
\Obs{}~\ref{observation:ell-out-protected-remains} guarantees that $G$ is
$\outwards^{+1}(\ell)$-out-protected at all times subsequent to $t$ and hence,
also $\ell'$-out-protected for every
$\ell' \in \Outwards^{>}(\ell)$.
Therefore, we can repeatedly apply \Lem{}~\ref{lemma:levels-get-closer} to edge
$(u, v)$
and conclude by induction on $\ell'$ that $u$ moves from level
$\ell' \in \Outwards^{\gg}(\ell)$,
$|\ell'| \leq \ell^{*}$,
to level $\outwards^{-1}(\ell')$ by time
\[\textstyle
\tau \left( \sum_{j = |\ell'|}^{|\ell^{*}|} (2 (k - j) + 2) \right) \, .
\]
The proof is then completed by plugging
$\ell' = \outwards^{+2}(\ell)$.
\end{proof}

Since the graph $G$ is $\ell$-out-protected for
$\ell \in \{ -k, -k + 1, k - 1, k \}$
already at time $0$ and since being $1$-out-protected and ($-1$)-out-protected
implies that $G$ is out-protected,
\Lem{}~\ref{lemma:ell-out-protected-advances} yields the following corollary.

\begin{corollary} \label{corollary:graph-becomes-out-protected}
There exists a time
$T_{0} \leq R(O (k^{3}))$
such that $G$ is out-protected at all times
$t \geq T_{0}$.
\end{corollary}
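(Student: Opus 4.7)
The plan is to peel off the levels from the outside inwards, one at a time, repeatedly invoking Lemma~\ref{lemma:ell-out-protected-advances}. The starting point is free of charge: by the definition of out-protection, every node at level $\ell \in \{-k,-k+1,k-1,k\}$ is vacuously out-protected, so $G$ is $\ell$-out-protected at time $0$ for each such $\ell$. In particular, $G$ is already $\outwards^{+1}(k{-}2)$-out-protected and $\outwards^{+1}(-(k{-}2))$-out-protected at time $0$.

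Next I would run the inductive peeling on the positive side. Using Lemma~\ref{lemma:ell-out-protected-advances} with $\ell = k-2$, I obtain $(k-2)$-out-protection after $\varrho^{(k-|\ell|)(k-|\ell|-1)}$ rounds; then I apply the lemma again with $\ell = k-3$, and so on, down to $\ell = 1$. Observation~\ref{observation:ell-out-protected-remains} ensures that once a given level becomes out-protected, it remains so, so the inductive hypothesis needed for each successive application is preserved. The same peeling is carried out in parallel on the negative side, from $\ell = -(k-2)$ down to $\ell = -1$. After both chains finish, $G$ is simultaneously $1$-out-protected and $(-1)$-out-protected, and the concluding remark of the \emph{out-protected} definition then gives that $G$ is out-protected. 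Persistence of this state past $T_0$ follows one more time from Observation~\ref{observation:ell-out-protected-remains}.

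It remains to bound the total number of rounds. Each single peeling step from $\outwards^{+1}(\ell)$-out-protected to $\ell$-out-protected costs $(k-|\ell|)(k-|\ell|-1)$ rounds, so summing over $|\ell| = 1, \ldots, k-2$ on each of the two sides yields
\[
2 \sum_{|\ell|=1}^{k-2} (k-|\ell|)(k-|\ell|-1) \;=\; O(k^{3})
\]
rounds in total. Hence there is a time $T_0 \leq R(O(k^3))$ by which $G$ is out-protected, and by persistence $G$ remains out-protected at every $t \geq T_0$.

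The only mildly delicate point I expect is the bookkeeping of the round operator under the successive applications: each invocation of Lemma~\ref{lemma:ell-out-protected-advances} advances the reference time by some number of rounds, and one has to check that composing the $\varrho$-applications across the two peeling chains still lands inside $R(O(k^3))$. This is routine as long as the peeling is executed sequentially (so that the time parameter fed into the next invocation is exactly the time produced by the previous one), and the sum above still telescopes into an $O(k^3)$ bound on the total number of rounds counted from time $0$.
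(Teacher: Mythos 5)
Your proposal is correct and matches the paper's argument exactly: the paper likewise starts from the vacuous out-protection of levels $\ell \in \{-k, -k+1, k-1, k\}$ at time $0$, peels inwards by iterating Lemma~\ref{lemma:ell-out-protected-advances} on each sign separately down to $|\ell| = 1$, uses Observation~\ref{observation:ell-out-protected-remains} for persistence, and sums the per-step costs to get the $R(O(k^{3}))$ bound. Your bookkeeping remark about composing the round operator is also sound, since $\varrho^{a}(\varrho^{b}(0)) = R(a+b)$.
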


\subsubsection{From an Out-Protected to a Justified Graph.}
\label{section:AU:out-protected-to-justified}
In what follows, we take $T_{0}$ to be the time promised in
Corollary~\ref{corollary:graph-becomes-out-protected} and consider the
execution from time $T_{0}$ onwards.

\begin{definition*}%
[justifiably faulty, unjustifiably faulty, justified]
A node
$v \in V$
whose turn at time $t$ is $\Faulty{\ell}$,
$2 \leq |\ell| \leq k$,
is said to be \emph{justifiably faulty} if either
(1)
$v \notin \ProtectedV^{t}$;
or
(2)
$v$ admits a neighbor whose turn at time $t$ is
$\Faulty{\outwards^{-1}(\ell)}$.
A faulty node that is not justifiably faulty is said to be \emph{unjustifiably
faulty}.
We say that the graph $G$ is \emph{justified} if it does not admit any
unjustifiably faulty node.
\end{definition*}

A key feature of \AlgAU{} is that nodes do not become unjustifiably faulty
once the graph is out-protected.

\begin{lemma} \label{lemma:no-fresh-unjustifiably-faulty}
If a node
$v \in V$
is not unjustifiably faulty at time
$t \geq T_{0}$,
then $v$ is not unjustifiably faulty
at time
$t + 1$.
\end{lemma}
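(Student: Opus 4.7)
The plan is to prove the contrapositive: assume $v$ is unjustifiably faulty at time $t+1$, with turn $\Faulty{\ell}$ (so $|\ell| \geq 2$), $v \in \ProtectedV^{t+1}$, and with no neighbor in turn $\Faulty{\outwards^{-1}(\ell)}$ at $t+1$, and derive a contradiction with the assumption that $v$ is not unjustifiably faulty at $t$. The case analysis is on $v$'s turn at $t$: either $v$ was $\Able{\ell}$ at $t$ and performed a type AF transition in step $t$, or $v$ was already in $\Faulty{\ell}$ at $t$ (type FA would land $v$ in an able state at $t+1$, contradicting its being faulty). I will treat the AF case as the principal one; the ``stays put'' case follows by a parallel argument using the hypothesis that $v$ is \emph{justifiably} faulty at $t$.

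Suppose $v$ performs type AF in step $t$. The AF rule gives two possibilities: (i) $v \notin \ProtectedV^{t}$, or (ii) $v$ senses $\Faulty{\outwards^{-1}(\ell)}$ at $t$. For (ii), let $u$ witness this. I would observe that $u$ cannot perform type FA in step $t$, because $v$ forces $u$ to sense the level $\ell = \outwards^{+1}(\outwards^{-1}(\ell)) \in \Outwards^{>}(\outwards^{-1}(\ell))$; since $u$ is already faulty, it cannot do AA or AF either, so its turn remains $\Faulty{\outwards^{-1}(\ell)}$ at $t+1$, contradicting our assumption on $v$'s neighbors at $t+1$. For (i), let $u$ be a neighbor with $\lambda_{u}^{t}$ not adjacent to $\ell$. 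Since $v$'s AF transition does not change its level, $\lambda_{v}^{t+1} = \ell$, and it suffices to show $\lambda_{u}^{t+1}$ remains non-adjacent to $\ell$---this will contradict $v \in \ProtectedV^{t+1}$.

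To establish this persistence, I first note that $u$ cannot perform AA in step $t$, because the non-adjacent neighbor $v$ already makes $u$ unprotected at $t$. The only options left are AF (no level change) or FA (magnitude drops by one). I then split on whether $\Sign(\lambda_{u}^{t}) = \Sign(\ell)$ or not. In the same-sign subcase, out-protectedness (applicable since $t \geq T_{0}$, by Corollary~\ref{corollary:graph-becomes-out-protected}) forces $|\lambda_{u}^{t}| \leq |\ell| + 1$, and non-adjacency sharpens this to $|\lambda_{u}^{t}| \leq |\ell| - 2$; consequently $\ell \in \Outwards^{>}(\lambda_{u}^{t})$ blocks $u$ from performing FA, leaving $\lambda_{u}$ frozen. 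In the opposite-sign subcase, $u$ may indeed perform FA, but the only cross-sign adjacencies in the level cycle are $\{-1, 1\}$ and $\{-k, k\}$; the first is excluded because $|\ell| \geq 2$ (a prerequisite for AF), and the second because FA strictly decreases magnitude, so $|\lambda_{u}^{t+1}| < |\lambda_{u}^{t}| \leq k$.

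The ``stays put'' case requires essentially no new work: the hypothesis that $v$ is justifiably faulty at $t$ gives the same two kinds of witnesses (a non-adjacent neighbor or a neighbor in $\Faulty{\outwards^{-1}(\ell)}$), and the two persistence arguments above apply verbatim, since they depend only on $\lambda_{v}$ equalling $\ell$ at both $t$ and $t+1$. I expect the main obstacle to be the opposite-sign subcase of (i): it requires a careful inspection of the cyclic structure of the level set, and explicitly uses both that the AF trigger enforces $|\ell| \geq 2$ and that FA shrinks magnitude strictly, closing off the two wrap-around adjacencies that could otherwise ruin the argument.
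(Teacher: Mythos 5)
Your proposal is correct and follows essentially the same route as the paper's proof: in both cases ($v$ performs an AF transition, or was already justifiably faulty and stays put), you exhibit the witnessing neighbor $u$ and show that the witness persists to time $t+1$, using out-protectedness after $T_{0}$ and the cyclic structure of the levels exactly as the paper does. The only cosmetic difference is that your same-sign subcase of (i) is in fact vacuous --- once $t \geq T_{0}$, out-protectedness of $u$ itself already forces opposite signs across any non-protected edge --- but the freeze argument you give for it is valid, so nothing is lost.
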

\begin{proof}
Assume that node $v$ is either
(1)
able at time $t$ and experiences a type AF transition in step $t$;
or
(2)
justifiably faulty at time $t$ (and remains faulty at time
$t + 1$).
In both cases, we know that $v$ admits a neighboring node
$u \in \Neighbors(v)$
that satisfies at least one of the following two conditions:
(i)
$\lambda_{u}^{t}$ is not adjacent to $\lambda_{v}^{t}$;
or
(ii)
$\lambda_{u}^{t} = \outwards^{-1}(\lambda_{v}^{t})$
and $u$ is faulty at time $t$.

Assuming that condition (i) holds, we know that
$\Sign(\lambda_{u}^{t}) \neq \Sign(\lambda_{v}^{t})$
as $G$ is out-protected at time $t$.
Since $v$ is faulty at time
$t + 1$,
it follows that
$\lambda_{v}^{t + 1} = \lambda_{v}^{t}$
with
$|\lambda_{v}^{t + 1}| \geq 2$.
Thus,
$\Sign(\lambda_{u}^{t + 1}) = \Sign(\lambda_{u}^{t})$
and edge
$(u, v)$
remains non-protected at time
$t + 1$.
Assuming that condition (ii) holds, node $u$ cannot experience a type FA
transition in step $t$ as
$\lambda_{v}^{t} = \outwards^{+1}(\lambda_{u}^{t}) \in \Lambda_{u}^{t}$,
thus it remains faulty at time
$t + 1$.
Therefore, we conclude that $v$ is justifiably faulty at time
$t + 1$.
\end{proof}

\sloppy
Corollary~\ref{corollary:graph-becomes-justified} is now derived
by combining Corollary~\ref{corollary:graph-becomes-out-protected} and
\Lem{}~\ref{lemma:no-fresh-unjustifiably-faulty}, recalling that
\Lem{}~\ref{lemma:bound-FA-transition-time} ensures that if the graph is
out-protected at time $T_{0}$, then any (justifiably or) unjustifiably faulty
node experiences a type FA transition, and in particular stops being
unjustifiably faulty, before time
$\varrho^{O (k)}(T_{0}) \leq R(O(k^{3}))$.
\par\fussy

\begin{corollary} \label{corollary:graph-becomes-justified}
There exists a time
$T_{0} \leq T_{1} \leq R(O(k^{3}))$
such that $G$ is justified at all times
$t \geq T_{1}$.
\end{corollary}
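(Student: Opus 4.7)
The plan is to combine three ingredients already in hand: (a) Corollary~\ref{corollary:graph-becomes-out-protected} gives a time $T_0 \leq R(O(k^3))$ from which $G$ is out-protected; (b) Lemma~\ref{lemma:no-fresh-unjustifiably-faulty} says that the property of not being unjustifiably faulty is preserved once we are past $T_0$; and (c) Lemma~\ref{lemma:bound-FA-transition-time} guarantees that, in an out-protected regime, any node stuck in a faulty turn $\Faulty{\ell}$ must perform a type FA transition within $O(k-|\ell|)$ rounds. My goal is to show that the unjustifiably faulty nodes present at time $T_0$ are all flushed out within $O(k)$ rounds, and that no new ones ever appear after $T_0$.

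First, I would fix $T_0$ as above and consider any node $v$ that is unjustifiably faulty at some time $t \geq T_0$. Since the graph is out-protected at every time from $T_0$ onwards (by Observation~\ref{observation:out-protected-node-remains} together with the corollary), it is in particular $\ell$-out-protected for every admissible level $\ell$ at every such time. Thus Lemma~\ref{lemma:bound-FA-transition-time} applies uniformly: whatever faulty turn $\Faulty{\ell}$ the node $v$ occupies, it is forced to execute a type FA transition by time $\varrho^{2(k-|\ell|)+1}(t) \leq \varrho^{O(k)}(t)$, at which moment $v$ becomes able and therefore ceases to be unjustifiably faulty.

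Next, I would apply Lemma~\ref{lemma:no-fresh-unjustifiably-faulty}, which ensures that once a node is not unjustifiably faulty at a time $\geq T_0$, it remains so in the subsequent step. Consequently, once every node that was unjustifiably faulty at time $T_0$ has experienced its forced type FA transition, the graph contains no unjustifiably faulty node and will not produce one. Setting $T_1 = \varrho^{O(k)}(T_0)$ therefore yields a time from which $G$ is justified. For the overall bound, $T_0 \leq R(O(k^3))$ combined with an additional $O(k)$ rounds still lies inside $R(O(k^3))$, giving $T_1 \leq R(O(k^3))$ as claimed.

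The only subtle point — and really the sole reason we need Lemma~\ref{lemma:no-fresh-unjustifiably-faulty} rather than just Lemma~\ref{lemma:bound-FA-transition-time} — is that, after a node $v$ leaves its faulty turn via a type FA transition, it might in principle re-enter a faulty turn and do so without the requisite justification from a non-protected incident edge or a faulty neighbor one level inward. Lemma~\ref{lemma:no-fresh-unjustifiably-faulty} rules this scenario out in an out-protected graph, so the monotone emptying of the unjustifiably faulty set is indeed permanent. No further calculation is needed beyond tracking these two bounds.
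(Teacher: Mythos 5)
Your proposal is correct and follows essentially the same route as the paper: it combines Corollary~\ref{corollary:graph-becomes-out-protected}, Lemma~\ref{lemma:bound-FA-transition-time} (to flush out, within $\varrho^{O(k)}(T_0)$, any node that is unjustifiably faulty at time $T_0$), and Lemma~\ref{lemma:no-fresh-unjustifiably-faulty} (to ensure no node becomes unjustifiably faulty afterwards). Your closing remark about why the preservation lemma is indispensable matches the paper's intended use of it.
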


\subsubsection{From a Justified to a Good Graph.}
\label{section:AU:justified-to-good}
In what follows, we take $T_{1}$ to be the time promised in
Corollary~\ref{corollary:graph-becomes-justified} and consider the execution
from time $T_{1}$ onwards.
In the current section, we complete the analysis by up-bounding the time it
takes for the graph to become good following time $T_{1}$, starting with the
following lemma.

\begin{lemma} \label{lemma:protected-and-justified-implies-good}
If $G$ is protected at time
$t \geq T_{1}$,
then $G$ is good at time $t$.
\end{lemma}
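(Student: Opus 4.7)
The plan is to derive a contradiction by assuming $G$ is not good at time $t$, and then tracing back through justifiably faulty nodes to a level where "justifiable faultiness" can no longer be sustained.

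First, I would unpack "good" using its definition: a node is good if it is protected and does not sense any faulty turn, and $G$ is good if all of its nodes are. Since by hypothesis $G$ is protected, every node is protected at time $t$. Thus it suffices to show that no node in $V$ is faulty at time $t$, for if there are no faulty nodes then trivially no node senses a faulty turn, and every (protected) node becomes good.

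Next I would argue by minimality. Suppose, for contradiction, that the set $F = \{v \in V : v \text{ is faulty at time } t\}$ is non-empty, and pick $v \in F$ minimizing $|\lambda_v^t|$. By the definition of faulty turns, $|\lambda_v^t| \geq 2$. Since $t \geq T_1$, Corollary~\ref{corollary:graph-becomes-justified} tells us that $G$ is justified, so $v$ is justifiably faulty. The first clause of the definition of justifiably faulty, namely $v \notin \ProtectedV^t$, cannot hold since $G$ is protected. Therefore the second clause must hold: $v$ has a neighbor $u \in \Neighbors(v)$ whose turn at time $t$ is $\Faulty{\outwards^{-1}(\lambda_v^t)}$. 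But then $u$ is faulty with $|\lambda_u^t| = |\lambda_v^t| - 1 < |\lambda_v^t|$, contradicting the minimality of $|\lambda_v^t|$.

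The main subtlety — and really the only one — is making sure the inward chain of forced faulty neighbors cannot continue forever: the boundary where it breaks is precisely at levels of absolute value~$1$, where no faulty turn exists, which is exactly what the minimality argument exploits. Everything else is just a careful reading of the definitions together with Corollary~\ref{corollary:graph-becomes-justified}, so I do not foresee further technical obstacles.
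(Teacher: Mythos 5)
Your proposal is correct and follows essentially the same argument as the paper: a contradiction via the faulty node minimizing $|\lambda_v^t|$, using Corollary~\ref{corollary:graph-becomes-justified} to force a justifiably faulty neighbor one level inwards. The paper's proof is just a more terse version of yours.
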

\begin{proof}
Assume by contradiction that the graph admits faulty nodes at time $t$ and
among these nodes, let
$v \in V$
be a node that minimizes
$|\lambda_{v}^{t}|$.
Since
$t \geq T_{1}$,
Corollary~\ref{corollary:graph-becomes-justified} ensures that $v$ is
justifiably faulty at time $t$.
The assumption that $G$ is protected implies that $v$ admits a neighbor
$u \in \Neighbors(v)$
whose turn at time $t$ is
$\Faulty{\outwards^{-1}(\lambda_{v}^{t})}$,
in contradiction to the choice of $v$.
\end{proof}

Owing to \Lem{}~\ref{lemma:protected-and-justified-implies-good},
our goal in the remainder of this section is to prove that it does not take
too long after time $T_{1}$ for the graph to become protected.
\Lem{}~\ref{lemma:non-protected-node-becomes-protected} plays a key role in
in achieving this goal.

\begin{lemma} \label{lemma:non-protected-node-becomes-protected}
If a node
$v \in V - \ProtectedV^{t}$
for some time
$t \geq T_{1}$,
then there exists a time
$t \leq t' \leq \varrho^{k (k - 1)}(t)$
such that
$v \in \ProtectedV^{t'}$
with
$\lambda_{v}^{t'} \in \{ -1, 1 \}$.
\end{lemma}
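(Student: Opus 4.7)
The plan is to show that $v$, being unable to perform any type AA transition while non-protected (since it is not good), descends through a sequence of type AF and type FA transitions to a level of absolute value $1$. Throughout, I exploit the fact that $t \geq T_{1} \geq T_{0}$, so Corollary~\ref{corollary:graph-becomes-out-protected} ensures $G$ is out-protected at all times we consider, which is what enables the repeated application of \Lem{}~\ref{lemma:bound-FA-transition-time}.

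I would first bound a single AF--FA cycle at $v$: if $v$ is able at level $\ell$ with $|\ell| \geq 2$ and non-protected, then by the type AF triggering condition it flips to faulty on its next activation, costing at most one round; once faulty at level $\ell$, \Lem{}~\ref{lemma:bound-FA-transition-time} guarantees a type FA transition within $2(k - |\ell|) + 1$ rounds, dropping $v$'s level to $\outwards^{-1}(\ell)$. Concatenating these cycles for $|\ell| = |\lambda_{v}^{t}|, |\lambda_{v}^{t}| - 1, \dots, 2$ gives a total budget of at most $\sum_{j=2}^{k} (2(k-j)+2) = k(k-1)$ rounds, by which time $v$ must have reached a level in $\{-1, 1\}$, unless $v$ became protected at some intermediate level (where, by \Obs{}~\ref{observation:protected-node-remains}, it thereafter remains protected).

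Next, I would argue that at the moment $v$ first occupies a level in $\{-1, 1\}$ it is already protected. By the out-protected property, all same-sign neighbors of $v$ sit at levels within $\{\pm 1, \pm 2\}$, both of which are adjacent to $\pm 1$. The only dangerous case is an opposite-sign neighbor $u$ at a level in $\{\mp 2, \dots, \mp k\}$, which would leave the edge $(u, v)$ non-protected. To rule this out I apply the same AF--FA descent argument symmetrically to $u$: since $(u, v) \in E - \ProtectedE^{t}$, the node $u$ is itself non-protected at time $t$, so it runs its own descent in parallel with $v$ and reaches $\mp 1$ within the same $k(k-1)$-round budget. Using the wrap-around rule $\forward(-1) = 1$, the levels $\pm 1$ and $\mp 1$ are adjacent, so the edge $(u, v)$ is protected as soon as both endpoints have arrived.

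The main obstacle I expect is arguing that the two descents (of $v$ and of each troublesome neighbor $u$) genuinely proceed in parallel without blocking each other: the type FA triggering condition forbids a faulty node from completing its FA transition while it has any neighbor strictly outwards, so a lagging $u$ could in principle delay $v$'s next FA. I would address this with a level-by-level inductive argument in the spirit of \Lem{}~\ref{lemma:ell-out-protected-advances} --- outermost levels descend first and the descent propagates inward --- together with the justified-graph hypothesis of Corollary~\ref{corollary:graph-becomes-justified} to exclude spurious faulty states that might otherwise stall FA transitions. A secondary obstacle is the case in which $v$ becomes protected at some intermediate level before reaching $\pm 1$; here \Obs{}~\ref{observation:protected-node-remains} locks $v$ as protected, after which I would rely on AA-driven rotation around the cyclic group to bring $v$'s level to $\pm 1$ within the same $k(k-1)$ budget.
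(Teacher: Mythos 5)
Your main line is sound and is essentially the route the paper takes: the paper applies \Lem{}~\ref{lemma:levels-get-closer} (which is itself \Lem{}~\ref{lemma:bound-FA-transition-time} packaged per non-protected edge) together with \Obs{}~\ref{observation:levels-dont-get-apart} to conclude that the two endpoints of each edge in $E - \ProtectedE^{t}$ incident to $v$ converge monotonically until they meet at $\{-1, 1\}$, within $\sum_{j=2}^{k} \left( 2(k-j)+2 \right) = k(k-1)$ rounds. Your unbundled version --- concatenated AF--FA cycles at $v$, and symmetrically at each opposite-sign neighbor --- is the same mechanism with the same accounting. Note also that your worry about the two descents ``blocking'' each other is already discharged inside \Lem{}~\ref{lemma:bound-FA-transition-time}: its inductive proof shows that the outward neighbors of a faulty node turn faulty and complete their own FA transitions first, so each invocation of that lemma at level $\ell$ already absorbs any lagging neighbor, and no additional level-by-level induction is needed.

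The one step that would fail is your fallback for the case that $v$ becomes protected at an intermediate level $\ell$ with $|\ell| \geq 2$. ``AA-driven rotation'' from such a level can only move \emph{forward} (never inwards), so reaching $\{-1,1\}$ could require $\Theta(k)$ type AA transitions, each of which demands that $v$ be \emph{good} (not merely protected) and that its whole neighborhood sit on two adjacent levels; nothing guarantees this, and the timing does not fit the $k(k-1)$ budget. The repair is to show the case is vacuous rather than to handle it. Since $t \geq T_{1} \geq T_{0}$, the graph is out-protected, so any edge $(v, v') \in E - \ProtectedE^{s}$ with $s \geq t$ has $\Sign(\lambda_{v}^{s}) \neq \Sign(\lambda_{v'}^{s})$; by \Obs{}~\ref{observation:levels-dont-get-apart} the negative endpoint's level is non-decreasing and the positive endpoint's is non-increasing while the edge is non-protected, so the only adjacent opposite-sign pair the two levels can reach is $\{-1, 1\}$ (monotonicity rules out $\{-k, k\}$). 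Hence every non-protected edge at $v$ --- and therefore $v$ itself --- stays non-protected until $\lambda_{v} \in \{-1, 1\}$, which is precisely what lets your step 2 conclude: $v$ parks at $\pm 1$ (FA needs $|\ell| \geq 2$ and AA needs goodness) until the last troublesome neighbor arrives at $\mp 1$, at which moment $v$ is protected at a level in $\{-1,1\}$.
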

\begin{proof}
Since the graph is out-protected at all times after
$T_{1} \geq T_{0}$,
it follows that if edge
$(v, v') \in E - \ProtectedE^{t}$,
then
(1)
$\Sign(\lambda_{v}^{t}) \neq \Sign(\lambda_{v'}^{t})$;
and
(2)
$\Distance(\lambda_{v}^{t}, \lambda_{v'}^{t}) \geq 2$.
\Obs{}~\ref{observation:levels-dont-get-apart} and
\Lem{}~\ref{lemma:levels-get-closer} guarantee that the levels of $v$ and $v'$
move inwards until they meet with
$\{ \lambda_{v}^{t'}, \lambda_{v'}^{t'} \} = \{ -1, 1 \}$
at some time
$t \leq t' \leq \varrho^{z}(t)$
for
$z = \sum_{j = 2}^{k} 2 (k - j) + 2 = k (k - 1)$.
The assertion follows as this is true for all edges
$(v, v') \in E - \ProtectedE^{t}$.
\end{proof}

\Lem{}~\ref{lemma:non-protected-node-becomes-protected} by itself does not
complete the analysis as it does not address protected nodes that become
non-protected (alas, still out-protected).
The following lemma provides a sufficient condition for the whole graph to
become protected.

\begin{lemma} \label{lemma:level-progression-implies-protected-graph}
Consider a node
$v \in V$
and assume that there exist times
$T_{1} \leq t < t'$
such that
(i)
$\lambda_{v}^{t} = 1$;
and
(ii)
$\lambda_{v}^{t'} = 2 D + 2$.
Then $G$ is protected at time $t'$.
\end{lemma}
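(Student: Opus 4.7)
The plan is to show that at time $t'$ every node of $G$ is at a positive level; combined with out-protectedness at $t' \geq T_{1} \geq T_{0}$ (Corollary~\ref{corollary:graph-becomes-out-protected}), which forces every non-protected edge to have endpoints of opposite signs, this implies that $G$ is protected at $t'$.

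I first extract a monotone skeleton of $v$'s trajectory. Since a single step modifies a node's level by at most one --- by $+1$ via a type AA transition or by $-1$ via a type FA transition --- and $\lambda_{v}^{t} = 1$, $\lambda_{v}^{t'} = 2 D + 2$, for each $\ell \in \{1, \ldots, 2 D + 1\}$ the node $v$ must perform at least one type AA transition from level $\ell$ to level $\ell + 1$ during $[t, t' - 1]$. Letting $s_{\ell}$ denote the latest such step yields $t \leq s_{1} < s_{2} < \cdots < s_{2 D + 1} < t'$, and $v$'s level remains at least $\ell + 1$ throughout $[s_{\ell} + 1, t']$ (else a later AA from $\ell$ would contradict the maximality of $s_{\ell}$). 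The AA condition $\Lambda_{v}^{s_{\ell}} \subseteq \{\ell, \ell + 1\}$ at each $s_{\ell}$ pins every neighbor of $v$ at that step to these two positive levels.

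The core step is an induction on the graph distance $d \in \{0, 1, \ldots, D\}$, establishing that every node $w$ at distance $d$ from $v$ performs, during $[t, t']$, a type AA transition from level $\ell - 1$ to level $\ell$ for each $\ell \in \{d + 2, \ldots, 2 D + 2 - d\}$. The base case $d = 0$ is provided by the steps $s_{1}, \ldots, s_{2 D + 1}$. For the inductive step, a distance-$d$ node $w$ has a neighbor $u$ at distance $d - 1$; by the inductive hypothesis applied to $u$, each type AA transition of $u$ from some $\ell - 1$ to $\ell$ at step $\sigma$ forces (via $\Lambda_{u}^{\sigma} \subseteq \{\ell - 1, \ell\}$) the node $w$ to be at a positive level in $\{\ell - 1, \ell\}$ at time $\sigma$; a complementary BFS argument using out-protectedness at time $t$ together with \Obs{}~\ref{observation:bound-level-distance-by-path-length} and $\lambda_{v}^{t} = 1$ shows that $w$ cannot sit at these higher positive levels at time $t$ already, so $w$ must climb through them via its own type AA transitions.

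From this claim, every node $w$ at distance $d \leq D$ from $v$ attains a positive level (at least $d + 2 \geq 2$) at some point of $[t, t']$. A positive level can become negative only via the wrap-around AA transition from $\Able{k}$ to $\Able{-k}$, which I would rule out by supplementing the wave-propagation induction with iterated out-protectedness along the BFS tree rooted at $v$ --- combined with \Obs{}~\ref{observation:protected-edge-remains} and the pinning of $v$'s level by $s_{1}, \ldots, s_{2 D + 1}$ and $\lambda_{v}^{t'} = 2 D + 2 = k - D$ --- to bound every distance-$d$ node's level by $k$ throughout $[t, t']$. The main obstacle is this last step: asynchronous interleavings of AF/FA cycles among far-away nodes could in principle drive their levels around the cycle through $k \to -k$, and ruling this out requires a careful interlocking of the monotone skeleton with out-protectedness and the protected-edge invariance. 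Once settled, every node of $G$ has positive level at $t'$, and Corollary~\ref{corollary:graph-becomes-out-protected} forces every edge of $G$ to be protected at $t'$.
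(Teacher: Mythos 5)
Your overall skeleton --- an outward BFS induction exploiting the fact that a type AA transition of a node pins all of its neighbors to two adjacent levels, with a level window that shrinks by one on each side per unit of graph distance --- is the same engine that drives the paper's proof. But the invariant you push through the induction is weaker than what the conclusion needs, and the step you yourself flag as ``the main obstacle'' is a genuine gap, not a technicality. Your induction delivers that each node $w$ at distance $d$ from $v$ \emph{attains} a positive level at some intermediate time in $[t, t']$; the lemma needs $w$'s level to be positive \emph{at time $t'$}. Between $w$'s last pinned moment and $t'$, nothing in your argument prevents $w$ from continuing to climb via further type AA transitions up to level $k$ and wrapping to $-k$. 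The tool you propose for closing this, ``iterated out-protectedness,'' cannot do the job: out-protectedness only constrains the \emph{signs} across non-protected edges and gives no upper bound whatsoever on how large a positive level may grow. (A secondary flaw: your appeal to \Obs{}~\ref{observation:bound-level-distance-by-path-length} at time $t$ to rule out $w$ sitting at high positive levels initially is unjustified, since that observation requires the path to be protected at time $t$, which is not known; fortunately that sub-argument is dispensable, as being pinned low at one time and high at a later time already forces the intermediate AA transitions.)

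The paper closes exactly this hole by changing the induction's conclusion from a statement about levels to a statement about \emph{protectedness}: the nested windows give, for each node $u$ at distance $d$, a time $t'_w < t'$ at which the whole $(v,u)$-path is protected, and then \Obs{}~\ref{observation:path-remains-protected} --- anchored at $v$, whose level stays in $[1, 2D+2] = [1, k-D]$ throughout a suitably chosen first-crossing sub-interval of $[t,t']$ --- keeps the path protected up to time $t'$. Protectedness of the path, via \Obs{}~\ref{observation:bound-level-distance-by-path-length}, is precisely what bounds every node's level within distance $D$ of $v$'s level and thereby forbids the wrap-around you worry about. Note also that your ``latest AA from $\ell$ to $\ell+1$'' skeleton does not confine $v$'s level to $[1, 2D+2]$ between $s_{2D+1}$ and $t'$ (only bounds it from below), so even the anchoring step needs the first-crossing choice of sub-interval rather than the latest-crossing one. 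In short: to repair your proof you would have to import the protectedness-propagation machinery, at which point you have reconstructed the paper's argument.
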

\begin{proof}
By \Lem{} \ref{lemma:all-good-remains} and
\ref{lemma:protected-and-justified-implies-good}, if all nodes are protected
at some time after time $T_{1}$, then all nodes remain (good and
hence) protected indefinitely.
Therefore, we establish the assertion by proving the following claim and
plugging
$d = D$:
Assume that there exist levels
$1 \leq \ell < \ell' \leq 2 D + 2$
with
$\ell' - \ell = 2 d + 1$
such that \\
(I)
$v$ moves in step $t$ from level
$\lambda_{v}^{t} = \ell$
to level
$\lambda_{v}^{t + 1} = \ell + 1$; \\
(II)
$v$ moves in step
$t' - 1$
from level
$\lambda_{v}^{t' - 1} = \ell' - 1$
to level
$\lambda_{v}^{t'} = \ell'$;
and \\
(III)
$\ell < \lambda_{v}^{s} < \ell'$
for all
$t < s < t'$. \\
Then all nodes at distance at most $d$ from $v$ are protected at time $t'$.

Node $v$ can move from level $\ell$ to level
$\ell + 1$
in step $t$ only if it experiences a type AA transition, which requires $v$ to
be protected at time $t$.
By \Obs{}~\ref{observation:protected-node-remains}, $v$ remains protected
throughout the time interval
$[t, t']$.

We prove that all other nodes in
$B(v, d) = \{ u \in V \mid \Distance_{G}(u, v) \leq d \}$
are protected at time $t'$ by induction on $d$.
The assertion holds trivially for
$d = 0$
as
$B(v, 0) = \{ v \}$.
Assume that the assertion holds for
$d - 1 \geq 0$
and consider a node
$u \in B(v, d)$.
Let $P$ be a shortest
$(v, u)$-path
in $G$ and let $w$ be the node succeeding $v$ along $P$.

Since $v$ experiences
$2 d + 1$
type AA transitions while moving from level $\ell$ to level $\ell'$ during the
time interval
$[t, t']$,
there must exist times
$t < t_{w} \leq t'_{w} < t'$
such that \\
(I)
$w$ moves in step $t_{w}$ from level
$\lambda_{w}^{t_{w}} = \ell + 1$
to level
$\lambda_{w}^{t_{w} + 1} = \ell + 2$; \\
(II)
$w$ moves in step
$t'_{w} - 1$
from level
$\lambda_{w}^{t'_{w} - 1} = \ell' - 2$
to level
$\lambda_{w}^{t'_{w}} = \ell' - 1$;
and \\
(III)
$\ell + 1 < \lambda_{w}^{s} < \ell' - 1$
for all
$t_{w} < s < t'_{w}$. \\
By the inductive hypothesis, all nodes in
$B(u, d - 1)$,
and in particular the nodes along the
$(w, u)$-suffix
of $P$, are protected at time $t'_{w}$, hence all nodes in $P$ are protected
at time
$t < t'_{w} < t'$.
Recalling that
$1 \leq \ell < \lambda_{v}^{s} \leq \ell' \leq 2 D + 2$
for all
$t'_{w} \leq s \leq t'$,
we employ \Obs{}~\ref{observation:path-remains-protected} to conclude
that all nodes in $P$ are protected at time $t'$, thus establishing the
assertion.
\end{proof}

\Lem{}~\ref{lemma:level-progression-implies-protected-graph} allows us to
establish \Lem{}~\ref{lemma:post-grounded-implies-protected} for which we need
the following additional definition.

\begin{definition*}%
[grounded]
A path $P$ of length at most $D$ in $G$ is said to be \emph{grounded} at time
$t$ if
(1) 
$V(P) \subseteq \ProtectedV^{t}$;
and
(2)
$P$ has an endpoint $u$ satisfying
$\lambda_{u}^{t} \in \{ -1, 1 \}$.
A node
$v \in V$
is said to be \emph{grounded} at time $t$ if it belongs to a grounded path.
\end{definition*}

\begin{lemma} \label{lemma:post-grounded-implies-protected}
If a node
$v \in V$
is grounded at time
$t \geq T_{1}$,
then
$v \in \ProtectedV^{t'}$
for all
$t' \geq t$.
\end{lemma}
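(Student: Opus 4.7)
The plan is to trace along the grounded path $P$ (of length $d \leq D$) from $v$ to its endpoint $u$ with $\lambda_u^t \in \{-1,+1\}$ and argue that the level of $u$ either remains bounded away from the extremes $\pm k$ forever (in which case $P$ stays protected via \Obs{}~\ref{observation:path-remains-protected}, dragging $v$ along), or it eventually rises just enough to trigger \Lem{}~\ref{lemma:level-progression-implies-protected-graph}, which forces the whole graph to become protected; once protected, \Lem{} \ref{lemma:protected-and-justified-implies-good} and \ref{lemma:all-good-remains} lock this in forever. The numerology lines up cleanly: with $k = 3D + 2$ and $d \leq D$, one has $k - d \geq 2D + 2$, exactly matching the threshold in \Lem{}~\ref{lemma:level-progression-implies-protected-graph}.

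I would assume WLOG $\lambda_u^t = +1$ (the case $-1$ being symmetric) and first verify a one-step invariant on $u$: from any positive level strictly less than $k$, the only reachable levels in a single step are positive, because type AA increments by one, type FA decrements by one (and requires $|\ell| \geq 2$, so it is not even available at level $1$), and type AF preserves the level. Hence $\lambda_u^s$ remains positive for every $s \geq t$ so long as it stays below $+k$, and in particular, the first time (if any) that $\lambda_u^s$ exceeds $2D + 1$, it must land exactly at $2D + 2$. This produces the case split: either \textbf{(i)} $\lambda_u^s \leq 2D + 1$ for every $s \geq t$, or \textbf{(ii)} there exists a first time $s^* > t$ with $\lambda_u^{s^*} = 2D + 2$.

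In case (i), we have $|\lambda_u^s| \leq 2D + 1 < 2D + 2 \leq k - d$ at every $s \geq t$, so \Obs{}~\ref{observation:path-remains-protected} applied to $P$ viewed as emerging from $u$ yields $V(P) \subseteq \ProtectedV^s$, hence $v \in \ProtectedV^s$, for every $s \geq t$. In case (ii), the same observation covers every $s \in [t, s^* - 1]$; at time $s^*$ I invoke \Lem{}~\ref{lemma:level-progression-implies-protected-graph} (with $\lambda_u^t = 1$ and $\lambda_u^{s^*} = 2D + 2$) to conclude that $G$ is protected at $s^*$, then \Lem{}~\ref{lemma:protected-and-justified-implies-good} upgrades protected to good at $s^*$, and \Lem{}~\ref{lemma:all-good-remains} propagates goodness (and hence protectedness) forward to every later time.

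The main delicate point is the sign-and-stepping bookkeeping for $u$: one must confirm that $u$'s level cannot flip sign or wrap around the cyclic structure before reaching $+(2D+2)$, and that the first moment it surpasses $2D + 1$ it does so at exactly $2D + 2$ (since \Lem{}~\ref{lemma:level-progression-implies-protected-graph} is stated for that specific level pair rather than a general monotone increase). Both facts follow from direct inspection of the one-step transition table, but they have to be spelled out explicitly because the remaining ingredients --- the single application of \Obs{}~\ref{observation:path-remains-protected} and the chaining of \Lem{} \ref{lemma:level-progression-implies-protected-graph}, \ref{lemma:protected-and-justified-implies-good}, and \ref{lemma:all-good-remains} --- are routine once the dynamics of $u$ are pinned down.
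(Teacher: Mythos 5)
Your proposal is correct and follows essentially the same route as the paper's proof: apply \Obs{}~\ref{observation:path-remains-protected} to the grounded path to show that either $P$ stays protected forever or the endpoint $u$ climbs from level $1$ to level $2D+2$, in which case \Lem{}~\ref{lemma:level-progression-implies-protected-graph} (together with \Lem{}~\ref{lemma:protected-and-justified-implies-good} and \Lem{}~\ref{lemma:all-good-remains}) forces the whole graph to stay protected. The paper phrases this as a proof by contradiction while you use a direct case split, and you spell out the one-step level dynamics that the paper leaves implicit, but the argument is the same.
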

\begin{proof}
The fact that node $v$ is grounded at time $t$ means in particular that
$v \in \ProtectedV^{t}$
so assume by contradiction that
$v \notin \ProtectedV^{t'}$
for a time
$t' > t$.
Consider the path $P$ of length at most $D$ due to which $v$ is grounded at
time $t$ and let $u$ be the endpoint of $P$ that satisfies
$\lambda_{u}^{t} \in \{ -1, 1 \}$.
Since
$V(P) \subseteq \ProtectedV^{t}$,
we can apply \Obs{}~\ref{observation:path-remains-protected} to $P$ and
$u$, concluding that there exists a time
$t < s \leq t'$
such that
$|\lambda_{u}^{s}| \geq k - D$.
Since $u$ moves from level
$\lambda_{u}^{t} \in \{ -1, 1 \}$
to level $\lambda_{u}^{s}$ satisfying
$|\lambda_{u}^{s}| \geq k - D = 2 D + 2$
during the time interval
$[t, s)$,
it follows that there exist times
$t \leq r < r' \leq s$
such that $u$ moves from level
$\lambda_{u}^{r} = 1$
up to level
$\lambda_{u}^{r'} = 2 D + 2$
during the time interval
$[r, r')$.
Employing \Lem{}~\ref{lemma:level-progression-implies-protected-graph}, we
conclude that $G$ is protected from time $r'$ onwards, which contradicts the
assumption that
$v \notin \ProtectedV^{t'}$
as
$t' \geq s \geq r'$.
\end{proof}

We are now ready to prove the following lemma that, when combined with
\Lem{} \ref{lemma:all-good-remains},
\ref{lemma:all-good-ensures-fast-progress}, and
\ref{lemma:protected-and-justified-implies-good}, establishes
\Thm{}~\ref{theorem:AU} as
$k = O (D)$.

\begin{lemma} \label{lemma:graph-becomes-protected}
There exists a time
$T_{1} \leq T_{2} \leq R(O (k^{3}))$
such that $G$ is protected at time
$T_{2}$.
\end{lemma}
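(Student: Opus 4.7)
\begin{AvoidOverfullParagraph}
The plan is to isolate a single node that becomes permanently protected and
use it as a ``seed'' whose level progression, via
\Lem{}~\ref{lemma:level-progression-implies-protected-graph}, forces the
entire graph to become protected. If $G$ is already protected at time
$T_1$, we take $T_2 = T_1$; otherwise, some node is non-protected at $T_1$.

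Applying \Lem{}~\ref{lemma:non-protected-node-becomes-protected} to such a
non-protected node yields a node $v \in V$ and a time
$T_1 \leq t_{*} \leq \varrho^{k (k - 1)}(T_1)$ at which
$v \in \ProtectedV^{t_{*}}$ and $\lambda_{v}^{t_{*}} \in \{ -1, 1 \}$;
without loss of generality, $\lambda_{v}^{t_{*}} = 1$. The trivial
$0$-length path consisting of just $v$ lies in $\ProtectedV^{t_{*}}$ and
has $v$ itself as an endpoint at level $\lambda_{v}^{t_{*}} \in \{ -1, 1 \}$,
so it certifies that $v$ is grounded at time $t_{*}$. Therefore, by
\Lem{}~\ref{lemma:post-grounded-implies-protected},
$v \in \ProtectedV^{t'}$ for every $t' \geq t_{*}$.

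The core technical step is to show that within $R(O(k^{2}))$ further rounds
$v$ reaches level $2 D + 2$ through a chain of type AA transitions. Since
$v$ is permanently protected from $t_{*}$ onward, all of its neighbors lie
in levels
$\{ \forward^{-1}(\lambda_{v}), \lambda_{v}, \forward(\lambda_{v}) \}$, and
each activation of $v$ either advances it via a type AA transition or is
blocked by either (i) a faulty neighbor, which dissipates within $R(O(k))$
rounds by \Lem{}~\ref{lemma:bound-FA-transition-time} (recalling that $G$
is out-protected from $T_{0} \leq T_{1}$ onward), or (ii) an able neighbor
at level $\forward^{-1}(\lambda_{v})$. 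Case (ii) propagates recursively
along a shortest path emanating from $v$---whose length is bounded by the
diameter $D$---and by induction on the depth of this blocking chain each
level advance of $v$ is realized within $R(O(k))$ rounds, totalling
$R(O(k^{2}))$ rounds to climb from level $1$ to level $2 D + 2$.

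Letting $t'$ be the first time at which $v$ reaches level $2 D + 2$, we have
$t' \leq T_{1} + R(O(k^{2})) + R(O(k^{2})) \leq R(O(k^{3}))$. Applying
\Lem{}~\ref{lemma:level-progression-implies-protected-graph} to $v$ with
the times $t_{*}$ and $t'$ then yields that $G$ is protected at $t'$, so we
take $T_{2} = t'$. The main obstacle is the level-progression analysis in
case (ii): verifying that the cascade of able blockers behind $v$ resolves
within time depending polynomially on $k$ but not on $n$ requires a careful
induction along shortest paths from $v$, bounded in length by the diameter
$D$, while simultaneously tracking the out-protected status and the
``justified'' property already established.
\end{AvoidOverfullParagraph}
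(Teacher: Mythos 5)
Your opening moves are sound: \Lem{}~\ref{lemma:non-protected-node-becomes-protected} does produce a node $v$ that is protected with $\lambda_{v} \in \{-1, 1\}$, the $0$-length path does witness that $v$ is grounded, and \Lem{}~\ref{lemma:post-grounded-implies-protected} then keeps $v$ protected forever. But the core step --- that $v$ climbs from level $1$ to level $2D+2$ within $R(O(k^{2}))$ rounds --- is not established, and the sketch you give for it does not work. First, the ``blocking chain'' of able neighbors at successively smaller levels is an arbitrary path in $G$ whose consecutive nodes are graph neighbors one level apart; it is \emph{not} confined to a shortest path, and its depth is bounded by $D$ only if the relevant shortest paths are protected (\Obs{}~\ref{observation:bound-level-distance-by-path-length}), which is precisely what cannot be assumed before the graph is protected. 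Second, the chain terminates at a node involved in a non-protected edge, and resolving that takes $R(O(k^{2}))$ rounds (\Lem{}~\ref{lemma:non-protected-node-becomes-protected}), not $R(O(k))$; worse, the resolved node lands at level $\pm 1$, potentially far behind $v$, and in the meantime previously protected edges can become non-protected at the $\{-k,k\}$ boundary --- this is exactly the difficulty the paper flags (``protected nodes that become non-protected'') and that your induction does not address. Third, your dichotomy ``each activation of $v$ either advances it or is blocked'' is incomplete: a permanently protected $v$ can still undergo a type AF transition upon sensing $\Faulty{\outwards^{-1}(\lambda_{v})}$ and subsequently drop a level via a type FA transition, so the climb is not even monotone.

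For contrast, the paper avoids the climb altogether. It fixes an arbitrary node $v$, calls it \emph{post-grounded} once it has been grounded at some time after $T_{1}$, and tracks the nearest non-protected node $x^{t}$ to $v$ in the protected subgraph $G_{\mathrm{p}}^{t} = (V, \ProtectedE^{t})$, at distance $d^{t} \leq D$. Within $R(O(k^{2}))$ rounds either the connecting path survives and becomes a grounding path for $v$ (since $x^{t}$ becomes protected at level $\pm 1$), or some node on the path becomes non-protected and $d^{t}$ strictly decreases; since $d^{t} \leq D$, after at most $D$ iterations every node is post-grounded, giving $R(O(D \cdot k^{2})) = R(O(k^{3}))$. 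If you want to rescue your seed-and-climb strategy, you would need an argument of comparable strength to control the blockers behind $v$, at which point you have essentially reproduced the paper's bookkeeping.
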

\begin{proof}
Fix a node
$v \in V$.
In the context of this proof, we say that $v$ is \emph{post-grounded} at time
$t$ if $v$ was grounded at some time
$T_{1} \leq t' \leq t$.
By \Lem{}~\ref{lemma:post-grounded-implies-protected}, it suffices to prove
that $v$ becomes post-grounded by time
$R(O (k^{3}))$.
In fact, since graph $G$ is out-protected after time
$T_{1} \geq T_{0}$
and since in an out-protected graph, a non-protected node becomes protected if
and only if it becomes grounded, it follows that $G$ becomes protected exactly
when all its nodes become post-grounded.

For
$t \geq T_{1}$,
let
$G_{\mathrm{p}}^{t} = (V, \ProtectedE^{t})$.
Assuming that $G$ is still not protected at time $t$ (i.e., that
$\ProtectedV^{t} \subsetneq V$),
let $x^{t}$ be a node
$x \in V - \ProtectedV^{t}$
that minimizes
$\Distance_{G_{\mathrm{p}}^{t}}(v, x)$,
and among those, a node that minimizes $|\lambda_{x}^{t}|$ (breaking the
remaining ties in an arbitrary consistent manner).
Notice that although we cannot bound the diameter of $G_{\mathrm{p}}^{t}$, the
choice of $x^{t}$ implies that
$d^{t} = \Distance_{G_{\mathrm{p}}^{t}}(v, x^{t}) \leq D$.
Let $P^{t}$ be a
$(v, x^{t})$-path
in $G_{\mathrm{p}}^{t}$ that realizes $d^{t}$.

The choice of $x^{t}$ and $P^{t}$ ensures that
$x^{t} \notin \ProtectedV^{t}$
and that
$V(P^{t}) - \{ x^{t} \} \subseteq \ProtectedV^{t}$.
If a node
$u \in V(P^{t}) - \{ x^{t} \}$
becomes non-protected in step $t$, then
$
d^{t + 1}
\leq
\Distance_{G_{\mathrm{p}}^{t}}(v, u)
<
\Distance_{G_{\mathrm{p}}^{t}}(v, x^{t})
=
d^{t}
$.
Moreover, if $x^{t}$ remains non-protected at time
$t + 1$,
then
$d^{t + 1} \leq d^{t}$.
The more interesting case occurs when
$V(P^{t}) - \{ x^{t} \} \subseteq \ProtectedV^{t + 1}$
and $x^{t}$ also becomes protected in step $t$ which means that all nodes in
$P^{t}$ are protected at time
$t + 1$.
Recalling that the graph is out-protected after time
$T_{1} \geq T_{0}$,
we know that 
$\lambda_{x^{t}}^{t + 1} \in \{ -1, 1 \}$,
hence $P^{t}$ is grounded at time
$t + 1$
and $v$ is post-grounded from time
$t + 1$
onwards.

To complete the proof, let
$\tau(i) = \varrho^{i}(T_{1})$
for
$i = 0, 1, \dots$
and notice that \Lem{}~\ref{lemma:non-protected-node-becomes-protected}
guarantees that if $G$ is still not protected at time $\tau(i)$, then
$x^{\tau(i)}$ becomes protected before time
$\tau(i + O (k^{2}))$.
Therefore, if node $v$ is still not post-grounded at time $\tau(i)$, then
either
(1)
$v$ is post-grounded at time
$\tau(i + O (k^{2}))$;
or
(2)
$d^{\tau(i + O (k^{2}))} < d^{\tau(i)}$.
As
$0 \leq d^{t} \leq D$
for all
$t \geq T_{1}$,
we conclude that node $v$ must become post-grounded by time
$\tau(O (D \cdot k^{2})) = \tau(O (k^{3}))$.
\end{proof}

\section{Algorithms for LE and MIS}
\label{section:algorithms-LE-MIS}
In this section, we present the synchronous algorithms promised in \Thm{}
\ref{theorem:LE} and \ref{theorem:MIS}.
Specifically, our MIS algorithm, denoted by \AlgMIS{}, is developed in
\Sect{}~\ref{section:algorithm-MIS}, and our LE algorithm, denoted by
\AlgLE{}, is developed in \Sect{}~\ref{section:algorithm-LE}.

A common key ingredient in the design of \AlgMIS{} and \AlgLE{} is a
(synchronous) module denoted by \ModuleRestart{}.
This module is invoked upon detecting an illegal configuration and, as its
name implies, resets all other modules, allowing the algorithm a ``fresh
start'' from a uniform initial configuration, that is, a configuration in
which all nodes share the same \emph{initial state} $q^{*}_{0}$, chosen by the
algorithm designer.
Module \ModuleRestart{} consists of
$O (D)$
states, among them are two designated states denoted by
\ModuleRestart{}-\texttt{entry} and \ModuleRestart-\texttt{exit}:
a node enters \ModuleRestart{} by moving from a non-\ModuleRestart{} state to
\ModuleRestart{}-\texttt{entry};
a node exits \ModuleRestart{} by moving from \ModuleRestart{}-\texttt{exit} to
the initial state $q^{*}_{0}$.
The main guarantee of \ModuleRestart{} is cast in the following theorem.

\begin{theorem} \label{theorem:module-restart}
If some node is in a \ModuleRestart{} state at time $t_{0}$, then there exists a
time
$t_{0} \leq t \leq t_{0} + O (D)$
such that all nodes exit \ModuleRestart{}, concurrently, in step $t$.
\end{theorem}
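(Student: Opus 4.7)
The plan is to implement \ModuleRestart{} as a chain of $2D+1$ states $R_0, R_1, \ldots, R_{2D}$, identifying $R_0$ with \ModuleRestart{}-\texttt{entry} and $R_{2D}$ with \ModuleRestart{}-\texttt{exit}. The synchronous transition rules would be: (a) a non-Restart node that senses any $R_i$ in its inclusive neighborhood moves to $R_0$; (b) a node in $R_i$ with $i < 2D$ advances to $R_{i+1}$ provided it senses no non-Restart state and no $R_j$ with $j < i$; (c) a node in $R_i$ that senses either a non-Restart state or some $R_j$ with $j < i - 1$ is reset to $R_0$; (d) a node in $R_{2D}$ exits to the initial state $q^{*}_0$. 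These rules are designed so that entry floods the graph while levels stay close across every edge.

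I would organize the proof into three stages. The first is a \emph{propagation} stage: since every node is activated every synchronous step and $\Diameter(G) \leq D$, rule~(a) yields a BFS-style flood, so that by time $t_0 + D$ every node resides in some Restart state. The second is a \emph{synchronization} stage, where I would track the potential $\Phi^t = \max_v \lambda^t_v - \min_v \lambda^t_v$ taken over in-Restart nodes (with a dominating sentinel value while non-Restart nodes remain). Whenever $\Phi^t \geq 2$, some edge exhibits a level gap of size at least two; rule~(c) then either forces the higher endpoint back to $R_0$, or — because rule~(b) is inhibited at the lower endpoint — prevents any further widening. A short case analysis would show that $\Phi^t$ strictly decreases within $O(1)$ steps, so the level set collapses to a contiguous window $\{m, m+1\}$ within $O(D)$ steps.

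The third stage is \emph{coherent progress}: once the level set is contained in $\{m, m+1\}$, rule~(b) fires at every node of level $m$ simultaneously, and rules~(a) and (c) preclude any drift back, so the window slides forward by exactly one level per synchronous step. After $O(D)$ such steps the entire graph sits at $R_{2D}$ in the same step, and rule~(d) then moves every node to $q^{*}_0$ in the subsequent step, realizing the concurrent exit. Combining the three stages yields the claimed bound $t \leq t_0 + O(D)$.

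The main obstacle I anticipate is the synchronization stage. Adversarial initial states could create multiple gaps whose closures open new ones, so the transition rules must be tuned so that rule~(c)'s resets do not themselves introduce fresh conflicts of wider extent. Making the potential-function argument rigorous — in particular, ruling out oscillations in which the same node repeatedly cycles through $R_0$ without net progress — is the technical heart of the proof, and will require carefully correlating the constraint satisfied at one endpoint of a gap-edge with those satisfied at the neighbors of its other endpoint.
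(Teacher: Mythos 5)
Your state chain and the overall three-stage outline (flood, synchronize, march in lockstep) match the paper's plan, but your transition rules differ from the paper's in the one place that matters, and the part you yourself flag as ``the technical heart'' is exactly the part that is missing. The paper's rule for a node all of whose sensed states are \ModuleRestart{} states is: move to $\sigma(i_{\min}+1)$, where $i_{\min}$ is the \emph{minimum} index it senses --- so a node with a high index snaps \emph{down} to one above the smallest index in its closed neighborhood, and no node ever returns to $\sigma(0)$ once every state in the graph is a \ModuleRestart{} state. With that rule the proof is a clean double-BFS argument with no potential function: within $D$ steps of $\sigma(0)$ appearing, every node's index lies in $\{0,\dots,D\}$ (the ball around the entering node bounds all indices from above); within another $D$ steps the ball of radius $d$ around the minimum-index node holds \emph{exactly} the index $j_{\min}+d$, so the whole graph is uniform by time $t_0+2D$ and then advances in lockstep to $\sigma(2D)$. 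Your rules instead stall on a gap of $1$ and hard-reset to $R_0$ on a gap of $\geq 2$, which is precisely the ``reset on fault detection'' design the paper warns about (Appendix~\ref{appendix:failed-attempt} exhibits a live-lock for a closely related rule set, albeit under an asynchronous schedule).

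The concrete gap is your synchronization stage. The claim that $\Phi^t=\max_v\lambda_v^t-\min_v\lambda_v^t$ ``strictly decreases within $O(1)$ steps'' is not established and is false as stated: every firing of rule~(c) re-creates a node at level $0$, pinning the minimum at $0$, while nodes not adjacent to any conflict keep advancing by rule~(b), so the maximum --- and hence $\Phi$ --- can \emph{increase} (e.g., on a path $u\mbox{--}v\mbox{--}w$ with levels $5,5,0$, one synchronous step yields $6,0,1$). To salvage your rule set you would need a different argument (e.g., a wavefront/ball-growth argument tracking the region already captured by the most recent reset), together with bookkeeping showing the capture completes before any node overruns the chain of $2D+1$ states; none of this is supplied. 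Two smaller issues: your exit rule~(d) must be guarded by ``senses only $R_{2D}$'' (otherwise an adversarial initial configuration with neighbors at $R_{2D}$ and $R_{2D-1}$ breaks concurrency in the very first step), and you do not handle the case where \ModuleRestart{} states are present initially but $R_0$ never arises because no non-Restart state is sensed anywhere (the paper's Observations~\ref{observation:module-restart:mix} and~\ref{observation:module-restart-progress} cover this).
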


A module that satisfies the promise of \Thm{}~\ref{theorem:module-restart} is
developed by Boulinier et al.~\cite{BoulinierPV2005synchronous}.
Due to some differences between the computational model used in the current
paper and the one used in \cite{BoulinierPV2005synchronous}, we provide a
standalone implementation (and analysis) of module \ModuleRestart{} in
\Sect{}~\ref{section:module-restart}, relying on algorithmic principles
similar to those used by Boulinier et al.

\subsection{Algorithm \AlgMIS{}}
\label{section:algorithm-MIS}
For clarity of the exposition, the MIS algorithm \AlgMIS{} is presented in a
procedural style;
converting it to a randomized state machine with
$O (D)$
states is straightforward.
The algorithm is designed assuming that the execution starts concurrently at
all nodes;
this assumption is plausible due to \Thm{}~\ref{theorem:module-restart} and
given the algorithm's fault detection guarantees (described in the sequel).
Throughout, we say that a node
$v \in V$
is \emph{decided} if $v$ resides in an output state;
otherwise, we say that $v$ is \emph{undecided}.
An edge is said to be \emph{decided} if at least one of its endpoints is
decided, and \emph{undecided} if both its endpoints are undecided.
Recall that in the context of the MIS problem, the output value of a decided
node $v$ is $1$ (resp., $0$) if $v$ is included in (resp., excluded from) the
constructed MIS;
we subsequently denote by $\InMIS$ (resp., $\OutMIS$) the set of decided nodes
with output $1$ (resp, $0$).

The algorithm consists of three modules, denoted by \ModuleRandPhase{},
\ModuleDetectMIS{}, and \ModuleCompete{};
all nodes participate in \ModuleRandPhase{}, whereas \ModuleDetectMIS{}
involves only the decided nodes and \ModuleCompete{} involves only the
undecided nodes.
Module \ModuleRandPhase{} runs indefinitely and divides the execution into
\emph{phases} so that for each phase $\pi$,
(1)
all nodes start (and finish) $\pi$ concurrently;
and
(2)
the length (in rounds) of $\pi$ is
$D + O (\log n)$
in expectation and whp.

The role of \ModuleDetectMIS{} is to detect \emph{local faults} among
the decided nodes, namely, two neighboring $\InMIS$ nodes or an $\OutMIS$ node
with no neighboring $\InMIS$ node.
The module runs indefinitely (over the decided nodes) and is designed so that
a local fault is detected in each round (independently) with a positive
constant probability, which means that no local fault remains undetected for
more than
$O (\log n)$
rounds whp.
Upon detecting a local fault, \ModuleDetectMIS{} invokes module
\ModuleRestart{} and the execution of \AlgMIS{} starts from scratch once
\ModuleRestart{} is exited.

Module \ModuleCompete{} is invoked from scratch in each phase, governing
the competition of the undecided nodes over the ``privilege'' to be included
in the constructed MIS.
Taking
$U \subseteq V$
to be the set of undecided nodes at the beginning of a phase $\pi$ and taking
$G(U)$ to denote the subgraph induced on $G$ by $U$, module \ModuleCompete{}
assigns (implicitly) a random variable
$Z(u) \in \Integers_{\geq 0}$
to each node
$u \in U$
so that the following three properties are satisfied: \\
(1)
$\Pr \left( \bigwedge_{w \in W} [Z(u) > Z(w)] \right)
\geq
\Omega \left( \frac{1}{|W| + 1} \right)$
for every node subset
$W \subseteq U - \{ u \}$; \\
(2)
if
$Z(u) > Z(w)$
for all nodes
$w \in \Neighbors_{G(U)}(u)$,
then $u$ joins $\InMIS$;
and \\
(3)
$u$ joins $\OutMIS$ during $\pi$ if and only if node $v$ joins $\InMIS$ for
some
$v \in \Neighbors_{G(U)}(u)$
whp.

It is well known (see, e.g.,
\cite{AlonBI1986fast, MetivierRSZ2011optimal, EmekW2013stone})
that properties (1)--(3) ensure that in expectation, a (positive) constant
fraction of the undecided edges become decided during $\pi$.
Using standard probabilistic arguments, we deduce that all edges become
decided within
$O (\log n)$
phases in expectation and whp, thus, by applying properties (1) and (2) to the
nodes of degree
$\Degree_{G(U)}(v) = 0$,
all nodes become decided within
$O (\log n)$
phases in expectation and whp.
\Thm{}~\ref{theorem:MIS} follows, again, by standard probabilistic
arguments.
We now turn to present the implementation of the three modules.

\subsubsection{Implementing Module \ModuleRandPhase{}.}
As discussed earlier, module \ModuleRandPhase{} divides the execution into
phases.
Each phase consists of a (random) prefix of length $X$ and a (deterministic)
suffix of length
$D + 2$,
where $X$ is a random variable that satisfies
(1)
$X \leq O (\log n)$
in expectation and whp;
and
(2)
$X \geq c_{0} \log n$
whp for a constant
$c_{0} > 0$
that can be made arbitrarily large.
The module is designed so that if all nodes start a phase $\pi$ concurrently,
then all nodes finish $\pi$ (and start the next phase) concurrently after
$D + 2 + X$
rounds (this guarantee holds with probability $1$).

To implement \ModuleRandPhase{}, each node
$v \in V$
maintains two variables, denoted by
$v.\varFlag \in \{ 0, 1 \}$
and
$v.\varStep \in \{ 0, 1, \dots, D + 2 \}$;
the former variable controls the length of the phase's random prefix, whereas
the latter is used to ensure that all nodes finish the phase concurrently,
exactly
$D + 2$
rounds after the random prefix is over (for all nodes).

To this end, when a phase begins, $v$ sets
$v.\varStep \gets 0$
and
$v.\varFlag \gets 1$.
As long as
$v.\varFlag = 1$,
node $v$ tosses a (biased) coin and resets
$v.\varFlag \gets 0$
with probability
$0 < p_{0} < 1$,
where
$p_{0} = p_{0}(c_{0})$
is a constant determined by $c_{0}$.
Once
$v.\varFlag = 0$,
the actions of $v$ become deterministic:
Let
$v.\varStep_{\min} = \min \{ u.\varStep : u \in \Neighbors^{+}(v) \}$.
If
$v.\varStep_{\min} < D + 2$,
then $v$ sets
$v.\varStep \gets \varStep_{\min} + 1$;
otherwise
($v.\varStep_{\min} = D + 2$),
the phase ends and a new phase begins.
On top of these rules, if, at any stage of the execution, $v$ senses a node
$u \in \Neighbors^{+}(v)$
for which
$|u.\varStep - v.\varStep| > 1$,
then $v$ invokes module \ModuleRestart{}.

To analyze \ModuleRandPhase{}, consider a phase $\pi$ that starts
concurrently for all nodes and let $X_{v}$ be the number of rounds in which
node
$v \in V$
kept
$v.\varFlag = 1$
since $\pi$ began, observing that $X_{v}$ is a
$\mathrm{Geom}(p_{0})$
random variable.
Since the random variables $X_{v}$,
$v \in V$,
are independent, we can apply \Obs{}~\ref{observation:max-geometric},
established by standard probabilistic arguments, to conclude that the random
variable
$X = \max_{v \in V} X_{v}$
satisfies
(1)
$X \leq O (\log n)$
in expectation and whp;
and
(2)
$X \geq c_{0} \log n$
whp, where the relation between $c_{0}$ and $p_{0}$ is derived from
\Obs{}~\ref{observation:max-geometric}.

\begin{observation} \label{observation:max-geometric}
Fix some constant
$0 < p \leq 1 / 2$
and let
$Y_{1}, \dots, Y_{n}$
be $n$ independent and identically distributed $\mathrm{Geom}(p)$ random
variables.
Then, the random variable
$Y = \max_{i \in [n]} Y_{i}$
satisfies
(1)
$Y \leq O (\log n)$
in expectation and whp;
and
(2)
$Y \geq c \log n$
whp for any constant
$c < \ln (2) / (2 p)$.
\end{observation}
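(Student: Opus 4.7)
The plan is to prove the two claims separately, each from a pointwise tail estimate for a $\mathrm{Geom}(p)$ random variable, and then combine with a union bound for part (1) and with independence for part (2).

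For part (1), I would start from $\Pr(Y_1 > t) = (1-p)^t \leq e^{-pt}$, which follows from $\ln(1-p) \leq -p$. A union bound over the $n$ copies then yields $\Pr(Y > t) \leq n e^{-pt}$. Choosing $t = (c_0+1)\ln(n)/p = O(\log n)$ makes this at most $n^{-c_0}$ for any desired constant $c_0$, establishing the whp bound. For the expectation, I would use the identity $E[Y] = \sum_{t \geq 1} \Pr(Y \geq t)$, pick a cutoff $t^{*} = \Theta(\log n)$ chosen so that $n e^{-pt^{*}} \leq 1$, bound the first $t^{*}$ terms trivially by $t^{*}$, and bound the tail $\sum_{t > t^{*}} n e^{-pt}$ by a geometric series of total mass $O(1/p) = O(1)$, giving $E[Y] = O(\log n)$.

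For part (2), I would exploit the independence of the $Y_i$'s to write
\[
\Pr(Y < t) \;=\; \bigl(1 - (1-p)^{t}\bigr)^{n} \;\leq\; \exp\bigl(-n(1-p)^{t}\bigr),
\]
using $(1-x)^{n} \leq e^{-nx}$. The crux is then a clean \emph{lower} bound on $(1-p)^{t}$ of the form $e^{-\alpha p t}$. From the Taylor expansion $-\ln(1-p) = \sum_{k \geq 1} p^{k}/k \leq p/(1-p) \leq 2p$, valid throughout $p \in (0, 1/2]$, we get $(1-p)^{t} \geq e^{-2pt}$ and therefore $\Pr(Y < t) \leq \exp\bigl(-n e^{-2pt}\bigr)$. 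Substituting $t = c \log n$ (with $\log$ interpreted as the base-$2$ logarithm) turns the inner quantity into $n e^{-2pt} = n^{1 - 2pc/\ln 2}$. The hypothesis $c < \ln(2)/(2p)$ makes this exponent strictly positive, so $\Pr(Y < c \log n) \leq \exp(-n^{\alpha})$ for some constant $\alpha > 0$; this is super-polynomially small and in particular bounded by $n^{-c_{0}}$ for every fixed $c_{0}$.

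The only nontrivial point is the specific constant $\ln(2)/(2p)$ in part (2): it is an artifact of the uniform estimate $-\ln(1-p) \leq 2p$ applied across all $p \in (0, 1/2]$ together with the base-$2$ convention for $\log$ (a sharper bound for small $p$ would give the constant $1/(2p)$ in natural-log form). Beyond this bookkeeping, the whole argument is a routine maximum-of-independent-geometrics calculation and I do not foresee any real obstacles.
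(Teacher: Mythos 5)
Your proof is correct. The paper does not actually supply a proof of this observation---it is invoked as following from ``standard probabilistic arguments''---and your argument is exactly that standard argument: union bound plus the tail-sum identity for part (1), and independence plus the uniform estimate $-\ln(1-p)\leq 2p$ on $(0,1/2]$ for part (2), with the base-$2$ reading of $\log$ accounting precisely for the constant $\ln(2)/(2p)$. The only nitpick is that $\Pr(Y<t)$ equals $\bigl(1-(1-p)^{t-1}\bigr)^{n}$ rather than $\bigl(1-(1-p)^{t}\bigr)^{n}$ for the usual ``number of trials until first success'' convention, but since $(1-p)^{t-1}\geq(1-p)^{t}$ your displayed expression is still a valid upper bound and nothing downstream is affected.
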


To complete the analysis of \ModuleRandPhase{}, we introduce the following
notation and terminology.
Given a node
$v \in V$,
let $v.\varStep^{t}$ and $v.\varStep_{\min}^{t}$ denote the values of
$v.\varStep$ and $v.\varStep_{\min}$, respectively, at time $t$.
An edge
$e = \{ u, v \} \in E$
is said to be \emph{valid} at time $t$, if
$|u.\varStep^{t} - v.\varStep^{t}| \leq 1$.
Let $v_{\max}$ be a node
$v \in V$
that realizes
$X_{v} = X$.
We can now establish the following two observations.

\begin{observation} \label{observation:all-valid-implies-bounded-differences}
If all edges are valid at time $t$, then
$|u.\varStep^{t} - v.\varStep^{t}| \leq \Distance_{G}(u, v)$
for every two nodes
$u, v \in V$.
\end{observation}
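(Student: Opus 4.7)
The plan is to prove this by straightforward induction on the graph distance $d = \Distance_{G}(u, v)$, using validity as the base increment along consecutive edges of a shortest path. The statement essentially says that the constraint ``neighbors differ by at most one'' propagates to ``nodes at distance $d$ differ by at most $d$,'' which is a standard Lipschitz-type argument on integer-valued functions.

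For the base case $d = 0$, we have $u = v$ and trivially $|u.\varStep^{t} - v.\varStep^{t}| = 0$. For the inductive step, fix two nodes $u, v \in V$ with $\Distance_{G}(u, v) = d \geq 1$, let $P$ be a shortest $(u, v)$-path in $G$, and let $w$ be the node preceding $v$ along $P$. Then $\Distance_{G}(u, w) \leq d - 1$ (it is in fact exactly $d - 1$ since $P$ is a shortest path), so the inductive hypothesis yields $|u.\varStep^{t} - w.\varStep^{t}| \leq d - 1$. Since $\{w, v\} \in E$ is valid at time $t$ by assumption, we have $|w.\varStep^{t} - v.\varStep^{t}| \leq 1$. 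Applying the triangle inequality for absolute values gives
\begin{MathMaybe}
|u.\varStep^{t} - v.\varStep^{t}|
\leq
|u.\varStep^{t} - w.\varStep^{t}| + |w.\varStep^{t} - v.\varStep^{t}|
\leq
(d - 1) + 1
=
d \, ,
\end{MathMaybe}
which establishes the claim.

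There is no real obstacle here; the only subtlety is that validity is assumed simultaneously for all edges at the single time $t$, so no ``propagation in time'' argument is required, only propagation along a static shortest path in $G$. I do not expect this step to take more than a few lines in the actual writeup.
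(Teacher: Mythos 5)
Your proof is correct and is exactly the argument the paper intends: the paper's own proof is just the one-line remark ``follows by a straightforward induction on $\Distance_{G}(u, v)$,'' and your induction along a shortest path with the triangle inequality is the standard way to fill that in.
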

\begin{proof}
Follows by a straightforward induction on
$\Distance_{G}(u, v)$.
\end{proof}

\begin{observation} \label{observation:pre-climb}
As long as
$v_{\max}.\varStep = 0$,
all edges are valid and
$v.\varStep \leq D$
for all nodes
$v \in V$.
\end{observation}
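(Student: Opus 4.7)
The plan is to prove both parts by induction on the round $t$, starting from the round $t_0$ at which the phase begins concurrently; at $t_0$ every node has $v.\varStep = 0$ and $v.\varFlag = 1$, so all edges are trivially valid and all steps equal $0$.

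A useful preliminary observation is that $v_{\max}.\varStep^{t + 1} = 0$ forces $v_{\max}.\varFlag^{t} = 1$: were $v_{\max}.\varFlag^{t} = 0$, the flag-zero update rule would yield $v_{\max}.\varStep^{t + 1} = v_{\max}.\varStep_{\min}^{t} + 1 \geq 1$, contradicting the assumption. Since a node's step is only modified in rounds where its flag is $0$, it follows that throughout the entire regime in which $v_{\max}.\varStep = 0$, node $v_{\max}$ maintains $\varFlag = 1$ and $\varStep = 0$.

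The heart of the proof is the validity claim, which I establish by induction on $t$. Fix an edge $\{u, v\} \in E$ and split on the flags at time $t$. If both flags equal $1$, the two steps remain $0$ and validity holds trivially. If exactly one flag equals $1$, say $u$'s, then $u.\varStep^{t + 1} = u.\varStep^{t} = 0$; since $u \in \Neighbors^{+}(v)$ contributes the value $0$ to $v.\varStep_{\min}^{t}$, we get $v.\varStep^{t + 1} = 1$, so $|u.\varStep^{t + 1} - v.\varStep^{t + 1}| = 1$. The subtlest case, and the main obstacle, arises when both flags equal $0$ at time $t$, where I must show $|u.\varStep_{\min}^{t} - v.\varStep_{\min}^{t}| \leq 1$. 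Invoking the inductive hypothesis that all edges are valid at time $t$, every neighbor $w$ of $u$ satisfies $w.\varStep^{t} \geq u.\varStep^{t} - 1$, and combined with $u \in \Neighbors^{+}(u)$ this forces $u.\varStep_{\min}^{t} \in \{u.\varStep^{t} - 1, u.\varStep^{t}\}$. Taking without loss of generality $u.\varStep^{t} \leq v.\varStep^{t}$, validity gives $v.\varStep^{t} \leq u.\varStep^{t} + 1$; the inclusion $u \in \Neighbors^{+}(v)$ forces $v.\varStep_{\min}^{t} \leq u.\varStep^{t}$, while $v.\varStep_{\min}^{t} \geq v.\varStep^{t} - 1 \geq u.\varStep^{t} - 1$ (again by validity). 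Hence $v.\varStep_{\min}^{t} \in \{u.\varStep^{t} - 1, u.\varStep^{t}\}$ as well, and both minima differ by at most $1$.

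With validity established at every time $t$ in the regime $v_{\max}.\varStep^{t} = 0$, the step bound follows immediately from Observation~\ref{observation:all-valid-implies-bounded-differences}: for every $v \in V$, $v.\varStep^{t} = v.\varStep^{t} - v_{\max}.\varStep^{t} \leq \Distance_{G}(v, v_{\max}) \leq D$. As a byproduct, this bound yields $v.\varStep_{\min}^{t} \leq D < D + 2$ for every $v$, so no node triggers the phase-end condition during this regime, which is what justifies the unchanged step-update dynamics assumed throughout the induction.
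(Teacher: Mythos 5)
Your proof is correct and follows essentially the same route as the paper's: induction over the rounds of the regime, with Observation~\ref{observation:all-valid-implies-bounded-differences} converting edge validity plus $v_{\max}.\varStep = 0$ into the bound $\varStep \leq D$, which in turn rules out the phase-end reset. The only difference is that you spell out the flag case analysis showing that the deterministic updates preserve edge validity, a step the paper compresses into the one-line claim that \ModuleRandPhase{} can invalidate a valid edge only when both endpoints reach $D + 2$.
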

\begin{proof}
The assertion clearly holds when the phase begins and
$v.\varStep = 0$
for all nodes
$v \in V$.
\Obs{}~\ref{observation:all-valid-implies-bounded-differences} ensures
that if all edges are valid at time $t$ and
$v_{\max}.\varStep^{t} = 0$,
then
$\max_{v \in V} v.\varStep^{t} \leq D$.
The assertion follows as \ModuleRandPhase{} can invalidate a valid edge
$\{ u, v \}$
only if
$u.\varStep = v.\varStep = D + 2 > D$.
\end{proof}

Based on \Obs{} \ref{observation:all-valid-implies-bounded-differences}
and \ref{observation:pre-climb}, we can prove the following key lemma.

\begin{lemma} \label{lemma:module-rand-phase}
Suppose that node $v_{\max}$ resets
$v_{\max}.\varFlag \gets 0$
in round $t$.
Then, the following three conditions are satisfied for every
$0 \leq d \leq D$: \\
(1)
all edges are valid at time
$t + d$; \\
(2)
$v.\varStep^{t + d} \geq d$
for every node
$v \in V$;
and \\
(3)
$v.\varStep^{t + d}
\leq
\max \{ d, \Distance_{G}(v_{\max}, v) \}$
for every node
$v \in V$.
\end{lemma}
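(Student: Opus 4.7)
The plan is to proceed by induction on $d$, establishing all three conditions simultaneously. The key structural observation that enables the induction is that because $v_{\max}$ is the last node to set $\varFlag \gets 0$, every node has $\varFlag = 0$ by time $t$, so from round $t$ onward every node performs the deterministic update $v.\varStep^{s+1} = M_{v}^{s} + 1$, where $M_{v}^{s} = \min_{u \in \Neighbors^{+}(v)} u.\varStep^{s}$, provided $M_{v}^{s} < D + 2$. The inductive bound (3) gives $u.\varStep^{t+d} \leq \max\{d, \Distance_{G}(v_{\max}, u)\} \leq D < D + 2$, so this proviso is automatic throughout the induction range and no node prematurely reaches the phase-end condition.

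For the base case $d = 0$, I would invoke Observation~\ref{observation:pre-climb}: since $v_{\max}.\varStep^{t} = 0$ (it cannot have moved while $\varFlag$ was still $1$), all edges are valid at time $t$, establishing (1); condition (2) is trivial. Condition (3) then follows from Observation~\ref{observation:all-valid-implies-bounded-differences}, combined with $v_{\max}.\varStep^{t} = 0$: $v.\varStep^{t} \leq \Distance_{G}(v_{\max}, v) = \max\{0, \Distance_{G}(v_{\max}, v)\}$ for every $v \in V$.

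For the inductive step from $d$ to $d+1$, condition (2) is immediate: $v.\varStep^{t+d+1} = M_{v}^{t+d} + 1 \geq d + 1$ by the inductive version of (2). For condition (3), I would fix a node $v$, set $m = \Distance_{G}(v_{\max}, v)$, and designate a distinguished neighbor $u \in \Neighbors^{+}(v)$ realizing $\Distance_{G}(v_{\max}, u) = \max\{0, m - 1\}$ (take $u = v$ if $v = v_{\max}$, and otherwise the predecessor of $v$ on a shortest $v_{\max}$-$v$ path). The inductive (3) then gives $u.\varStep^{t+d} \leq \max\{d, m - 1\}$, which is at most $d$ when $m \leq d + 1$ and equal to $m - 1$ when $m > d + 1$; either way, $v.\varStep^{t+d+1} \leq u.\varStep^{t+d} + 1 \leq \max\{d + 1, m\}$.

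The step I expect to be the main obstacle is condition (1) at time $t + d + 1$, which requires reasoning about minima taken over overlapping neighborhoods. For an edge $(u, v) \in E$, I would show $|M_{u}^{t+d} - M_{v}^{t+d}| \leq 1$ by letting $w^{*}$ attain $M_{u}^{t+d}$: if $w^{*} = u$, then $u \in \Neighbors^{+}(v)$, hence $M_{v}^{t+d} \leq u.\varStep^{t+d} = M_{u}^{t+d}$; if $w^{*} \in \Neighbors(u)$, then the validity of edge $(w^{*}, u)$ at time $t + d$ (the inductive (1)) yields $u.\varStep^{t+d} \leq M_{u}^{t+d} + 1$, and since $u \in \Neighbors^{+}(v)$, $M_{v}^{t+d} \leq M_{u}^{t+d} + 1$. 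Swapping the roles of $u$ and $v$ gives the symmetric bound, so $|u.\varStep^{t+d+1} - v.\varStep^{t+d+1}| = |M_{u}^{t+d} - M_{v}^{t+d}| \leq 1$, completing the induction.
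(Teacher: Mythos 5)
Your proposal is correct and follows the same overall strategy as the paper: a simultaneous induction on $d$ with the base case supplied by Observation~\ref{observation:pre-climb} and condition (2) handled identically. The two places where you diverge are both improvements in exposition rather than in substance. For condition (1), the paper asserts tersely that validity is preserved because all step values stay at most $D$; your explicit argument that $|M_{u}^{t+d} - M_{v}^{t+d}| \leq 1$ for adjacent $u, v$ (via the node $w^{*}$ attaining the minimum and the validity of the edge $(w^{*}, u)$) is exactly the missing detail and is worth spelling out. For condition (3), the paper splits into the cases $\Distance_{G}(v_{\max}, v) \leq d - 1$, $= d$, and $> d$, and handles the last case by a secondary induction on the distance $\delta$; your single-shot argument --- pick a distinguished $u \in \Neighbors^{+}(v)$ at distance $\max\{0, m - 1\}$, apply the inductive (3) to $u$, and use $v.\varStep^{t+d+1} \leq u.\varStep^{t+d} + 1$ together with $\max\{d, m-1\} + 1 = \max\{d+1, m\}$ --- collapses all three cases and eliminates the secondary induction entirely. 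The only cosmetic imprecision is the claim that every node has $\varFlag = 0$ ``by time $t$'': $v_{\max}$ resets its flag only during round $t$ itself; what matters, and what your argument actually uses, is that all nodes perform the deterministic step update from round $t$ onward, which is the same convention the paper adopts.
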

\begin{proof}
By induction on
$d = 0, 1, \dots, D$.
The base case holds by \Obs{}~\ref{observation:pre-climb} as
$v_{\max}.\varStep^{t} = 0$,
so assume that the assertion holds for
$d - 1$
and consider the situation at time
$t + d$.
The inductive hypothesis ensures that all edges are valid at time
$t - d - 1$
and that
$\max_{v \in V} v.\varStep^{t - d - 1} \leq D$,
hence all edges remain valid at time
$t + d$,
establishing condition (1).

To show that condition (2) holds, consider some node
$v \in V$.
The inductive hypothesis ensures that
$d - 1
\leq
v.\varStep_{\min}^{t + d - 1}
\leq
D$,
hence
$v.\varStep^{t + d} = v.\varStep_{\min}^{t + d - 1} + 1 \geq d$,
establishing condition (2).

For condition (3), consider some node
$v \in V$
and assume first that
$\Distance_{G}(v_{\max}, v) \leq d - 1$.
The inductive hypothesis ensures that
$v.\varStep^{t + d - 1} = d - 1$,
hence
$v.\varStep_{\min}^{t + d - 1} = d - 1$
implying that $v.\varStep$ is incremented in round
$t + d - 1$
from
$v.\varStep^{t + d - 1} = d - 1$
to
$v.\varStep^{t + d} = d = \max \{ d, \Distance_{G}(v_{\max}, v) \}$.
Now, consider the case that
$\Distance_{G}(v_{\max}, v) = d$
and let $u$ be the node that precedes $v$ along a shortest
$(v_{\max}, v)$-path in $G$.
Since
$\Distance_{G}(v_{\max}, u) = d - 1$,
we know that $u.\varStep$ is incremented in round
$t + d - 1$
from
$u.\varStep^{t + d - 1} = d - 1$
to
$u.\varStep^{t + d} = d$.
This implies that
$v.\varStep_{\min}^{t + d - 1} \leq d - 1$,
thus
$v.\varStep^{t + d} \leq d = \max \{ d, \Distance_{G}(v_{\max}, v) \}$.

\sloppy
We can now prove by a secondary induction on
$\delta = d, d + 1, \dots, D$
that
$v.\varStep^{t + d}
\leq
\Distance_{G}(v_{\max}, v)
=
\max \{ d, \Distance_{G}(v_{\max}, v) \}$
for every node
$v \in V$
with
$\Distance_{G}(v_{\max}, v) = \delta$,
thus establishing condition (3).
The base case
($\delta = d$)
of the secondary induction has already bean established, so assume that it
holds for $\delta$ and consider a node
$v \in V$
with
$\Distance_{G}(v_{\max}, v) = \delta + 1$.
Let $u$ be the node that precedes $v$ along a shortest
$(v_{\max}, v)$-path in $G$.
Since
$\Distance_{G}(v_{\max}, u) = \delta$,
we can apply the secondary inductive hypothesis, concluding that
$u.\varStep^{t + d} \leq \Distance_{G}(v_{\max}, u)$.
As edge
$\{ u, v \}$
is valid at time
$t + d$,
we conclude by
\Obs{}~\ref{observation:all-valid-implies-bounded-differences}
that
$v.\varStep^{t + d}
\leq
u.\varStep^{t + d} + 1
\leq
\Distance_{G}(v_{\max}, u) + 1
=
\Distance_{G}(v_{\max}, v)$,
establishing the step of the secondary induction.
\end{proof}
\par\fussy

By plugging
$d = D$
into \Lem{}~\ref{lemma:module-rand-phase}, we obtain the following corollary.

\begin{corollary} \label{corollary:module-rand-phase-endgame}
Suppose that node $v_{\max}$ resets
$v_{\max}.\varFlag \gets 0$
in round $t$.
Then, all nodes
$v \in V$ \\
(1)
set
$v.\varStep \gets D + 1$
concurrently in round
$t + D$; \\
(2)
set
$v.\varStep \gets D + 2$
concurrently in round
$t + D + 1$;
and \\
(3)
start the next phase concurrently in round
$t + D + 2$.
\end{corollary}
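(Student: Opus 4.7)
The plan is to derive the corollary as a direct consequence of Lemma~\ref{lemma:module-rand-phase} together with the diameter bound and the deterministic portion of \ModuleRandPhase{}'s transition rule.

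First, I would invoke Lemma~\ref{lemma:module-rand-phase} with $d = D$. Condition~(2) gives $v.\varStep^{t+D} \geq D$ for every node $v \in V$, while condition~(3) gives $v.\varStep^{t+D} \leq \max\{D, \Distance_{G}(v_{\max}, v)\}$. Since the graph has diameter at most $D$, we have $\Distance_{G}(v_{\max}, v) \leq D$, so the upper bound collapses to $D$. Combining the two bounds yields $v.\varStep^{t+D} = D$ uniformly over $V$.

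Next, I would verify that the deterministic branch of \ModuleRandPhase{} governs all subsequent transitions. By the choice of $v_{\max}$ as the node realizing $X_{v_{\max}} = X = \max_v X_v$, every other node has already set its flag to $0$ by time $t$, and $v_{\max}.\varFlag = 0$ from time $t+1$ onward. Hence, for every node $v$ and every round $r \geq t+1$, the only applicable rule is the one that sets $v.\varStep \gets v.\varStep_{\min} + 1$ whenever $v.\varStep_{\min} < D+2$, and ends the phase otherwise.

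With the uniform value $v.\varStep^{t+D} = D$ in hand, I would then chain two deterministic rounds: at time $t+D$ every node $v$ sees $v.\varStep_{\min}^{t+D} = D < D+2$, so each sets $v.\varStep \gets D+1$ concurrently during round $t+D$, establishing part~(1); this yields $v.\varStep^{t+D+1} = D+1$ uniformly, so $v.\varStep_{\min}^{t+D+1} = D+1 < D+2$ and every node sets $v.\varStep \gets D+2$ concurrently during round $t+D+1$, establishing part~(2). Finally, at time $t+D+2$ every node has $v.\varStep_{\min}^{t+D+2} = D+2$, triggering the phase-end rule simultaneously at all nodes, giving~(3).

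No step in the above requires new ideas beyond Lemma~\ref{lemma:module-rand-phase}, so I do not anticipate a significant obstacle; the only subtlety worth spelling out explicitly is the use of $\Diameter(G) \leq D$ to collapse the max in condition~(3) and the observation that $v_{\max}$ being the last node to reset its flag guarantees all nodes are in the deterministic regime throughout the time interval $[t+1, t+D+2]$.
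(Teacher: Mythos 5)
Your proof is correct and takes essentially the same route as the paper, which simply derives the corollary by plugging $d = D$ into \Lem{}~\ref{lemma:module-rand-phase}; you merely spell out the routine details (collapsing the max via $\Distance_{G}(v_{\max}, v) \leq D$ and chaining the two final deterministic rounds) that the paper leaves implicit.
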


\subsubsection{Implementing Module \ModuleCompete{}.}
Consider the execution of module \ModuleCompete{} in a phase $\pi$ and let
$U \subseteq V$
be the set of nodes that are still undecided at the beginning of $\pi$.
The implementation of \ModuleCompete{} is based on a binary variable,
denoted by
$v.\varCandidate \in \{ 0, 1 \}$,
that each node
$v \in U$
maintains, indicating that $v$ is still a candidate to join $\InMIS$ during
$\pi$.
When $\pi$ begins, $v$ sets
$v.\varCandidate \gets 1$;
then, $v$ proceeds by participating in a sequence of random \emph{trials}
that continues as long as
$v.\varCandidate = 1$
and
$v.\varStep \leq D$
(recall that $v.\varStep$ is the variable that controls the deterministic
suffix of module \ModuleRandPhase{}).
Each trial consists of two rounds:
in the first round, $v$ tosses a fair coin, denoted by
$C_{v} \in_{r} \{ 0, 1 \}$;
in the second round, $v$ computes the indicator
$I_{C} = \bigvee_{u \in \Neighbors^{+}_{G(U)}(v) \, : \, u.\varCandidate = 1} C_{u}$.
If
$C_{v} = 0$
and
$I_{C} = 1$,
then $v$ resets
$v.\varCandidate \gets 0$;
otherwise, $v.\varCandidate$ remains $1$.

If $v.\varCandidate$ is still $1$ when $v.\varStep$ is incremented to
$v.\varStep \gets D + 1$,
then $v$ joins $\InMIS$.
This is sensed in the subsequent round by $v$'s undecided neighbors that join
$\OutMIS$ in response.
Notice that by Corollary~\ref{corollary:module-rand-phase-endgame}, all nodes
increment the $\varStep$ variables concurrently to
$D + 1$
and then to
$D + 2$,
hence nodes may join $\InMIS$ and $\OutMIS$ only during the penultimate and
ultimate rounds, respectively, of phase $\pi$.

We now turn to analyze \ModuleCompete{} during phase $\pi$.
Assume for the sake of the analysis that a node
$v \in U$
keeps on participating in the trials in a ``vacuous'' manner, tossing the
$C_{v}$ coins in vain, even after
$v.\varCandidate \gets 0$,
until
$v.\varStep \gets D + 1$;
this has no influence on the nodes that truly participate in the trials as
the trials' outcome is not influenced by any node
$v \in U$
with
$v.\varCandidate = 0$.

Recall that the guarantees of \ModuleRandPhase{} ensure that at least
$c_{0} \log n$
rounds have elapsed in phase $\pi$ whp before node $v_{\max}$ resets
$v_{\max}.\varFlag \gets 0$,
where $c_{0}$ is an arbitrarily large constant;
condition hereafter on this event.
Moreover, $v_{\max}$ starts to increment variable $v_{\max}.\varStep$ only
after
$v_{\max}.\varFlag \gets 0$.
Therefore, when a node
$v \in U$
sets
$v.\varStep \gets D + 1$,
we know that at least
$c_{0} \log n$
rounds have already elapsed in phase $\pi$ which means that the undecided
nodes participate in at least
$\tau = \lfloor c_{0} / 2 \rfloor \log n$
trials during $\pi$.

For a node
$v \in U$,
let $C_{v}^{i}$ denote the value of the coin $C_{v}$ tossed by $v$ in trial
$i = 1, \dots, \tau$.
Let $v.\varCandidate^{i}$ denote the value of the variable $v.\varCandidate$ at
the beginning of trial
$i = 1, \dots, \tau$
and based on that, define the random variable
$Z(v)
=
\sum_{i = 1}^{\tau} 2^{\tau - i} \cdot C_{v}^{i}$.
Module \ModuleCompete{} is designed so that a node
$v \in V$
with
$v.\varCandidate^{i} = 1$
resets
$v.\varCandidate \gets 0$
in trial
$1 \leq i \leq \tau$
if and only if
(I)
$C_{v}^{i} = 0$;
and
(II)
there exists a node
$u \in \Neighbors_{G(U)}(v)$
such that
$u.\varCandidate^{i} = 1$
and
$C_{u}^{i} = 1$.
We conclude by the definition of $Z(v)$ that $v$ joins $\InMIS$ if and only if
$Z(v) \geq Z(u)$
for all nodes
$u \in \Neighbors_{G(U)}(v)$.
Moreover, a node
$u \notin \InMIS$
joins $\OutMIS$ if and only if there exists a node
$v \in \Neighbors_{G(U)}(u)$
that joins $\InMIS$ in the previous round (this holds deterministically).

To complete the analysis of \ModuleCompete{}, we fix a node
$v \in U$
and prove that
(1)
$Z(v) \neq Z(u)$
for all nodes
$u \in \Neighbors_{G(U)}(u)$
whp;
and
(2)
$\Pr \left( \bigwedge_{u \in W} [Z(v) > Z(u)] \right)
\geq
\Omega \left( \frac{1}{|W| + 1} \right)$
for every node subset
$W \subseteq U - \{ v \}$.
To this end, notice that the random variables $Z(u)$,
$u \in U$,
are independent and distributed uniformly over the (discrete) set
$\{ 0, 1, \dots, 2^{\tau} - 1 \}$.
This means that
$\Pr(Z(u) = Z(u')) = 2^{-\tau} = 1 / n^{\lfloor c_{0} / 2 \rfloor}$
for any two distinct nodes
$u, u' \in U$.
Recalling that $c_{0}$ is an arbitrarily large constant, we conclude, by the
union bound, that the random variables $Z(u)$,
$u \in U$,
are mutually distinct whp, thus establishing (1).
Conditioning on that, (2) follows as the random variables $Z(u)$,
$u \in U$,
are identically distributed.

\subsubsection{Implementing Module \ModuleDetectMIS{}.}
The implementation of module \ModuleDetectMIS{} is rather straightforward:
In every round, each $\InMIS$ node
$v \in V$
picks a temporary (not necessarily unique) identifier uniformly at random from
$[k]$ for a constant
$k \geq 2$.
An $\OutMIS$ node
$u \in V$
with no neighboring $\InMIS$ node is detected as $u$ does not sense any
temporary identifier in its (inclusive) neighborhood (this happens with
probability $1$).
An $\InMIS$ node $v$ with a neighboring $\InMIS$ node is detected when $v$
senses a temporary identifier different from its own, an event that occurs
with probability at least
$1 - 1 / k$.

\subsection{Algorithm \AlgLE{}}
\label{section:algorithm-LE}
The LE algorithm \AlgLE{} share a few design features with \AlgMIS{} that are
presented in this section independently for the sake of completeness.
For clarity of the exposition, \AlgLE{} is presented in a procedural style;
converting it to a randomized state machine with
$O (D)$
states is straightforward.
The algorithm is designed assuming that the execution starts concurrently at
all nodes;
this assumption is plausible due to \Thm{}~\ref{theorem:module-restart} and
given the algorithm's fault detection guarantees (described in the sequel).
Algorithm \AlgLE{} progresses in synchronous \emph{epochs}, where every epoch
lasts for $D$ rounds.
Each node maintains the round number within the current epoch and invokes
\ModuleRestart{} if an inconsistency with one of its neighbors regarding this
round number is detected.

The execution of Algorithm \AlgLE{} starts with a \emph{computation stage},
followed by a \emph{verification stage}.
The computation stage is guaranteed to elect exactly one leader whp;
it runs for
$O (\log n)$
epochs in expectation and whp.
The verification stage starts once the computation stage halts and continues
indefinitely thereafter.
Its role is to verify that the configuration is correct (i.e., the graph
includes exactly one leader).
During the verification stage, a faulty configuration is detected in each
epoch (independently) with a positive constant probability, in which case,
\ModuleRestart{} is invoked and the execution of \AlgLE{} starts from scratch
once \ModuleRestart{} is exited.
Recalling that the execution of \ModuleRestart{} takes
$O (D)$
rounds, one concludes by standard probabilistic arguments that \AlgLE{}
stabilizes within
$O (D \log n)$
rounds in expectation and whp, thus establishing \Thm{}~\ref{theorem:LE}.

\subsubsection{The Computation Stage.}
During the computation stage, algorithm \AlgLE{} runs two modules, denoted by
\ModuleRandCount{} and \ModuleElect{}.
Module \ModuleRandCount{} implements a ``randomized counter'' that signals the
nodes when $X$ epochs have elapsed since the beginning of the computation
stage, where $X$ is a random variable that satisfies
(1)
$X \leq O (\log n)$
in expectation and whp;
and
(2)
$X \geq c_{0} \log n$
whp for a constant
$c_{0} > 0$
that can be made arbitrarily large.
Upon receiving this signal from \ModuleElect{}, the nodes halt the computation
stage (and start the verification stage).

To implement module \ModuleRandCount{}, each node
$v \in V$
maintains a binary variable, denoted by
$v.\varFlag \in \{ 0, 1 \}$,
that is set initially to
$v.\varFlag \gets 1$.
At the beginning of each epoch, if $v.\varFlag$ is still $1$, then $v$
tosses a (biased) coin and resets
$v.\varFlag \gets 0$
with probability
$0 < p_{0} < 1$,
where
$p_{0} = p_{0}(c_{0})$
is a constant determined by $c_{0}$.
The $D$ rounds of the epoch are now employed to allow (all) the nodes to
compute the indicator
$I_{\varFlag} = \bigvee_{u \in V} u.\varFlag$.
If
$I_{\varFlag} = 0$,
then the computation stage is halted.
The correctness of \ModuleRandCount{} follows from
\Obs{}~\ref{observation:max-geometric}.

The role of module \ModuleElect{}, that runs in parallel to
\ModuleRandCount{}, is to elect exactly one leader whp.
The implementation of \ModuleElect{} is based on a binary variable, denoted by
$v.\varCandidate \in \{ 0, 1 \}$,
maintained by each node
$v \in V$,
that indicates that $v$ is still a candidate to be elected as a leader.
Initially, $v$ sets
$v.\varCandidate \gets 1$.
At the beginning of each epoch, if $v.\varCandidate$ is still $1$, then $v$
tosses a fair coin, denoted by
$C_{v} \in_{r} \{ 0, 1 \}$.
The $D$ rounds of the epoch are then employed to allow $v$ (and all other
nodes) to compute the indicator
$I_{C} = \bigvee_{u \in V \, : \, u.\varCandidate = 1} C_{u}$.
If
$C_{v} = 0$
and
$I_{C} = 1$,
then $v$ resets
$v.\varCandidate \gets 0$;
otherwise
$v.\varCandidate$ remains $1$.
If $v.\varCandidate$ is still $1$ when the computation stage comes to a halt
(recall that this event is determined by module \ModuleRandCount{}), then $v$
marks itself as a leader.

To see that module \ModuleElect{} is correct, let $v.\varCandidate^{i}$ and
$C_{v}^{i}$ denote the values of variable $v.\varCandidate$ and of coin
$C_{v}$, respectively, at the beginning of epoch $i$ for each node
$v \in V$.
Notice that if
$v.\varCandidate^{i} = 1$
and
$v.\varCandidate^{i + 1} = 0$,
then there must exist a node
$u \in V$
such that
$u.\varCandidate^{i} = 1$
and
$C_{u}^{i} = 1$,
which implies that
$u.\varCandidate^{i + 1} = 1$.
Therefore, at least one node
$v \in V$
survives as a candidate with
$v.\varCandidate = 1$
at the end of each epoch.

Recall that the computation stage, and hence also module \ModuleElect{}, lasts
for at least
$c_{0} \log n$
epochs whp, where $c_{0}$ is an arbitrarily large constant;
condition hereafter on this event.
Given two nodes
$u, v \in V$,
the probability that
$C_{u}^{i} = C_{v}^{i}$
for
$i = 1, \dots, c_{0} \log n$
is up-bounded by
$2^{-c_{0} \log n} = 1 / n^{c_{0}}$.
Observing that if
$C_{u}^{i} \neq C_{v}^{i}$,
then either
$u.\varCandidate^{i + 1} = 0$
or
$v.\varCandidate^{i + 1} = 0$,
and recalling that $c_{0}$ is an arbitrarily large constant, we conclude, by the
union bound, that no two nodes survive as candidates when \ModuleElect{} halts
whp, thus satisfying the promise of the computation stage.

\subsubsection{The Verification Stage.}
During the verification stage, algorithm \AlgLE{} runs a module denoted by
\ModuleDetectLE{}.
This module is designed to detect configurations that include zero leaders and
configurations that include at least two leaders;
the former task is performed deterministically (and thus succeeds with
probability $1$), whereas the latter relies on a (simple) probabilistic tool
and succeeds with probability at least $p$, where
$0 < p < 1$
is a constant that can be made arbitrarily large.

Module \ModuleDetectLE{} is implemented as follows.
If a node
$v \in V$
is marked as a leader, then at the beginning of each epoch, $v$ picks
a temporary (not necessarily unique) identifier $\mathtt{id}_{v}$ uniformly at
random from $[k]$, where $k$ is a positive constant integer.
The $D$ rounds of the epoch are then employed to verify that there is exactly
one temporary identifier in the graph (in the current epoch).
To this end, each node
$u \in V$
encodes, in its state, the first temporary identifier
$j \in [k]$
that $u$ encounters during the epoch (either by picking j as $u$'s own
temporary identifier or by sensing $j$ in its neighbors' states) and invokes
module \ModuleRestart{} if it encounters any temporary identifier
$j' \in [k] - \{ j \}$;
if $u$ does not encounter any temporary identifier until the end of the epoch,
then it also invokes \ModuleRestart{}.
This ensures that
(1)
if no node is marked as a leader, then all nodes invoke \ModuleRestart{}
(deterministically);
and
(2)
if two (or more) nodes are marked as leaders, then \ModuleRestart{} is invoked
by some nodes with probability at least
$1 - 1 / k$.
The promise of the verification stage follows as $k$ can be made arbitrarily
large.

\subsection{Module \ModuleRestart{}}
\label{section:module-restart}
In this section, we implement module \ModuleRestart{} and establish
\Thm{}~\ref{theorem:module-restart}.
The module consists of
$2 D + 1$
states denoted by
$\sigma(0), \sigma(1), \dots, \sigma(2 D)$,
where states $\sigma(0)$ and $\sigma(2 D)$ play the role of
\ModuleRestart{}-\texttt{entry} and \ModuleRestart{}-\texttt{exit},
respectively.
For a node
$v \in V$,
we subsequently denote the state in which $v$ resides at time $t$ by
$q^{t}(v)$ and the set of states sensed by $v$ at time $t$ by
$S^{t}(v) = \{ q^{t}(u) \mid u \in \Neighbors^{+}(v) \}$;
we also denote the set of all node states by
$Q^{t} = \{ q^{t}(v) \mid v \in V \}$.
The implementation of module \ModuleRestart{} at node $v$ obeys the following
three rules:
\begin{itemize}

\item
If
$S^{t}(v) \cap \{ \sigma(i) \mid 0 \leq i \leq 2 D \} \neq \emptyset$
and
$S^{t}(v) \nsubseteq \{ \sigma(i) \mid 0 \leq i \leq 2 D \}$,
then
$q^{t + 1}(v) \gets \sigma(0)$.

\item
If
$S^{t}(v) \subseteq \{ \sigma(i) \mid 0 \leq i \leq 2 D \}$
and
$S^{t}(v) \neq \{ \sigma(2 D) \}$,
then
$q^{t + 1}(v) \gets \sigma(i_{\min} + 1)$,
where
$i_{\min} = \min \{ i : \sigma(i) \in S^{t}(v) \}$.

\item
If
$S^{t}(v) = \{ \sigma(2 D) \}$,
then
$q^{t + 1}(v) \gets q^{*}_{0}$.

\end{itemize}

\subsubsection{Analysis.}
We now turn to establish \Thm{}~\ref{theorem:module-restart}, starting with the
following two observations.

\begin{observation} \label{observation:module-restart:mix}
If
$Q^{t} \cap \{ \sigma(i) \mid 0 \leq i \leq 2 D \} \neq \emptyset$
and
$Q^{t} \nsubseteq \{ \sigma(i) \mid 0 \leq i \leq 2 D \}$,
then there exists a node
$v \in V$
that enters \ModuleRestart{} in round $t$ so that
$q^{t + 1}(v) = \sigma(0)$.
\end{observation}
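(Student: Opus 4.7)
The plan is to exploit the connectivity of $G$ together with the first transition rule of \ModuleRestart{}. Since $Q^t$ meets $\{\sigma(i) \mid 0 \leq i \leq 2D\}$ but is not contained in it, there exist nodes $u_0, v_0 \in V$ with $q^t(u_0) \in \{\sigma(i) \mid 0 \leq i \leq 2D\}$ and $q^t(v_0) \notin \{\sigma(i) \mid 0 \leq i \leq 2D\}$. Because $G$ is connected, any path from $u_0$ to $v_0$ must contain some edge $\{u,v\} \in E$ whose endpoints lie on opposite sides of this partition; fix such an edge with $q^t(u) \in \{\sigma(i) \mid 0 \leq i \leq 2D\}$ and $q^t(v) \notin \{\sigma(i) \mid 0 \leq i \leq 2D\}$.

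Now I would examine $S^t(v)$. Since $u \in \Neighbors^{+}(v)$, the state $q^t(u)$ is a \ModuleRestart{} state appearing in $S^t(v)$, so $S^t(v) \cap \{\sigma(i) \mid 0 \leq i \leq 2D\} \neq \emptyset$. On the other hand, since $v \in \Neighbors^{+}(v)$ and $q^t(v)$ is a non-\ModuleRestart{} state, $S^t(v) \nsubseteq \{\sigma(i) \mid 0 \leq i \leq 2D\}$. These are exactly the hypotheses of the first transition rule of \ModuleRestart{}, so (using that the computational context is the synchronous schedule in which every node is activated in every round) $v$ applies this rule in round $t$ and sets $q^{t+1}(v) \gets \sigma(0)$, establishing the observation with this particular $v$.

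There is essentially no obstacle: the argument is a direct application of connectivity plus the definition of rule~1, and no delicate counting or probabilistic reasoning is required. The only subtlety worth flagging is the reliance on the synchronous schedule so that $v$ is guaranteed to be activated in round $t$; this is consistent with the model used for \AlgLE{} and \AlgMIS{} in which \ModuleRestart{} is invoked.
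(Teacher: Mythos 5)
Your proof is correct and is exactly the argument the paper intends (the observation is stated without proof as an immediate consequence of connectivity and the first transition rule). You also correctly pick the endpoint residing in a non-\ModuleRestart{} state, which is what is needed for the node to \emph{enter} \ModuleRestart{} rather than merely reset to $\sigma(0)$.
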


\begin{observation} \label{observation:module-restart-progress}
If
$Q^{t} \subseteq \{ \sigma(i) \mid 0 \leq i \leq 2 D \}$
and
$Q^{t} \neq \{ \sigma(2 D) \}$,
then
$\min \{ i : \sigma(i) \in Q^{t + 1} \}
=
\min \{ i : \sigma(i) \in Q^{t} \} + 1$.
\end{observation}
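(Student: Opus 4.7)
The plan is to let $i^{*} = \min\{i : \sigma(i) \in Q^{t}\}$ and show two matching bounds: some node ends step $t$ in state $\sigma(i^{*}+1)$, and no node ends step $t$ in state $\sigma(j)$ with $j \leq i^{*}$. Together, these give $\min\{i : \sigma(i) \in Q^{t+1}\} = i^{*}+1$. Note first that $i^{*} \leq 2D-1$, since the hypothesis $Q^{t} \neq \{\sigma(2D)\}$ rules out $i^{*} = 2D$; this will matter because we need $\sigma(i^{*}+1)$ to still be a valid \ModuleRestart{} state (indeed, $i^{*}+1 \leq 2D$).

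For the upper bound, I would pick any node $u$ with $q^{t}(u) = \sigma(i^{*})$, which exists by definition of $i^{*}$. Since $Q^{t} \subseteq \{\sigma(i) \mid 0 \leq i \leq 2D\}$, the same containment holds for $S^{t}(u)$; and because $u$ itself contributes $\sigma(i^{*})$ to $S^{t}(u)$ with $i^{*} \neq 2D$, we have $S^{t}(u) \neq \{\sigma(2D)\}$. Hence the second bullet in the description of \ModuleRestart{} applies, giving $q^{t+1}(u) = \sigma(i_{\min} + 1)$ where $i_{\min} = \min\{i : \sigma(i) \in S^{t}(u)\}$. By minimality of $i^{*}$ over the whole of $Q^{t}$, we have $i_{\min} = i^{*}$, so $q^{t+1}(u) = \sigma(i^{*}+1)$, establishing $\min\{i : \sigma(i) \in Q^{t+1}\} \leq i^{*}+1$.

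For the lower bound, I would consider an arbitrary node $v$ and check each of the three \ModuleRestart{} rules. The first rule cannot fire at $v$, because $S^{t}(v) \subseteq Q^{t} \subseteq \{\sigma(i) \mid 0 \leq i \leq 2D\}$, and it requires the containment to fail. If the third rule fires, then $q^{t+1}(v) = q^{*}_{0}$ is not of the form $\sigma(i)$ at all, so $v$ cannot lower the minimum. If the second rule fires, the new state is $\sigma(j'_{\min}+1)$ with $j'_{\min} = \min\{i : \sigma(i) \in S^{t}(v)\} \geq i^{*}$, so the new index is at least $i^{*}+1$. Combined with the previous paragraph, this yields the claimed equality.

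The only subtlety, and the step I would be most careful with, is the interaction with the third rule: some nodes may exit \ModuleRestart{} during this very step, so one could worry that $Q^{t+1}$ loses all \ModuleRestart{} states and the minimum becomes vacuous. But the second paragraph exhibits at least one node $u$ that remains inside \ModuleRestart{} and lands in $\sigma(i^{*}+1)$, so $Q^{t+1}$ still intersects $\{\sigma(i) \mid 0 \leq i \leq 2D\}$ and the minimum is well defined and equal to $i^{*}+1$.
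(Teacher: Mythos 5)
Your proof is correct and is exactly the direct case analysis the paper intends: the paper states this observation without proof, and your argument (rule 1 is disabled since $Q^{t}\subseteq\{\sigma(i)\}$, a node at $\sigma(i^{*})$ with $i^{*}<2D$ necessarily fires rule 2 and lands at $\sigma(i^{*}+1)$, and no node can land below that) fills in precisely the omitted details, including the correct handling of nodes that exit via rule 3.
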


By combining \Obs{} \ref{observation:module-restart:mix} and
\ref{observation:module-restart-progress}, we conclude that if
$Q^{t_{0}} \cap \{ \sigma(i) \mid 0 \leq i \leq 2 D \} \neq \emptyset$,
then there exists a time
$t_{0} \leq t \leq t_{0} + O (D)$
such that either
(1)
all nodes exit \ModuleRestart{}, concurrently, at time $t$;
or
(2)
$\sigma(0) \in Q^{t}$.
Therefore, to establish \Thm{}~\ref{theorem:module-restart}, it suffices to
prove that if
$\sigma(0) \in Q^{t_{0}}$,
then there exists a time
$t_{0} \leq t \leq t_{0} + O (D)$
such that
all nodes exit \ModuleRestart{}, concurrently, at time $t$.
This is done based on the following three lemmas.

\begin{lemma} \label{lemma:module-restart:growing-ball-0}
Consider a node
$v \in V$
and suppose that
$q^{t}(v) = \sigma(0)$.
Then,
$\{ q^{t + d}(u) \mid \Distance_{G}(u, v) \leq d \}
\subseteq
\{ \sigma(j) \mid 0 \leq j \leq d \}$
for every
$0 \leq d \leq D$.
\end{lemma}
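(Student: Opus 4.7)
The plan is to prove the statement by induction on $d$. The base case $d=0$ is immediate, since the only node at distance $0$ from $v$ is $v$ itself, and $q^{t}(v) = \sigma(0) \in \{\sigma(0)\}$.

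For the inductive step, I assume the claim for $d-1$ and pick an arbitrary node $u$ with $\Distance_{G}(u,v) \leq d$, splitting into two cases. If $\Distance_{G}(u,v) \leq d-1$, then the inductive hypothesis applied at time $t+d-1$ already places $u$ in a state $\sigma(i)$ with $i \leq d-1$. This state is contained in $S^{t+d-1}(u)$ and satisfies $i \leq D-1 < 2D$, so the exit rule (rule 3) cannot fire on $u$ in step $t+d-1$. Consequently, $u$'s transition is governed either by rule 1, yielding $q^{t+d}(u) = \sigma(0)$, or by rule 2, yielding $q^{t+d}(u) = \sigma(i_{\min} + 1)$ with $i_{\min} \leq i \leq d-1$; both outcomes lie in $\{\sigma(0),\ldots,\sigma(d)\}$.

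The second case is $\Distance_{G}(u,v) = d$, where the inductive hypothesis does not directly constrain $u$ at time $t+d-1$. I handle this by selecting a neighbor $w$ of $u$ that lies on a shortest $(v,u)$-path, so $\Distance_{G}(w,v) = d-1$. By induction, $q^{t+d-1}(w) = \sigma(i)$ for some $i \leq d-1$, so $\sigma(i) \in S^{t+d-1}(u)$. This guarantees $S^{t+d-1}(u) \cap \{\sigma(j) : 0 \leq j \leq 2D\} \neq \emptyset$ and $S^{t+d-1}(u) \neq \{\sigma(2D)\}$, so $u$'s transition is again covered by rule 1 or rule 2. In the first case $u$ moves to $\sigma(0)$, and in the second case $u$ moves to $\sigma(i_{\min} + 1)$ with $i_{\min} \leq i \leq d-1$, so in both subcases $q^{t+d}(u) \in \{\sigma(0),\ldots,\sigma(d)\}$.

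The main subtlety I expect is bookkeeping around the exit rule: one must verify that no node inside the expanding ball inadvertently satisfies $S^{t+d-1}(\cdot) = \{\sigma(2D)\}$ and exits \ModuleRestart{} prematurely. This is the point where the bound $d \leq D$ is used crucially, since it keeps all indices appearing in the analysis strictly below $2D$ and ensures that rule 2 (rather than rule 3) governs the relevant transitions.
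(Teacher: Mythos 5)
Your proof is correct and follows essentially the same route as the paper's: induction on $d$, using the predecessor along a shortest $(v,u)$-path to place a state $\sigma(i)$ with $i \leq d-1 < 2D$ in $S^{t+d-1}(u)$, which rules out the exit rule and forces a transition to $\sigma(0)$ or $\sigma(i_{\min}+1)$. Your explicit treatment of the nodes at distance at most $d-1$ (via their own state in $S^{t+d-1}(u)$) is a detail the paper leaves implicit, but the argument is the same.
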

\begin{proof}
By induction on
$d = 0, 1, \dots, D$.
The assertion holds trivially for
$d = 0$,
so assume that the assertion holds for
$d - 1$
and consider a node
$u \in V$
whose distance from $v$ is
$\Distance_{G}(u, v) = d$.
Let $u'$ be the node that precedes $u$ along a shortest
$(v, u)$-path in $G$.
Since
$\Distance_{G}(v, u') = d - 1$,
it follows by the inductive hypothesis that
$q^{t + d - 1}(u') \in \{ \sigma(j) \mid 0 \leq j \leq d - 1 \}$.
As
$q^{t + d - 1}(u') \in S^{t + d - 1}(u)$,
we conclude by the design of \ModuleRestart{} that
$q^{t + d}(u) \in \{ \sigma(j) \mid 0 \leq j \leq d \}$,
thus establishing the assertion.
\end{proof}

\begin{lemma} \label{lemma:module-restart:bound-indices}
Assume that
$Q^{t} \subseteq \{ \sigma(j) \mid 0 \leq j \leq D \}$
and let
$j_{\min} = \min \{ j : \sigma(j) \in Q^{t} \}$.
Then,
$Q^{t + h} \subseteq \{ \sigma(i) \mid j_{\min} + h \leq i \leq D + h \}$
for every
$0 \leq h \leq D$.
\end{lemma}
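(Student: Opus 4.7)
The natural plan is induction on $h$, with the base case $h=0$ holding by assumption. For the inductive step, suppose $Q^{t+h} \subseteq \{\sigma(i) \mid j_{\min} + h \leq i \leq D + h\}$ for some $h \leq D - 1$, and I want to derive the same inclusion for $h+1$. Consider an arbitrary node $v \in V$ and look at which of the three rules defining \ModuleRestart{} governs its transition in round $t+h$.

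First I would rule out the two ``bad'' transitions. Rule 1 (entering at $\sigma(0)$) requires $S^{t+h}(v)$ to contain a non-\ModuleRestart{} state; but the inductive hypothesis ensures $S^{t+h}(v) \subseteq Q^{t+h} \subseteq \{\sigma(i) \mid 0 \leq i \leq 2D\}$, so Rule 1 cannot fire. Rule 3 (exiting to $q^*_0$) requires $S^{t+h}(v) = \{\sigma(2D)\}$, but by the inductive hypothesis every state in $Q^{t+h}$ is $\sigma(i)$ with $i \leq D + h \leq 2D - 1$, strictly less than $2D$; hence Rule 3 also cannot fire. This is the step I expect to be the main obstacle, since it is what forces the restriction $h \leq D$ in the statement and relies crucially on the upper bound tracked by the induction.

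Once both other rules are excluded, Rule 2 necessarily applies to every node $v$, giving $q^{t+h+1}(v) = \sigma(i_{\min} + 1)$ where $i_{\min} = \min\{i : \sigma(i) \in S^{t+h}(v)\}$. By the inductive hypothesis, every $\sigma(i) \in S^{t+h}(v)$ satisfies $j_{\min} + h \leq i \leq D + h$, so
\[
j_{\min} + h \leq i_{\min} \leq D + h,
\]
which gives $j_{\min} + (h+1) \leq i_{\min} + 1 \leq D + (h+1)$. Taking the union over $v \in V$ yields $Q^{t+h+1} \subseteq \{\sigma(i) \mid j_{\min} + (h+1) \leq i \leq D + (h+1)\}$, completing the inductive step and hence the lemma.
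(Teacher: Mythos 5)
Your proof is correct and follows essentially the same route as the paper's: induction on $h$, using the inductive hypothesis to bound the sensed states and concluding that every node must transition via the second rule to $\sigma(i_{\min}+1)$. You merely make explicit the exclusion of the other two rules (via $Q^{t+h}\subseteq\{\sigma(i)\}$ and $D+h<2D$), which the paper leaves implicit under ``the design of \ModuleRestart{}.''
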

\begin{proof}
By induction on
$h = 0, 1, \dots, D$.
The assertion holds trivially for
$h = 0$,
so assume that the assertion holds for
$h - 1$
and consider a node
$v \in V$.
The inductive hypothesis guarantees that
$S^{t + h - 1}(v)
\subseteq
\{ \sigma(i) \mid j_{\min} + h - 1 \leq i \leq D + h - 1 \}$.
The assertion follows by the design of \ModuleRestart{} ensuring that
$q^{t + h}(v) = \sigma(i_{\min} + 1)$,
where
$i_{\min} = \min \{ i : \sigma(i) \in S^{t + h - 1}(v)\}$.
\end{proof}

\begin{lemma} \label{lemma:module-restart:growing-ball-exact}
Assume that
$Q^{t} \subseteq \{ \sigma(j) \mid 0 \leq j \leq D \}$.
Let
$j_{\min} = \min \{ j : \sigma(j) \in Q^{t} \}$
and let $v_{\min}$ be a node with
$q^{t}(v_{\min}) = \sigma(j_{\min})$.
Then,
$\{ q^{t + d}(v) \mid \Distance_{G}(v_{\min}, v) \leq d \}
=
\{ \sigma(j_{\min} + d) \}$
for every
$0 \leq d \leq D$.
\end{lemma}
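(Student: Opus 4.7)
The plan is to prove the lemma by induction on $d$. The base case $d = 0$ holds by definition: the only node at distance $0$ from $v_{\min}$ is $v_{\min}$ itself, and $q^{t}(v_{\min}) = \sigma(j_{\min})$. For the inductive step, I will assume the statement at $d-1$ and upgrade it to $d$ by pinning down $q^{t+d}(v)$ for every node $v$ with $\Distance_{G}(v_{\min}, v) \leq d$, invoking the transition rule of \ModuleRestart{} exactly once (in step $t+d-1$).

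To this end, I would first apply Lemma~\ref{lemma:module-restart:bound-indices} with $h = d - 1$, obtaining
\begin{MathMaybe}
Q^{t + d - 1} \subseteq \{ \sigma(i) \mid j_{\min} + d - 1 \leq i \leq D + d - 1 \}.
\end{MathMaybe}
In particular, $S^{t+d-1}(v) \subseteq \{\sigma(i) \mid 0 \leq i \leq 2D\}$, so rule~1 of \ModuleRestart{} does not fire on $v$. This provides the upper half of the sandwich.

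For the lower half, fix $v$ with $\Distance_{G}(v_{\min}, v) \leq d$. There is some $u \in \Neighbors^{+}(v)$ with $\Distance_{G}(v_{\min}, u) \leq d - 1$: take $u = v$ if $\Distance_{G}(v_{\min}, v) \leq d - 1$, and otherwise the predecessor of $v$ along a shortest $(v_{\min}, v)$-path in $G$. The inductive hypothesis then gives $q^{t+d-1}(u) = \sigma(j_{\min} + d - 1)$, so $\sigma(j_{\min} + d - 1) \in S^{t+d-1}(v)$. Combining this with the upper bound from Lemma~\ref{lemma:module-restart:bound-indices}, I conclude that $i_{\min} := \min\{ i : \sigma(i) \in S^{t+d-1}(v) \} = j_{\min} + d - 1$.

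It remains to verify that the second transition rule (and not the third) applies at $v$ in step $t + d - 1$. Since $j_{\min} \leq D$ and $d \leq D$, we have $j_{\min} + d - 1 \leq 2D - 1 < 2D$, so $\sigma(j_{\min} + d - 1) \in S^{t+d-1}(v)$ witnesses $S^{t+d-1}(v) \neq \{\sigma(2D)\}$. Therefore the second rule fires and $q^{t+d}(v) = \sigma(i_{\min} + 1) = \sigma(j_{\min} + d)$, closing the induction. The main (mild) obstacle is the boundary bookkeeping of the three transition rules of \ModuleRestart{}, i.e., ruling out the ``mix'' rule via Lemma~\ref{lemma:module-restart:bound-indices} and ruling out the ``exit'' rule via the explicit estimate $j_{\min} + d - 1 < 2D$; once these are settled, the rule that governs the update determines $q^{t+d}(v)$ deterministically from the minimum sensed index.
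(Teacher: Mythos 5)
Your proof is correct and follows essentially the same route as the paper's: induction on $d$, using Lemma~\ref{lemma:module-restart:bound-indices} for the upper bound on the sensed indices and the inductive hypothesis (via a neighbor at distance at most $d-1$) to pin down the minimum sensed index as $j_{\min}+d-1$, whence the second rule yields $\sigma(j_{\min}+d)$. Your treatment is, if anything, slightly more careful than the paper's in explicitly covering nodes at distance strictly less than $d$ and in explicitly ruling out the first and third transition rules.
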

\begin{proof}
By induction on
$d = 0, 1, \dots, D$.
The assertion holds trivially for
$d = 0$,
so assume that the assertion holds for
$d - 1$
and consider a node
$v \in V$
whose distance from $v_{\min}$ is
$\Distance_{G}(v_{\min}, v) = d$.
Let $v'$ be the node that precedes $v$ along a shortest
$(v_{\min}, v)$-path in $G$.
Since
$\Distance_{G}(v_{\min}, v') = d - 1$,
it follows by the inductive hypothesis that
$q^{t + d - 1}(v') = \sigma(j_{\min} + d - 1)$.
\Lem{}~\ref{lemma:module-restart:bound-indices} ensures that
$S^{t + d - 1}(v)
\subseteq
\{ \sigma(i) \mid j_{\min} + d - 1 \leq i \leq D + d - 1 < 2 D \}$,
hence
$\min \{ i : \sigma(i) \in S^{t + d - 1}(v) \} = j_{\min} + d - 1$
and
$q^{t + d}(v) = \sigma(j_{\min} + d)$,
thus establishing the assertion.
\end{proof}

We are now ready to complete the proof of \Thm{}~\ref{theorem:module-restart}.
Consider a node
$v \in V$
that satisfies
$q^{t_{0}}(v) = \sigma(0)$.
By employing \Lem{}~\ref{lemma:module-restart:growing-ball-0} with
$d = D$,
we deduce that
$Q^{t_{0} + D} \subseteq \{ \sigma(j) \mid 0 \leq j \leq D \}$.
Therefore, we can employ \Lem{}~\ref{lemma:module-restart:growing-ball-exact}
with
$d = D$
to conclude that there exists an index
$D \leq i \leq 2 D$
such that
$Q^{t_{0} + 2 D} = \{ \sigma(i) \}$.
From time
$t_{0} + 2 D$
onwards,
all nodes ``progress in synchrony'' until time
$t_{0} + 2 D + 2 D - i = t_{0} + 4 D - i$
at which we get
$Q^{t_{0} + 4 D - i} = \{ \sigma(2 D) \}$.
Thus, all nodes exit \ModuleRestart{}, concurrently, in round
$t_{0} + 4 D - i \leq t_{0} + 3 D$.

\section{Synchronizer}
\label{section:synchronizer}
Consider a distributed task $\mathcal{T}$, restricted to $D$-bounded diameter
graphs, and let
$\Pi
=
\left\langle
Q, Q_{\mathcal{O}}, \omega, \delta
\right\rangle$
be a synchronous self-stabilizing algorithm for $\mathcal{T}$ whose
stabilization time on $n$-node instances is bounded by
$f(n, D)$
in expectation and whp.
Our goal in this section is to lift the synchronous schedule assumption, thus
establishing Corollary~\ref{corollary:synchronizer}.
Specifically, we employ the self-stabilizing AU algorithm \AlgAU{} promised in
\Thm{}~\ref{theorem:AU}, combined with the ideas behind the
non-self-stabilizing SA transformer of \cite{EmekW2013stone} (see also
\cite{AfekEK2018synergy}), to develop a synchronizer that converts $\Pi$ into
a self-stabilizing algorithm
$\Pi^{*} =
\left\langle
Q^{*}, Q^{*}_{\mathcal{O}}, \omega^{*}, \delta^{*}
\right\rangle$
for $\mathcal{T}$ with state space
$|Q^{*}| \leq O (D \cdot |Q|^{2})$
whose stabilization time on $n$-node instances is bounded by
$f(n, D) + O (D^{3})$
in expectation and whp for any (arbitrarily asynchronous) schedule.

Let $K$ be the cyclic group corresponding to the AU clock values.
Let $T$ and $T_{K}$ be the state set and output state set, respectively, of
\AlgAU{}.
The state set $Q^{*}$ of $\Pi^{*}$ is defined to be the Cartesian product
$Q^{*} = Q \times Q \times T$.
We also define
$Q^{*}_{\mathcal{O}} = \{ Q_{\mathcal{O}} \times Q \times T_{K} \}$
and for each output $\Pi^{*}$-state
$s = (q, q', \nu) \in Q^{*}_{\mathcal{O}}$,
define
$\omega^{*}(s) = \omega(q)$.

Consider a node
$v \in V$
residing in a state
$s = (q, q', \nu) \in Q^{*}$
of $\Pi^{*}$.
The state transition function $\delta^{*}$ of $\Pi^{*}$ is designed so that
$\Pi^{*}$ simulates the operation of \AlgAU{}, encoding \AlgAU{}'s current
state in the third coordinate of $s$.
Once \AlgAU{} has stabilized, $\Pi^{*}$ uses the first two coordinates of $s$
to simulate the operation of $\Pi$ every time \AlgAU{} advances its clock
value, interpreting $q$ and $q'$ as $v$'s current and previous $\Pi$-states,
respectively.

More formally, suppose that node $v$ is activated at time $t$ and that
\AlgAU{} advances its clock value by changing its output state from
$\nu \in T_{K}$
to
$\nu' \in T_{K}$
in step $t$.
When this happens, node $v$ moves from $\Pi^{*}$-state
$s = (q, q', \nu) \in Q^{*}$
to $\Pi^{*}$-state
$s' = (p, q, \nu') \in Q^{*}$,
where the $\Pi$-state $p$ is determined according to the following mechanism:
Let
$\mathcal{S}_{v, \Pi}^{t} \in \{ 0, 1 \}^{Q}$
be the simulated $\Pi$-signal of $v$ at time $t$ defined by
setting
$\mathcal{S}_{v, \Pi}^{t}(r) = 1$,
$r \in Q$,
if and only if $v$ senses at time $t$ at least one $\Pi^{*}$-state of the form
$(r, \cdot, \nu)$
or
$(\cdot, r, \nu')$.
The $\Pi$-state $p$ is then determined by applying the state transition
function of $\Pi$ to $q$ and
$\mathcal{S}_{v, \Pi}^{t}$,
that is, $p$ is picked uniformly at random from
$\delta \left( q, \mathcal{S}_{v, \Pi}^{t} \right)$.

\section{Related Work and Discussion}
\label{section:related-work-discussion}
The algorithmic model considered in the current paper is a restricted version
of the SA model introduced by Emek and Wattenhofer \cite{EmekW2013stone} and
studied subsequently by Afek et al.\
\cite{AfekEK2018selecting, AfekEK2018synergy}
and Emek and Uitto \cite{EmekU2020dynamic}.
Specifically, the communication scheme in the latter model relies on
asynchronous message passing, thus enhancing the power of the adversarial
scheduler by allowing it to determine not only the node activation pattern,
but also the time delay of each transmitted message.
Whether our algorithmic results can be modified to work with such a (stronger)
scheduler is left as an open question.
The reader is referred to
\cite{AfekEK2018selecting, AfekEK2018synergy}
for a discussion of various other aspects of the SA model and its variants.

The communication scheme of the SA model can be viewed as an asynchronous
version of the \emph{set-broadcast (SB)} communication model of
\cite{HellaJKLLLSV2015weak}.
It is also closely related to the \emph{beeping} model
\cite{CornejoK2010deploying, FluryW2010slotted},
where in every (synchronous) round, each node either listens or beeps and a
listening node receives a binary signal indicating whether at least one of its
neighbors beeps in that round.
In particular, the communication scheme used in the current paper can be
regarded as an extension of the beeping model (with no sender collision
detection) to asynchronous executions over multiple (yet, a fixed number of)
channels.

Most of the algorithms developed in the beeping model literature consider a
fault free environment.
Two exceptions are the self-stabilizing MIS algorithms developed by Afek et
al.~\cite{AfekABCHK2011beeping} and Scott et al.~\cite{ScottJX2013feedback} that
work under the assumption that the nodes know an approximation of $n$ and that
this parameter cannot be modified by the adversary.\footnote{%
In \cite{ScottJX2013feedback}, the knowledge of $n$ is implicit and is
only required for bounding the initial values in the node's registers.}
In contrast, our algorithmic model is inherently size-uniform as the nodes
cannot even encode (any function of) $n$ in their internal memory.

A beeping algorithm that is more closely related to the computational
limitations of our model is that of Gilbert and Newport
\cite{GilbertN2015computational} for LE in complete graphs.
This algorithm is implemented by nodes with constant size internal memory,
hence it can be viewed as a SA algorithm with a single message type.
In fact, one of the techniques used in the current paper for implementing a
probabilistic counter resembles a technique used also in
\cite{GilbertN2015computational}.
Notice though that the algorithm of \cite{GilbertN2015computational} is not
only restricted to complete graphs, but also requires a synchronous schedule
and cannot cope with transient faults;
in this regard, it is less robust than our LE algorithm.

The AU task was introduced by Couvreur et
al.~\cite{CouvreurFG1992asynchronous} as a fundamental primitive for
asynchronous systems.
Shortly after, Awerbuch et al.~\cite{AwerbuchKMPV1993time} observed that this
task captures the essence of constructing a self-stabilizing synchronizer and
developed an anonymous size-uniform self-stabilizing AU algorithm that
stabilizes in
$O (D)$
time, albeit with an unbounded state space.
By incorporating a reset module into their algorithm, Awerbuch et al.\
obtained a self-stabilizing AU algorithm with a bounded state space and the
same asymptotic stabilization time, however, the reset module requires unique
node IDs and/or the knowledge of $n$ (or an approximation thereof), which
means in particular that its state space is
$\Omega (\log n)$;
it also relies on unicast communication.

Since then, the AU task has been extensively investigated in different
computational models and for a variety of graph classes
\cite{BoulinierPV2004graph, BoulinierPV2005synchronous, BoulinierPV2006time,
DevismesP2012efficiency, DevismesJ2019self}.
For general graphs, Boulinier et al.~\cite{BoulinierPV2004graph} developed a
self-stabilizing AU algorithm that can be implemented under a set-broadcast
communication model (very similar to the communication model used in the
current paper).
When applied to a graph $G$, the state space and stabilization time bounds of
their algorithm are linear in
$C_{G} + T_{G}$,
where $C_{G}$ is the minimum longest cycle length among all cycle bases of $G$
(or $2$ if $G$ is cycle free) and $T_{G}$ is the length of the longest
chordless cycle of $G$ (or $2$ if $G$ is cycle free).
While $C_{G}$ is up-bounded by
$O (D)$
for every graph $G$ (in particular, all cycles of the fundamental cycle basis
of a breadth-first search tree are of length
$O (D)$), the performance of the AU algorithm of Boulinier et al.\ cannot be
directly compared to the performance of our AU algorithm due to the dependency
of the former on $T_{G}$:
on the one hand, there are graphs of linear diameter in which
$T_{G} = O (1)$;
on the other hand, there are graphs of constant diameter in which
$T_{G} = \Omega (n)$.

\section*{Acknowledgments}
We are grateful to Shay Kutten and Yoram Moses for helpful discussions.

\clearpage
\bibliographystyle{alpha}
\bibliography{references}

\clearpage
\appendix

\begin{figure}[!t]
{\centering
\LARGE{APPENDIX}
\par}
\end{figure}

\section{A Failed Attempt}
\label{appendix:failed-attempt}
In this section, we present a failed attempt to design a self-stabilizing AU
algorithm based on the design feature of restarting the algorithm when a fault
is detected.
Specifically, the algorithm consists of two components:
the main component is responsible for the liveness condition, controlling the
execution when no faults occur;
the second component is a reset mechanism, responsible for restarting the
execution from a fault free initial configuration when a fault is detected.

Given a constant
$c > 1$,
let
$T = \{ \ell | 0 \leq \ell \leq c D \}$
be the set of turns of the main component and let
$R = \{ R_{i} | 0 \leq i \leq c D \}$
be the set of reset turns.
For a node
$v \in V$,
let $\theta_{v}^{t}$ be the turn of $v$ at time $t$ and let
$\Theta_{v}^{t} = \{ \theta_{u}^{t} \mid u \in \Neighbors^{+}(v) \}$
be the set of turns that $v$ senses at time $t$.
The protocol has three types of state transitions presented from the
perspective of a node
$v \in V$.

\paragraph{State transition of type (ST1).}
The first type of state transitions is equivalent to the type AA transitions
of \AlgAU{}.
Suppose that $v$ is activated at time $t$ and that
$\theta_{v}^{t} = \ell \in T$
and let
$\ell' = \ell + 1 \bmod c D + 1 $.
Then, $v$ performs a type (ST1) transition if
$\Lambda^{t}_{v} \subseteq \{ \ell, \ell' \}$.
This type of state transition updates the turn of $v$ from
$\theta_{v}^{t} = \ell$
to
$\theta_{v}^{t + 1} = \ell'$.

\paragraph{State transition of type (ST2).}
The second type of state transition is applied when $v$ senses a fault.
Specifically, suppose that $v$ is activated at time $t$ and that
$\theta_{v}^{t} = \ell \in T$
and let
$\ell' = \ell + 1 \bmod c D + 1$
and
$\ell'' = \ell - 1 \bmod c D + 1$.
Then,
(1)
if
$\ell \neq 0$,
then $v$ performs a type (ST2) transition if
$\Theta_{v}^{t} \nsubseteq \{ \ell, \ell', \ell'' \}$;
and
(2)
if
$\ell = 0$,
then $v$ performs a type (ST2) transition if
$\Theta_{v}^{t} \nsubseteq \{ \ell, \ell', \ell'', R_{c D} \}$.
This type of state transition updates the turn of $v$ from
$\theta_{v}^{t} = \ell$
to
$\theta_{v}^{t + 1} = R_{0}$.

\paragraph{State transition of type (ST3).}
The third type of state transitions is responsible for the the progress of the
reset mechanism.
Suppose that $v$ is activated at time $t$ and that
$\theta_{v}^{t} = R_{i}$.
Then, $v$ performs a type (ST3) transition if either
(1)
$i \neq c D$
and
$\Theta_{v}^{t} \subseteq \{ R_{j} | i \leq j \leq c D \}$;
or
(2)
$i = c D$
and
$\Theta_{v}^{t} \subseteq \{ R_{c D}, 0 \}$.
This type of state transition updates the turn of $v$ from
$\theta_{v}^{t} = R_{i}$
to
(1)
$\theta_{v}^{t + 1} = R_{i + 1}$
if
$i \neq c D$;
(2)
$\theta_{v}^{t + 1} = 0$
if
$i = c D$.

\subsection*{Counter Example}
Consider the configuration depicted in Figure~\ref{1exa:1},
where
$D = 2$
and assume that
$c = 2$
(the example can be easily adapted to other choices of the constant $c$).
Suppose that node $v_{t - 1}$ is activated in step $t$
for
$t = 1, \dots, 8$.
Notice that \\
(1)
nodes $v_{0}$ and $v_{1}$ do not change their turns; \\
(2)
node $v_{2}$ performs a type (ST2) transition;
and \\
(3)
node $v_{i}$ performs a type (ST3) transition for
$3 \leq i \leq 7$. \\
This means that at time $9$, we reach the configuration depicted in
Figure~\ref{1exa:2}.
As this configuration is equivalent to the configuration at time $0$
up to a node renaming (a rotation of Figure~\ref{1exa:2} in the
counter-clockwise direction), we conclude that the algorithm is in a
live-lock.

\clearpage

\begin{figure}[!t]
{\centering
\LARGE{FIGURES AND TABLES}
\par}
\end{figure}

\begin{table}
\caption{\label{table:AU:state-transition}%
The transition types of \AlgAU{} in step $t$.}
\centering
\begin{tabular}{c|c|c|c}
Type & Pre-transition turn & Post-transition turn & Condition \\
\hline
AA &
$\Able{\ell}$,
$1 \leq |\ell| \leq k$ &
$\Able{\forward^{+1}(\ell)}$ &
$v$ is good and
$\Lambda_{v}^{t} \subseteq \left\{ \ell, \forward^{+1}(\ell) \right\}$ \\
\hline
AF &
$\Able{\ell}$,
$2 \leq |\ell| \leq k$ &
$\Faulty{\ell}$ &
$v \notin \ProtectedV^{t}$
or
$v$ senses turn $\Faulty{\outwards^{-1}(\ell)}$ \\
\hline
FA &
$\Faulty{\ell}$,
$2 \leq |\ell| \leq k$ &
$\Able{\Outwards^{-1}(\ell)}$ &
$\Lambda_{v}^{t} \cap \Outwards^{>}(\ell) = \emptyset$ \\
\end{tabular}
\end{table}

\begin{figure}
{\centering
\includegraphics[scale=0.9]{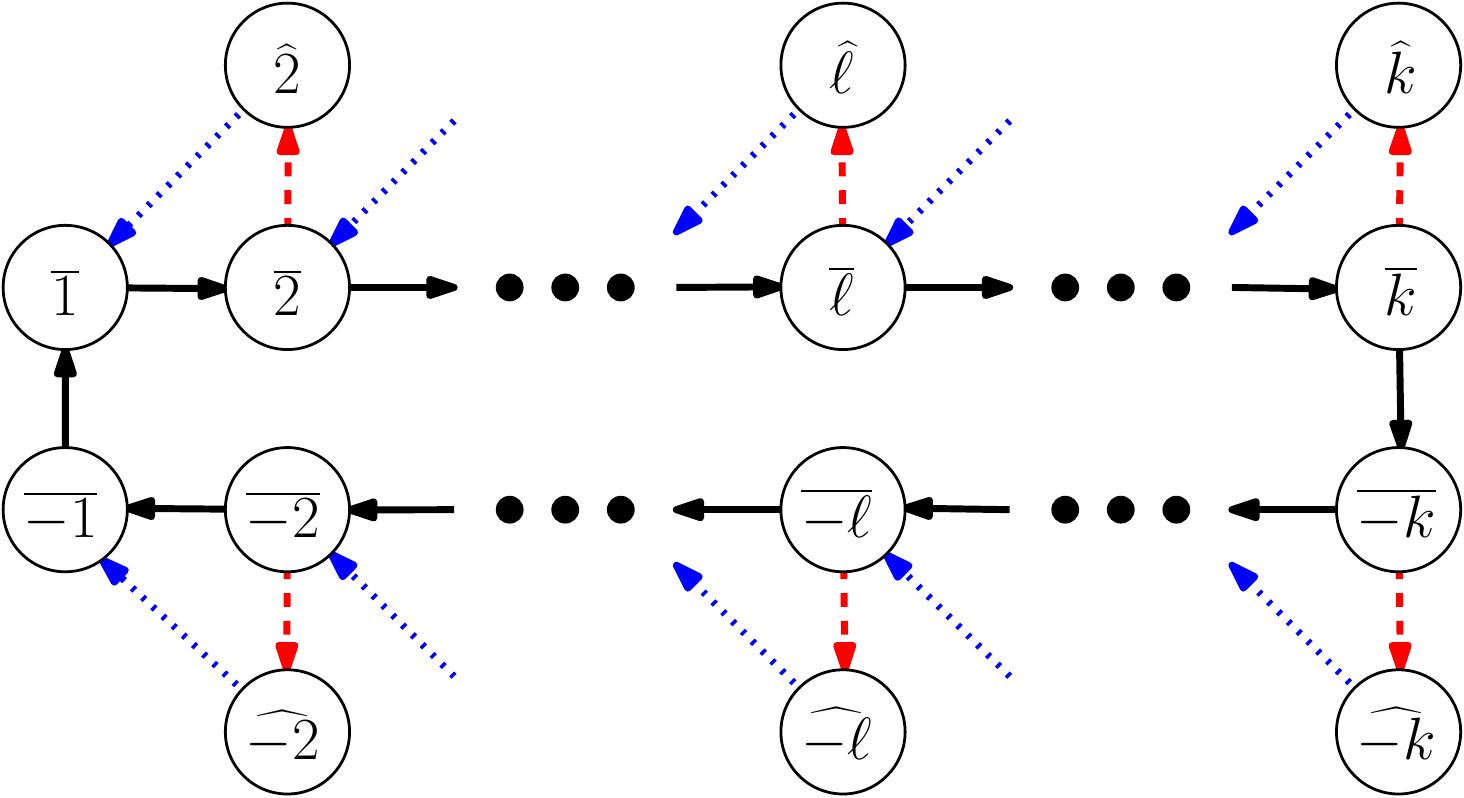}
\par}
\caption{\label{figure:AU:turn-chart}%
The turns of \AlgAU{} and their transition diagram.
The type AA transitions, type AF transitions, and type FA transitions are
depicted by the solid (black) arrows, dashed (red) arrows, and dotted (blue)
arrows, respectively.}
\end{figure}

\begin{figure}
\centering

\subfigure[]{
\begin{tikzpicture}[scale=0.7]	\label{1exa:1}

\def \n {7}
\def \radius {4cm}
\def \margin {10} 

  \node[draw, circle, , minimum size = 1cm] (First-0) at ({360/\n * 0}:\radius) {$0$} ;
  \draw[-, >=latex, thick] ({360/\n * 0+\margin}:\radius) 
    arc ({360/\n * 0+\margin}:{360/\n * (1)-\margin}:\radius);
    
   \node[draw, circle, minimum size = 1cm] (Second-0) at ({360/\n * (1)}:\radius) {$0$};
  \draw[-, >=latex, thick] ({360/\n * (1)+\margin}:\radius) 
    arc ({360/\n * (1)+\margin}:{360/\n * (2)-\margin}:\radius);
    
     \node[draw, circle, minimum size = 1cm] (R_{0}) at ({360/\n * (2)}:\radius) {$R_{0}$};
  \draw[-, >=latex, thick] ({360/\n * (2)+\margin}:\radius) 
    arc ({360/\n * (2)+\margin}:{360/\n * (3)-\margin}:\radius);
    
  \node[draw, circle, minimum size = 1cm] (R_{1}) at ({360/\n * (3)}:\radius) {$R_{1}$};
  \draw[-, >=latex, thick] ({360/\n * (3)+\margin}:\radius) 
    arc ({360/\n * (3)+\margin}:{360/\n * (4)-\margin}:\radius);

  \node[draw, circle, minimum size = 1cm] (R_{2}) at ({360/\n * (4)}:\radius) {$R_{2}$};
  \draw[-, >=latex, thick] ({360/\n * (4)+\margin}:\radius) 
    arc ({360/\n * (4)+\margin}:{360/\n * (5)-\margin}:\radius);

      \node[draw, circle, minimum size = 1cm] (R_{3}) at ({360/\n * (5)}:\radius) {$R_{3}$};
  \draw[-, >=latex, thick] ({360/\n * (5)+\margin}:\radius) 
    arc ({360/\n * (5)+\margin}:{360/\n * (6)-\margin}:\radius);

         \node[draw, circle, minimum size = 1cm] (R_{4}) at ({360/\n * (6)}:\radius) {$R_{4}$};
  \draw[-, >=latex, thick] ({360/\n * (6)+\margin}:\radius) 
    arc ({360/\n * (6)+\margin}:{360/\n * (7)-\margin}:\radius);

	\node[draw, circle, minimum size = 1cm] (mid) at (0,0) {$R_{4}$};
	\node[above, at=(mid.north)] {$v_{0}$};

	\draw[-, >=latex, thick] (mid) -- (First-0);
	\draw[-, >=latex, thick] (mid) -- (Second-0);
	\draw[-, >=latex, thick] (mid) -- (R_{0});
	\draw[-, >=latex, thick] (mid) -- (R_{1});
	\draw[-, >=latex, thick] (mid) -- (R_{2});
	\draw[-, >=latex, thick] (mid) -- (R_{3});
	\draw[-, >=latex, thick] (mid) -- (R_{4});

\node at ({360/\n * 0}:5cm) {$v_{1}$};

  \node at ({360/\n * (1)}:5cm) {$v_{2}$};

  \node  at ({360/\n * (2)}:5cm) {$v_{3}$};

  \node at ({360/\n * (3)}:5cm) {$v_{4}$};

  \node at ({360/\n * (4)}:5cm) {$v_{5}$};

  \node  at ({360/\n * (5)}:5cm) {$v_{6}$};
    
  \node  at ({360/\n * (6)}:5cm) {$v_{7}$};

\end{tikzpicture}
}
\subfigure[]{
\begin{tikzpicture}[scale=0.7]	\label{1exa:2}

\def \n {7}
\def \radius {4cm}
\def \margin {10} 

  \node[draw, circle, , minimum size = 1cm] (First-0) at ({360/\n * 0}:\radius) {$0$} ;
  \draw[-, >=latex, thick] ({360/\n * 0+\margin}:\radius) 
    arc ({360/\n * 0+\margin}:{360/\n * (1)-\margin}:\radius);
    
   \node[draw, circle, minimum size = 1cm] (Second-0) at ({360/\n * (1)}:\radius) {$R_{0}$};
  \draw[-, >=latex, thick] ({360/\n * (1)+\margin}:\radius) 
    arc ({360/\n * (1)+\margin}:{360/\n * (2)-\margin}:\radius);
    
     \node[draw, circle, minimum size = 1cm] (R_{0}) at ({360/\n * (2)}:\radius) {$R_{1}$};
  \draw[-, >=latex, thick] ({360/\n * (2)+\margin}:\radius) 
    arc ({360/\n * (2)+\margin}:{360/\n * (3)-\margin}:\radius);
    
  \node[draw, circle, minimum size = 1cm] (R_{1}) at ({360/\n * (3)}:\radius) {$R_{2}$};
  \draw[-, >=latex, thick] ({360/\n * (3)+\margin}:\radius) 
    arc ({360/\n * (3)+\margin}:{360/\n * (4)-\margin}:\radius);

  \node[draw, circle, minimum size = 1cm] (R_{2}) at ({360/\n * (4)}:\radius) {$R_{3}$};
  \draw[-, >=latex, thick] ({360/\n * (4)+\margin}:\radius) 
    arc ({360/\n * (4)+\margin}:{360/\n * (5)-\margin}:\radius);

      \node[draw, circle, minimum size = 1cm] (R_{3}) at ({360/\n * (5)}:\radius) {$R_{4}$};
  \draw[-, >=latex, thick] ({360/\n * (5)+\margin}:\radius) 
    arc ({360/\n * (5)+\margin}:{360/\n * (6)-\margin}:\radius);

         \node[draw, circle, minimum size = 1cm] (R_{4}) at ({360/\n * (6)}:\radius) {$0$};
  \draw[-, >=latex, thick] ({360/\n * (6)+\margin}:\radius) 
    arc ({360/\n * (6)+\margin}:{360/\n * (7)-\margin}:\radius);

	\node[draw, circle, minimum size = 1cm] (mid) at (0,0) {$R_{4}$};
	\node[above, at=(mid.north)] {$v_{0}$};

	\draw[-, >=latex, thick] (mid) -- (First-0);
	\draw[-, >=latex, thick] (mid) -- (Second-0);
	\draw[-, >=latex, thick] (mid) -- (R_{0});
	\draw[-, >=latex, thick] (mid) -- (R_{1});
	\draw[-, >=latex, thick] (mid) -- (R_{2});
	\draw[-, >=latex, thick] (mid) -- (R_{3});
	\draw[-, >=latex, thick] (mid) -- (R_{4});

\node at ({360/\n * 0}:5cm) {$v_{1}$};

  \node at ({360/\n * (1)}:5cm) {$v_{2}$};

  \node  at ({360/\n * (2)}:5cm) {$v_{3}$};

  \node at ({360/\n * (3)}:5cm) {$v_{4}$};

  \node at ({360/\n * (4)}:5cm) {$v_{5}$};

  \node  at ({360/\n * (5)}:5cm) {$v_{6}$};
    
  \node  at ({360/\n * (6)}:5cm) {$v_{7}$};

\end{tikzpicture}
}
\caption{A live-lock.}
\end{figure}
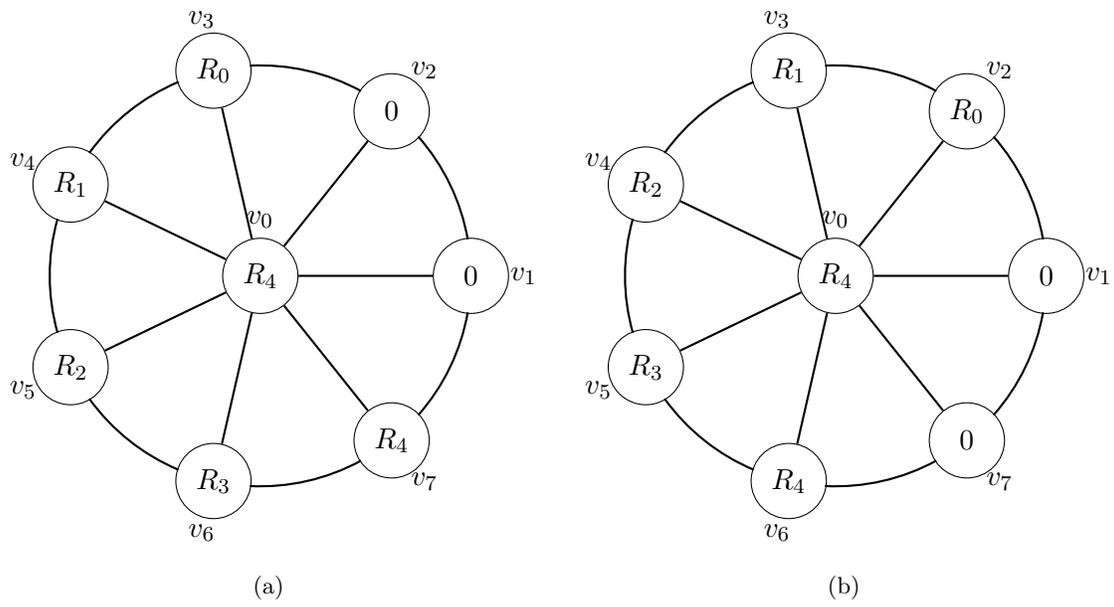

\end{document}